\newtheorem{theorem}{Theorem} [section]
\newtheorem{lemma}[theorem]{Lemma}
\newtheorem{remark}[theorem]{Remark}
\theoremstyle{definition}
\newcommand{\R}{\mathbb{R}}
\newcommand{\re}{\text{\upshape Re\,}}
\newcommand{\im}{\text{\upshape Im\,}}
\newcommand{\He}{{\rm He}}
\let\oldbibliography\thebibliography
\renewcommand{\thebibliography}[1]{\oldbibliography{#1}
\setlength{\itemsep}{-0.5pt}}
\def\XXint#1#2#3{{\setbox0=\hbox{$#1{#2#3}{\int}$}
\vcenter{\hbox{$#2#3$}}\kern-.5\wd0}}
\tikzset{->-/.style={decoration={
				markings,
				mark=at position #1 with {\arrow{latex}}},postaction={decorate}}}
	\tikzset{-<-/.style={decoration={
				markings,
				mark=at position #1 with {\arrowreversed{latex}}},postaction={decorate}}}
\tikzset{cross/.style={cross out, draw, 
         minimum size=2*(#1-\pgflinewidth), 
         inner sep=0pt, outer sep=0pt}}
\numberwithin{equation}{section}
\def\ds{\displaystyle}
\def\bigO{{\cal O}}
\begin{document}
\title{\vspace*{-1.5cm} On the characteristic polynomial of the eigenvalue moduli of random normal matrices}
\author{Sung-Soo Byun\footnote{School of Mathematics, Korea Institute for Advanced Study, 85 Hoegiro, Dongdaemun-gu, Seoul 02455, Republic of Korea. e-mail: sungsoobyun@kias.re.kr}  \mbox{ and }Christophe Charlier\footnote{Centre for Mathematical Sciences, Lund University, 22100 Lund, Sweden. e-mail:	christophe.charlier@math.lu.se
}}

\maketitle

\begin{abstract}
We study the characteristic polynomial $p_{n}(x)=\prod_{j=1}^{n}(|z_{j}|-x)$ where the $z_{j}$ are drawn from the Mittag-Leffler ensemble, i.e. a two-dimensional determinantal point process which generalizes the Ginibre point process. We obtain precise large $n$ asymptotics for the moment generating function $\mathbb{E}[e^{\frac{u}{\pi} \, \im \ln p_{n}(r)}e^{a \, \re \ln p_{n}(r)}]$, in the case where $r$ is in the bulk, $u \in \mathbb{R}$ and $a \in \mathbb{N}$. This expectation involves an $n \times n$ determinant whose weight is supported on the whole complex plane, is rotation-invariant, and has both jump- and root-type singularities along the circle centered at $0$ of radius $r$. This ``circular" root-type singularity differs from earlier works on Fisher-Hartwig singularities, and surprisingly yields a new kind of ingredient in the asymptotics, the so-called \textit{associated Hermite polynomials}. 
\end{abstract}
\noindent
{\small{\sc AMS Subject Classification (2020)}: 41A60, 60B20, 60G55.}

\noindent
{\small{\sc Keywords}: Jump- and root-type singularities along circles, Moment generating functions, Random matrix theory, Asymptotic analysis.}

\section{Introduction and statement of results}\label{section: introduction}
The Mittag-Leffler ensemble with parameters $b>0$ and $\alpha>-1$ is the joint probability distribution 
\begin{align}\label{def of point process}
\frac{1}{n!Z_{n}} \prod_{1 \leq j < k \leq n} |z_{k} -z_{j}|^{2} \prod_{j=1}^{n}|z_{j}|^{2\alpha}e^{-n |z_{j}|^{2b}}d^{2}z_{j}, \qquad z_{1},\ldots,z_{n} \in \mathbb{C},
\end{align}
where $Z_{n}$ is the normalization constant. This determinantal point process can be realized as the eigenvalues of a random normal matrix $M$ with distribution proportional to $|\det(M)|^{2\alpha}e^{-n \, \mathrm{tr}((MM^{*})^{b})}dM$ \cite{Mehta}. The special case $(b,\alpha)=(1,0)$ also corresponds to the eigenvalue distribution of a Ginibre matrix \cite{Ginibre}, i.e. an $n \times n$ matrix with independent complex Gaussian entries with mean $0$ and variance~$\frac{1}{n}$.

\medskip Consider the characteristic polynomial $p_{n}(x)=\prod_{j=1}^{n}(|z_{j}|-x)$ of the process of the moduli $\{|z_{j}|\}_{j=1}^{n}$. The main result of this work is a precise asymptotic formula as $n \to + \infty$ for
\begin{align}\label{moment generating function}
\mathbb{E}\Big[ e^{\frac{u}{\pi} \, \im \ln p_{n}(r)}e^{a \, \re \ln p_{n}(r)} \Big],
\end{align}
where $u \in \mathbb{R}$, $a \in \mathbb{N}:=\{0,1,\ldots\}$, $r \in (0,b^{-\frac{1}{2b}})$ are fixed, and $\ln p_{n}(r) := \ln |p_{n}(r)| + \pi i \, \#\{z_{j}:~|z_{j}|<~r\}$. The macroscopic large $n$ behavior of the $|z_{j}|$ is described by the probability measure $d\mu(y)=2b^{2}y^{2b-1}dy$, whose support is $[0,b^{-\smash{\frac{1}{2b}}}]$ \cite{SaTo}; thus the condition that $r \in (0,b^{-\frac{1}{2b}})$ is fixed means that we focus on ``the bulk regime". By definition, the expectation \eqref{moment generating function} is equal to $D_{n}/Z_{n}$, where
\begin{align}\label{def of Dn intro}
D_{n} & :=\frac{1}{n!} \int_{\mathbb{C}}\ldots \int_{\mathbb{C}} \prod_{1 \leq j < k \leq n} |z_{k} -z_{j}|^{2} \prod_{j=1}^{n}w(z_{j})d^{2}z_{j} = \det \left( \int_{\mathbb{C}} z^{j} \overline{z}^{k} w(z) d^{2}z \right)_{j,k=0}^{n-1},
\end{align}
and the weight $w$ is given by
\begin{align}\label{def of w and omega}
w(z):=|z|^{2\alpha} e^{-n |z|^{2b}} \omega(|z|), \qquad \omega(x) := |x-r|^{a} \begin{cases}
e^{u}, & \mbox{if } x < r, \\
1, & \mbox{if } x \geq r.
\end{cases}
\end{align}
Hence our results can also be seen as large $n$ asymptotics for $n \times n$ determinants whose weight is supported on $\mathbb{C}$, rotation-invariant, has both jump- and root-type singularities along the circle centered at $0$ of radius $r$ (which we will call ``circular" jump- and root-type singularities), and a ``pointwise" root-type singularity at $0$.

\medskip Over the past 50 years or so a lot of works have been done on structured determinants with singularities, and we briefly pause here to review the literature. In their pioneering work \cite{FisherHartwig}, Fisher and Hartwig made a conjecture for the asymptotics of large Toeplitz determinants when the weight is supported on the unit circle and has root- and jump-type singularities---such singularities are now called Fisher-Hartwig singularities. Many authors have contributed in proving this conjecture for certain parameter ranges, among which Lenard \cite{Lenard}, Widom \cite{Widom2}, Basor \cite{Basor}, B\"ottcher and Silbermann \cite{BS1985}, and Ehrhardt \cite{Ehr2001}. A counterexample to the Fisher-Hartwig conjecture was found by Basor and Tracy in \cite{BasTra}, and the corrected conjecture was solved for general values of the parameters by Deift, Its and Krasovsky \cite{DIK}. The study of these singular determinants was motivated mainly from questions arising in the Ising model and impenetrable bosons, see \cite{BasMor, DIKreview} for more historical background. In recent years, these determinants have also attracted considerable attention in the random matrix community. One reason for that is the well-known work \cite{KeaSna} of Keating and Snaith, where numerical evidences were found of links between the characteristic polynomials of unitary and Hermitian random matrices and the zeros of the Riemann zeta function on the critical line. Expectations of powers of the absolute value of the characteristic polynomial of the Gaussian Unitary Ensemble, which are Hankel determinants with a Gaussian weight on $\mathbb{R}$ and root-type singularities, were investigated in \cite{BreHik} and their asymptotics were obtained by Garoni \cite{Garoni} for integer values of the parameters and by Krasovsky \cite{Krasovsky} for the general case. This result was then generalized by Berestycki, Webb and Wong \cite{BerWebbWong} for one-cut regular ensembles. In a different direction, Its and Krasovsky in \cite{ItsKrasovsky} obtained asymptotics of Hankel determinants with a jump-type singularity and a Gaussian weight. Such determinants provide information about the imaginary part of the log-characteristic polynomial of the Gaussian Unitary Ensemble. The results \cite{ItsKrasovsky, BerWebbWong} have been generalized in \cite{Charlier} for general Fisher-Hartwig singularities and one-cut regular ensembles. The case of one-cut regular ensembles with hard edges was then treated in \cite{CharlierGharakhloo}, and the multi-cut case in \cite{CFWW2021}. Strong results on Toeplitz determinants with merging Fisher-Hartwig singularities are also available in the literature \cite{CK2015, Fahs}; these results have been useful to prove a conjecture of \cite{FyoKea} on ``the moments of the moments" of the characteristic polynomial of random unitary matrices, and \cite{CK2015} has also been used by Webb in \cite{Webb} to establish a connection between random matrix theory and Gaussian multiplicative chaos. There exists also a vast literature on other structured determinants with Fisher-Hartwig singularities, see e.g. \cite{DXZ2022, DXZ2022 bis} for Fredholm determinants, \cite{BasorEhrhardt1, ForKea, CGMY2020} for Toeplitz+Hankel determinants, and \cite{CharlierMB} for a biorthogonal generalization of Hankel determinants.

\medskip The literature on determinants associated with a singular weight supported on $\mathbb{C}$ is more limited. For $a=0$, \eqref{moment generating function} is the moment generating function of the disk counting function
\begin{align}\label{disk counting fct}
\mathbb{E}\Big[ e^{\frac{u}{\pi} \, \im \ln p_{n}(r)} \Big] = \mathbb{E}\Big[ e^{u \, \#\{z_{j}: |z_{j}|< r\}} \Big],
\end{align}
and in this case $w$ is discontinuous along a circle but has no ``circular" root-type singularity. Counting statistics of two-dimensional point processes have attracted a lot of interest in recent years \cite{LeeRiser2016, CE2020, LMS2018, L et al 2019, FenzlLambert, ES2020, SDMS2020, SDMS2021, Charlier 2d jumps, AkemannSungsoo, CLmerging}. The first two terms in the large $n$ asymptotics of \eqref{disk counting fct} were derived in \cite{CE2020, L et al 2019}\footnote{\cite{CE2020} considers counting statistics on more general domains (not only centered disks) for a class of determinantal processes on a K\"{a}hler manifold (which includes Ginibre), and \cite{L et al 2019} considers general ``one-cut" rotation-invariant potentials (including Mittag-Leffler).}. More precise asymptotics, including the third term of order $1$, were obtained in \cite{FenzlLambert} for the Ginibre ensemble (i.e. $(b,\alpha)=(1,0)$)\footnote{Some generalizations of the Ginibre point process  (different from the Mittag-Leffler ensemble) and some hyperbolic models have also been considered in \cite{FenzlLambert}.}, and the asymptotics of \eqref{disk counting fct} including the fourth term of order $n^{-\frac{1}{2}}$ were then obtained in \cite{Charlier 2d jumps} for general $b>0$ and $\alpha>-1$. Several works on determinants with ``pointwise" root-type singularities in dimension two are also available in the literature. In \cite{BBLM2015, BGM17, LeeYang, BEG18, LeeYang2, LeeYang3}, the orthogonal polynomials for the planar Gaussian weight perturbed with a finite number of ``pointwise" root-type singularities have been studied, see also \cite{AKS2018} where microscopic properties of the associated point process have been analyzed.  Building on \cite{BBLM2015, LeeYang}, Webb and Wong in \cite{WebbWong} obtained a precise asymptotic formula for $\mathbb{E} [ e^{a \, \re \ln q_{n}(r)} ]$ where $a\in \mathbb{C}$, $\re a>-2$, $r<1$ and $q_{n}$ is the characteristic polynomial of a Ginibre matrix, i.e. $q_{n}(x) = \prod_{j=1}^{n}(z_{j}-x)$ and the $z_{j}$ are drawn from \eqref{def of point process} with $(b,\alpha)=(1,0)$. This expectation involves a determinant with a single  ``pointwise" root-type singularity in the bulk. The case where $r$ is close to $1$, which corresponds to the edge regime, was then investigated by Dea\~{n}o and Simm in \cite{DeanoSimm}. Determinants with two merging planar ``pointwise" root-type singularities were also considered in \cite{DeanoSimm}, and the asymptotics were found to involve some Painlev\'{e} transcendents. 

\medskip Determinants with ``circular" root-type singularities have not been considered before to our knowledge. A main difficulty in the analysis of ``pointwise" root-type singularities in dimension two stems from the fact that they break the rotation-invariance of the weight (unless of course if they are located at $0$). The ``circular" root-type singularities preserve the rotation-invariance of the weight, which makes them simpler to analyze in this respect, but they also pose a series of new challenges, which we discuss at the end of this section, and which we have been able to overcome only for integer values of $a$. Interestingly, these ``circular" root-type singularities also produce some \textit{associated Hermite polynomials} (see below for the definition) in the asymptotics. This came as a surprise to us and appears to be completely new. There are of course many \textit{exact} formulas in random matrix theory which involve the Hermite polynomials, but we are not aware of an earlier work where these polynomials show up explicitly in the \textit{asymptotics} of large determinants (let alone the associated Hermite polynomials). For comparison, ``pointwise" root-type singularities typically produce other kinds of ingredients in the asymptotics, such as Barnes' $G$-function (as was discovered by Basor \cite{Basor} in dimension one and by Webb and Wong \cite{WebbWong} in dimension two), and ``circular" jump-type singularities involve the error function. We also mention that ensembles with ``circular" root-type singularities have been studied in \cite{ZS2000, Seo}, and ensembles with ``elliptic" root-type singularities in \cite{NAKP2020}. In \cite{ZS2000, NAKP2020, Seo}, the singularities are located at the hard edge and the focus was on the leading order behavior of the kernel; in particular, the (associated) Hermite polynomials do not show up in these works.

\medskip The $\nu$-th \emph{associated} Hermite polynomials $\{ \He_k^{(\nu)}:k=0,1,\dots \}$ are defined recursively by
\begin{align}\label{Hermite recurrence}
\begin{cases}
\He_{k+1}^{(\nu)}(x)= x \, \He_k^{(\nu)}(x)-(k+\nu)\He^{(\nu)}_{k-1}(x), & k \geq 1, \\[0.1cm]
\He_0^{(\nu)}(x)=1, \quad \He_1^{(\nu)}(x)=x,
\end{cases}
\end{align}
and satisfy the orthogonality relations \cite{AW1984}
\begin{align}\label{orthogonality associated Hermite}
\int_{-\infty}^{+\infty} \He_{k}^{(\nu)}(x) \He_{\ell}^{(\nu)}(x) \frac{dx}{|D_{-\nu}(ix)|^{2}} = \sqrt{2\pi} \, (k+\nu)! \, \delta_{k\ell}, \qquad k,\ell = 0,1,\ldots
\end{align}
where $D_{-\nu}$ is the parabolic cylinder function. These polynomials are explicitly given by 
\begin{align*}
\He_k^{(\nu)}(x) = \begin{cases}
\ds k!   \sum_{\ell=0 }^{ \lfloor k/2 \rfloor  } \frac{(-1)^\ell}{\ell!(k-2\ell)!}\frac{x^{k-2\ell}}{2^\ell} = \Big[ \frac{d}{dt} \Big]^k \Big[ e^{xt-\frac{t^2}{2}}\Big] \Big|_{t=0}, & \mbox{if } \nu=0, \\[0.4cm]
\ds \sum_{\ell=0}^{\lfloor k/2\rfloor} \frac{(-1)^{\ell}}{(k-2\ell)!}\bigg( \sum_{j=0}^{\ell} \frac{(k-j)!(\nu-1+j)!}{j!(\ell-j)!(\nu-1)!2^{\ell-j}} \bigg)x^{k-2\ell}, & \mbox{if } \nu \geq 1,
\end{cases}
\end{align*}
see \cite[eq. (4.12)]{Wunsche}. For $\nu=0$, they reduce to the standard Hermite polynomials\footnote{There are two commonly used Hermite polynomials in the literature, denoted $\He_{k}$ and $H_{k}$, and which are related by $H_{k}(x) = 2^{\smash{\frac{k}{2}}}\He_{n}(\sqrt{2}\, x)$. For us it is more convenient to work with $\He_{k}$.}, i.e. $\He_{k}^{(0)}(x) = \He_{k}(x)$ for all $k \in \mathbb{N}$. We refer to \cite{Van Assche} for basic properties of general associated polynomials, and to \cite{Wunsche} for a focus on the Hermite case. 

\medskip Only the polynomials $\{\He_{k},\He_{k}^{(1)}:k=0,1,\ldots\}$, corresponding to $\nu=0$ and $\nu=1$, will appear in our asymptotic formula. Our results can be presented in a unified way if we formally define $\He_{k}$, $\He_{k}^{(1)}$ for the first few negative $k$ as follows:
\begin{align}
& \He_{-1}(x) := 0, & & \He_{-2}(x) := 1, & & \He_{-3}(x) := -\frac{x}{2}, \label{negative Hermite polynomials} \\
& \He_{-1}^{(1)}(x) := 0, & & \big[ k \He_{k-2}^{(1)}(x) \big]_{k=0} := -1, & & \big[ k \He_{k-3}^{(1)}(x) \big]_{k=0} := x, & & \big[ k \He_{k-4}^{(1)}(x) \big]_{k=0} := -\frac{x^{2}+1}{2}. \nonumber
\end{align}
These definitions are consistent with the recurrence \eqref{Hermite recurrence}. For general $a \in \mathbb{N}$, we define 
\begin{align}
& p_{0,a}(x):= \frac1{i^a} \He_a(ix) =   \sum_{s=0 }^{ \lfloor a/2 \rfloor  } \frac{ a! }{s!(a-2s)!}   \frac{x^{a-2s}}{2^s}, \label{def of p0a} 
 \\ 
& q_{0,a}(x):= \frac1{i^{a-1}} \He_{a-1}^{(1)}(ix) =  \sum_{s=0 }^{ \lfloor (a-1)/2 \rfloor  }  \bigg( \sum_{j=0}^s \frac{(a-1-j)!\,2^j}{(s-j)! } \bigg) \frac1{(a-1-2s)!} \frac{ x^{a-1-2s}}{2^s}, \label{def of q0ap1} \\
& p_{1,a}(x)  := -\frac{a}{2} p_{0,a+1}(x) -ab  \Big(  p_{0,a+1}(x)- (3a-1) \,p_{0,a-1}(x) + \frac{5}{3}(a-1)(a-2) \, p_{0,a-3}(x) \Big), \label{def of p1a} \\
& q_{1,a}(x) := -\frac{a}{2} q_{0,a+1}(x) -b  \Big(  aq_{0,a+1}(x)- (3a-1) [a\,q_{0,a-1}(x)] + \frac{5}{3} [a(a-1)(a-2) \, q_{0,a-3}(x)] \Big), \label{def of q1a}
\end{align}
where the brackets in \eqref{def of q1a} emphasize that for the first values of $a$, one needs to use \eqref{negative Hermite polynomials}, namely
\begin{align}
& [a q_{0,a-1}(x)] := \begin{cases}
1, & \mbox{if } a=0, \\
a q_{0,a-1}(x), & \mbox{if } a \geq 1,
\end{cases} \nonumber \\
& [a(a-1)(a-2) q_{0,a-3}(x)] := \begin{cases}
x^{2}-1, & \mbox{if } a = 0, \\
-x, & \mbox{if } a = 1, \\
2, & \mbox{if } a = 2, \\
a(a-1)(a-2) q_{0,a-3}(x), & \mbox{if } a \geq 3.
\end{cases} \label{q0 a negative}
\end{align}
To be concrete, the first polynomials are given by 
\begin{align}
	& p_{0,a}(x) = \Big\{1, \; x, \; x^{2}+1, \; x^{3}+3x, \; x^{4}+6x^{2}+3 \Big\}, \quad  q_{0,a}(x) = \Big\{0, \; 1, \; x, \; x^{2}+2, \; x^{3}+5x \Big\}, 
	\\
	&  p_{1,a}(x) =- \frac{a}{2} p_{0,a+1}(x)  +b\, \Big\{  0, -x^2+1, -2x^3+4x, -3x^4+6x^2+5,-4x^5+4x^3+32x  \Big\},
	\\
	& q_{1,a}(x) = - \frac{a}{2} q_{0,a+1}(x) + b \, \Big\{ -\frac{5}{3}x^2+\frac23,  \frac23 x, -2x^2+\frac83,  -3x^3+9x , -4x^4+8x^2+16 \Big\}.
\end{align}
In the above, the $5$ polynomials inside the brackets correspond, from left to right, to $a=0,1,2,3,4$. 

\medskip The large $n$ asymptotics of $\mathbb{E}\big[ e^{\frac{u}{\pi} \, \im \ln p_{n}(r)}e^{a \, \re \ln p_{n}(r)} \big]$ are naturally described in terms of the two functions
\begin{align}
& \mathcal{G}_{0}(y; u,a):= p_{0,a}(-\sqrt{2}y)\bigg( (-1)^{a} + \frac{e^{u}-(-1)^{a}}{2}\mathrm{erfc}(y) \bigg) + q_{0,a}(-\sqrt{2} y) (e^{u}-(-1)^{a}) \frac{e^{-y^{2}}}{\sqrt{2\pi}}, \label{function F} \\
& \mathcal{G}_{1}(y; u,a) := p_{1,a}(-\sqrt{2}y)\bigg( (-1)^{a} + \frac{e^{u}-(-1)^{a}}{2}\mathrm{erfc}(y) \bigg) + q_{1,a}(-\sqrt{2} y) (e^{u}-(-1)^{a}) \frac{e^{-y^{2}}}{\sqrt{2\pi}}, \label{function H}
\end{align}
where $y \in \mathbb{R}$, $u \in \mathbb{R}$, $a \in \mathbb{N}$, and $\mathrm{erfc}$ is the complementary error function
\begin{align}\label{def of erfc}
\mathrm{erfc} (y) = \frac{2}{\sqrt{\pi}}\int_{y}^{\infty} e^{-x^{2}}dx.
\end{align}
In the statement of our theorem, $\mathcal{G}_{0}$ appears inside a logarithm and in a denominator. It turns out that $\mathcal{G}_{0}(y; u,a)>0$ for all $y \in \mathbb{R}$, $u \in \mathbb{R}$ and $a \in \mathbb{N}$. This fact is not obvious so we defer the proof to Section \ref{section:S2}, see Lemma \ref{lemma:mathcalG0 positive}.

\begin{theorem}\label{thm:main thm}

\noindent Let $b>0$, $\alpha > -1$, $r \in (0,b^{-\frac{1}{2b}})$, $u \in \mathbb{R}$ and $a \in \mathbb{N}$ be fixed parameters. As $n \to + \infty$,  
\begin{align}\label{asymp in main thm}
\mathbb{E}\Big[ e^{\frac{u}{\pi} \, \im \ln p_{n}(r)}e^{a \, \re \ln p_{n}(r)} \Big] = \exp \bigg( C_{1} n + C_{2} \sqrt{n} + C_{3} + \bigO\Big( n^{-\frac{1}{2b}} + (\ln n)^{\frac{5}{8}}n^{-\frac{1}{8}} \Big)\bigg),
\end{align}
where
\begin{align*}
& C_{1} = \int_{0}^{r} \Big( u+a \ln(r-y) \Big)d\mu(y) + \int_{r}^{b^{-\frac{1}{2b}}} a \ln(y-r) \, d\mu(y), \\
& C_{2} = \sqrt{2}\, b r^{b} \int_{-\infty}^{+\infty} \Big( \ln \mathcal{G}_{0}(y; u,a) - a \ln (\sqrt{2}|y|) - u \chi_{(-\infty,0)}(y) \Big)\, dy,\\
& C_{3} = - \bigg( \frac{1}{2}+\alpha \bigg)u + \frac{a(1-a)}{4(1-(br^{2b})^{\frac{1}{2b}})} + \frac{a}{4}(2+a-2b+4\alpha) \ln \Big( (br^{2b})^{-\frac{1}{2b}}-1 \Big) \\
&   +  \int_{-\infty}^{+\infty} \bigg\{ \frac{1}{\sqrt{2}} \,\frac{ \mathcal{G}_1( y ; u,a) }{   \mathcal{G}_0(y; u,a)  } + 4b y \Big( \ln(\mathcal{G}_{0}(y; u,a))- u \chi_{(-\infty,0)}(y) \Big) \\
& \hspace{1.65cm} -  \frac{a}{2} y \Big( 1+2b+8b\ln (\sqrt{2}|y|) \Big)   + \frac{(2ab-a^{2})y}{4(1+y^{2})}  \bigg\} \,dy, 
\end{align*}
$d\mu(y) = 2b^{2}y^{2b-1}dy$, and $\chi_{(-\infty,0)}(y)= 0$ for $y \geq 0$ and $\chi_{(-\infty,0)}(y)=1$ for $y<0$. 
\end{theorem}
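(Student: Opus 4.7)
The plan is to exploit the rotation invariance of the weight $w(z)=|z|^{2\alpha}e^{-n|z|^{2b}}\omega(|z|)$ to diagonalize the moment matrix in \eqref{def of Dn intro}, and then analyze each resulting one-dimensional integral by Laplace's method with a careful local expansion at the singular radius $r$. Since $\int_{0}^{2\pi}e^{i(j-k)\theta}d\theta=2\pi\delta_{jk}$,
\begin{align*}
\frac{D_{n}}{Z_{n}}=\prod_{j=0}^{n-1}\frac{h_{j}}{\widetilde h_{j}}, \qquad h_{j}:=\int_{0}^{\infty}s^{2j+2\alpha+1}e^{-ns^{2b}}\omega(s)\,ds, \qquad \widetilde h_{j}:=h_{j}\big|_{\omega\equiv 1},
\end{align*}
so the theorem reduces to a three-term asymptotic expansion of $\sum_{j=0}^{n-1}\ln(h_{j}/\widetilde h_{j})$. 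The integrand of $\widetilde h_{j}$ is sharply peaked of width $\sim n^{-1/2}$ around the saddle $s_{j}=\bigl(\tfrac{2j+2\alpha+1}{2bn}\bigr)^{1/(2b)}$; letting $j_{\star}$ be the index with $s_{j_{\star}}$ closest to $r$, I would split $\sum_{j=0}^{n-1}=\sum_{\text{bulk}}+\sum_{\text{local}}$ at the threshold $|j-j_{\star}|\asymp\sqrt{n\ln n}$.

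\medskip In the bulk regime $s_{j}$ lies at distance $\gg n^{-1/2}$ from $r$, hence $\omega$ is smooth on the effective support of the integrand, and a standard Laplace--Watson expansion around $s_{j}$ gives $\ln(h_{j}/\widetilde h_{j})=\ln\omega(s_{j})+n^{-1}A(s_{j})+\bigO(n^{-2})$, up to exponentially small contributions from the tail past $r$. Converting $\sum_{j\in\text{bulk}}\ln\omega(s_{j})$ to a Riemann sum against $d\mu(y)=2b^{2}y^{2b-1}dy$ produces the leading $C_{1}n$ term; Euler--Maclaurin corrections, combined with the boundary values at $j=j_{\star}$, then yield the algebraic pieces of $C_{3}$, namely $\tfrac{a(1-a)}{4(1-(br^{2b})^{1/(2b)})}$, $\tfrac{a}{4}(2+a-2b+4\alpha)\ln((br^{2b})^{-1/(2b)}-1)$, and $-(\tfrac12+\alpha)u$.

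\medskip The core new ingredient is the local regime. Here I would rescale $s-r$ by the natural width $(\sqrt{2n}\,br^{b-1})^{-1}$, chosen so that the quadratic approximation of $f(s)=(2j+2\alpha+1)\ln s-ns^{2b}$ reduces to a standard Gaussian in the new variable $y$. Under this substitution, $s^{2j+2\alpha+1}e^{-ns^{2b}}$ becomes a shifted Gaussian centered at $y_{j}:=\sqrt{2n}\,br^{b-1}(s_{j}-r)$ on a grid of spacing $\asymp n^{-1/2}$, with cubic corrections of size $n^{-1/2}$ responsible for the $b$-dependence of $p_{1,a},q_{1,a}$. The singular factor $\omega$ becomes $(\sqrt{2n}\,br^{b-1})^{-a}|y|^{a}(e^{u}\chi_{y<0}+\chi_{y\ge 0})$. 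Since $a\in\mathbb{N}$, the binomial expansion $y^{a}=\sum_{k=0}^{a}\binom{a}{k}y_{j}^{a-k}(y-y_{j})^{k}$ reduces each local integral to a finite combination of truncated Gaussian moments $\int_{\pm t>\mp y_{j}}t^{k}e^{-t^{2}}\,dt$. One integration by parts shows that these moments split into an $\mathrm{erfc}(y_{j})$-branch whose polynomial coefficients satisfy \eqref{Hermite recurrence} with $\nu=0$ (hence are the standard Hermite polynomials $\He_{k}$), and an $e^{-y_{j}^{2}}$-branch whose coefficients satisfy \eqref{Hermite recurrence} with $\nu=1$ (hence are the associated Hermite polynomials $\He_{k-1}^{(1)}$). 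Collecting binomial coefficients then identifies these sums with the polynomials $p_{0,a},q_{0,a}$ of \eqref{def of p0a}--\eqref{def of q0ap1}, yielding
\begin{align*}
\frac{h_{j}}{\widetilde h_{j}} = (\sqrt{2n}\,br^{b-1})^{-a}\Bigl(\mathcal{G}_{0}(y_{j};u,a)+n^{-1/2}\mathcal{G}_{1}(y_{j};u,a)+\bigO(n^{-1})\Bigr).
\end{align*}
Taking logarithms, summing over $j\in\text{local}$ as a Riemann sum of spacing $\Delta y\asymp n^{-1/2}$, and subtracting the large-$|y_{j}|$ tails already captured by the bulk sum produces the $C_{2}\sqrt n$ term, the $\int \mathcal{G}_{1}/(\sqrt 2\,\mathcal{G}_{0})\,dy$ piece of $C_{3}$, and (after the Euler--Maclaurin correction to the local Riemann sum together with the first subleading correction of the bulk expansion) the remaining counter-terms $4by(\ln\mathcal{G}_{0}-u\chi_{(-\infty,0)})$, $-\tfrac{a}{2}y(1+2b+8b\ln(\sqrt 2|y|))$, and $(2ab-a^{2})y/(4(1+y^{2}))$.

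\medskip \textbf{Main obstacle.} The most delicate step will be to control the error uniformly on the matching window $|j-j_{\star}|\asymp\sqrt{n\ln n}$: the bulk Laplace expansion degrades there because $\omega$ is nonsmooth at $r$, while the local Gaussian approximation degrades at the other end as $|y_{j}|\to\infty$. The counter-terms in $C_{2}$ and $C_{3}$ are precisely those needed to cancel the large-$|y|$ asymptotics of $\ln\mathcal{G}_{0}$ and $\mathcal{G}_{1}/\mathcal{G}_{0}$ (which one checks by direct expansion of \eqref{function F}--\eqref{function H}), and verifying these cancellations to the right order is what produces the sharp error $\bigO(n^{-1/(2b)}+(\ln n)^{5/8}n^{-1/8})$. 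The integer restriction $a\in\mathbb{N}$ enters precisely here, through the finite binomial expansion of $|y|^{a}$ that allows the singular moments to be identified with a closed-form finite combination of Hermite and associated Hermite polynomials rather than with the full parabolic cylinder functions $D_{-\nu}$ appearing in \eqref{orthogonality associated Hermite}.
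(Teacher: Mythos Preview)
Your route is genuinely different from the paper's. The paper first expands $|s-r|^{a}$ binomially \emph{before} integrating, which turns each $h_{j}$ into a finite sum of incomplete gamma functions $\gamma(\tfrac{2j+2\alpha+k}{2b},nr^{2b})$; it then feeds in Temme's uniform asymptotics \eqref{asymp of Ra} and must expand every $k$-summand to order $(a+2)$, after which the first $a$ terms cancel identically (this is the ``huge cancellations'' remark in the introduction and the identities $\mathcal{B}_{0}\equiv\cdots\equiv\mathcal{B}_{a-1}\equiv 0$ of \eqref{first mathcalB are 0}). Only then, via Lemmas \ref{lemma:sum of qaa(1)}--\ref{lemma:sum of qaa(4)} and several Hermite identities such as \eqref{Heap1p formula 1}, do $p_{0,a},q_{0,a}$ and $p_{1,a},q_{1,a}$ emerge. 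Your proposal keeps $|s-r|^{a}$ intact until after the Gaussian localisation, so the half-line moments $F_{a}(c)=\int_{0}^{\infty}y^{a}e^{-(y-c)^{2}/2}dy$ satisfy directly the three-term recursion $F_{a+1}=cF_{a}+aF_{a-1}$, and the two independent solutions picked out by $(F_{0},F_{1})$ are exactly $p_{0,a}$ and $q_{0,a}$; this bypasses the cancellation machinery entirely and makes the appearance of $\mathcal{G}_{0}$ almost immediate. One small inaccuracy: both branches satisfy the \emph{same} recursion with different initial data, not the $\nu=0$ and $\nu=1$ versions of \eqref{Hermite recurrence} separately; the identification with $\He_{a-1}^{(1)}$ is an index shift plus the $x\mapsto ix$ substitution, not a different recurrence. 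What the paper's incomplete-gamma route buys is that Temme's expansion is already uniform on the whole transition $\lambda\approx 1$, so the matching and the sharp error $\bigO(n^{-1/(2b)}+(\ln n)^{5/8}n^{-1/8})$ come essentially for free; in your approach the uniform control across the window $|j-j_{\star}|\asymp\sqrt{n\ln n}$ and the derivation of $\mathcal{G}_{1}$ from the cubic Laplace correction must be done by hand, and reproducing the precise coefficients in \eqref{def of p1a}--\eqref{def of q1a} (in particular the $-\tfrac{a}{2}p_{0,a+1}$ and the $\tfrac{5}{3}(a-1)(a-2)p_{0,a-3}$ pieces) will require tracking both the third-order Taylor term of $n s^{2b}-(2j+2\alpha+1)\ln s$ and the drift of the saddle $s_{j}$ relative to $r$ to second order.
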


\begin{remark}
For $a=0$, the next term in \eqref{asymp in main thm} is of order $n^{-\frac{1}{2}}$ and was obtained explicitly in \cite{Charlier 2d jumps}. For $a \geq 1$, numerical simulations suggest that the $\bigO$-term is in fact of order $n^{-\frac{1}{2}} + n^{-\frac{1}{2b}}$. This also suggests that our estimate in \eqref{asymp in main thm} for the $\bigO$-term is optimal for $b>4$ and $a \geq 1$.
\end{remark}

\subsection*{Outline of the proof of Theorem \ref{thm:main thm}.}

Let $\mathcal{E}_{n}:= \mathbb{E}\big[ e^{\frac{u}{\pi} \, \im \ln p_{n}(r)}e^{a \, \re \ln p_{n}(r)} \big]$. Our starting point is the following exact formula
\begin{align}\label{main exact formula}
\ln \mathcal{E}_{n} = \sum_{j=1}^{n} \ln \bigg(\sum_{k=0}^{a} {a \choose k} \frac{(-r)^{a-k}}{n^{\frac{k}{2b}}} \frac{\Gamma(\frac{2j+2\alpha+k}{2b})}{\Gamma(\frac{2j+2\alpha}{2b})} \bigg[ 1+((-1)^{a}e^{u}-1)\frac{\gamma(\frac{2j+2\alpha+k}{2b},nr^{2b})}{\Gamma(\frac{2j+2\alpha+k}{2b})} \bigg] \bigg),
\end{align}
where $\gamma(\tilde{a},z)$ is the incomplete gamma function
\begin{align*}
\gamma(\tilde{a},z) = \int_{0}^{z}t^{\tilde{a}-1}e^{-t}dt.
\end{align*}
Formula \eqref{main exact formula} can be derived using the facts that \eqref{def of point process} is determinantal and that $w$ is rotation-invariant, see Lemma \ref{lemma:exact formula} below. For fixed $j$ and $a \geq 1$, it is easy to see that the summand in \eqref{main exact formula} contains a term proportional to $n^{-\frac{1}{2b}}$ in its large $n$ asymptotics. This already explains why our estimate for the error term in \eqref{asymp in main thm} contains $n^{-\frac{1}{2b}}$.


\medskip To obtain precise asymptotics for $\mathcal{E}_{n}$, up to and including the term $C_{3}$ of order $1$, we must take into account \textit{each} of the $n$ terms in the sum \eqref{main exact formula} (they all contribute). As can be seen from \eqref{main exact formula}, this means that we need precise uniform asymptotics for $\gamma(\tilde{a},z)$ as both $z \to + \infty$ and $\tilde{a} \to + \infty$ at various different relative speeds. Fortunately, these asymptotics are available in the literature \cite{Temme}. Following the approach of \cite{ForresterHoleProba} (which was further developed in \cite{Charlier 2d jumps, Charlier 2d gap}), we will split the sum \eqref{main exact formula} in several parts,
\begin{align*}
\ln \mathcal{E}_{n} = S_{0} + S_{1} + S_{2} + S_{3},
\end{align*}
where $S_{\ell}$, $\ell=0,1,2,3$ are given in \eqref{def of S0}--\eqref{def of S3}. There is a critical transition in the large $\tilde{a}$ asymptotics of $\gamma(\tilde{a},z)$ when $z \to + \infty$ is such that $\lambda = \frac{z}{\tilde{a}}\approx 1$. The sum $S_{2}$ is the hardest one and precisely corresponds to this critical transition; it requires a ``local analysis" involving the $j$-terms in \eqref{main exact formula} for which $\smash{\frac{bnr^{2b}}{j}}\approx 1$. 

\medskip We found that, quite surprisingly, ``circular" root-type singularities are significantly more involved to analyze than ``circular" jump-type singularities. Let us highlight some of the reasons for that:
\begin{itemize}
\item The ``global analysis" needed for $S_{0}$, $S_{1}$ and $S_{3}$ requires some precise Riemann sum approximations for functions with singularities. For comparison, the analogue of $S_{0}$, $S_{1}$ and $S_{3}$ in \cite{Charlier 2d jumps} in the case of pure ``circular" jump-type singularities are straightforward to analyze, because the corresponding Riemann sum approximations only involve constant functions.
\item Huge cancellations occur in the ``local analysis" of $S_{2}$. In fact, to obtain $C_{3}$, we need to expand up to the $(a+2)$-th order the summand of the $k$-sum in \eqref{main exact formula}. This is because, curiously, the first $a$ terms in the expansion cancel perfectly after summing over $k$. To treat the general case $a \in \mathbb{N}$, this means that we need to expand various quantities to all orders. An important technical obstacle is that the coefficients in these various expansions are not always readily available in an explicit form; sometimes they can only be found recursively and involve heavy combinatorics, see e.g. Lemma \ref{lemma:some computation for S2kp2p} and the all-order expansion of $\gamma$ in Lemma \ref{lemma: uniform}. The analysis of $S_{2}$ is in fact the only part in the proof where solving the problem for general $a \in \mathbb{N}$ is clearly harder than solving the problem for a finite number of values of $a$, say $a \in \{0,1,2,3,4\}$. This is also the only place in the proof where the (associated) Hermite polynomials arise, see Lemmas \ref{lemma:sum of qaa(1)} and \ref{lemma:sum of qaa(4)}.
\end{itemize}

\begin{remark}
For non-integer values of $a$, formula \eqref{main exact formula} does not hold and the connection with the incomplete gamma function is lost (and therefore the strong results from \cite{Temme} cannot be used anymore). This is the main reason as to why we decided to restrict ourselves to $a \in \mathbb{N}$ in this work. For $a \notin \mathbb{N}$, the exact expression for $\mathcal{E}_{n}$ involves hypergeometric functions that generalize the incomplete gamma function. Also, because of the well-known relation $D_{k}(z) = e^{-\frac{z^{2}}{4}}\He_{k}(z)$, $k\in \mathbb{N}$ (see \cite[eq 12.7.2]{NIST}), it is tempting to conjecture that for the general case $a \in (-1,+\infty)$ the large $n$ asymptotics of $\mathcal{E}_{n}$ involve the parabolic cylinder function. That would be very interesting to figure that out in detail and we intend to come back to this problem in a future publication.
\end{remark}

\paragraph{Outline of the paper.} In Section \ref{section:proof}, we prove \eqref{main exact formula}, define the sums $S_{j}$, $j=0,1,2,3$, and establish many useful lemmas. In Section \ref{section: S0S3S1}, we obtain the large $n$ asymptotics of $S_{0}$, $S_{3}$ and $S_{1}$. The large $n$ asymptotics of $S_{2}$ are then obtained in Section \ref{section:S2}. We finish the proof of Theorem \ref{thm:main thm} in Section \ref{section:proof thm}.

\section{Preliminaries}\label{section:proof}
This section contains the proof of \eqref{main exact formula} and the definitions of $S_{0},\ldots,S_{3}$. We also establish here various preliminary lemmas that will be used in Sections \ref{section: S0S3S1} and \ref{section:S2}. 
\begin{lemma}\label{lemma:exact formula}
Formula \eqref{main exact formula} holds for all $n \in \mathbb{N}_{>0}:=\{1,2,\ldots\}$.
\end{lemma}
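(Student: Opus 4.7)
The plan is to exploit rotation invariance of the weight $w(z)=|z|^{2\alpha}e^{-n|z|^{2b}}\omega(|z|)$ in order to reduce the $n \times n$ determinant in \eqref{def of Dn intro} to a product of one-dimensional radial integrals, and then to express each such integral in terms of the complete and incomplete gamma functions using a standard change of variables.

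First, I would start from the determinantal formula
\begin{equation*}
D_n = \det\!\Big( \int_{\mathbb{C}} z^j \overline{z}^k \, w(z) \, d^{2}z \Big)_{j,k=0}^{n-1},
\end{equation*}
and pass to polar coordinates $z=\rho e^{i\theta}$. Since $w$ depends only on $|z|$, the angular integration yields $2\pi\delta_{jk}$, so the matrix is diagonal and
\begin{equation*}
D_n = \prod_{j=0}^{n-1} \Big( 2\pi \int_{0}^{\infty} \rho^{2j+2\alpha+1} e^{-n\rho^{2b}} \omega(\rho) \, d\rho \Big),
\qquad
Z_n = \prod_{j=0}^{n-1} \Big( 2\pi \int_{0}^{\infty} \rho^{2j+2\alpha+1} e^{-n\rho^{2b}} \, d\rho \Big).
\end{equation*}
Taking the ratio, the factors of $2\pi$ cancel and each $j$-th factor of $\mathcal{E}_n = D_n/Z_n$ depends only on the radial moment of $\omega$ against the weight $\rho^{2j+2\alpha+1}e^{-n\rho^{2b}}$.

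Next, I would rewrite the weight function using
\begin{equation*}
\omega(\rho) = (\rho-r)^a \Big( \chi_{[r,\infty)}(\rho) + (-1)^a e^u \chi_{[0,r)}(\rho) \Big),
\end{equation*}
and expand with the binomial theorem $(\rho-r)^a = \sum_{k=0}^{a}\binom{a}{k}(-r)^{a-k}\rho^k$. After the substitution $t=n\rho^{2b}$ with $d\rho = \frac{dt}{2bn\rho^{2b-1}}$, each partial integral becomes
\begin{equation*}
\int_{0}^{r} \rho^{2j+2\alpha+k-1} e^{-n\rho^{2b}} d\rho = \frac{1}{2b \, n^{(2j+2\alpha+k)/(2b)}} \, \gamma\!\Big(\frac{2j+2\alpha+k}{2b},nr^{2b}\Big),
\end{equation*}
and similarly with $\gamma$ replaced by the upper incomplete gamma function on the interval $[r,\infty)$, and by the full $\Gamma(\tfrac{2j+2\alpha+k}{2b})$ on $[0,\infty)$. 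Here I am tacitly relabelling $j \mapsto j-1$ so that the summation in \eqref{main exact formula} runs over $j=1,\dots,n$; this shift is exactly what turns the exponent $2j+2\alpha+k+2$ that arises from the substitution into the exponent $2j+2\alpha+k$ appearing in the statement.

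Finally, I would combine the $[r,\infty)$ and $[0,r)$ contributions using $\Gamma(\tilde a) = \gamma(\tilde a,z) + \Gamma(\tilde a, z)$, so that the $j$-th factor of $D_n/Z_n$ takes the form
\begin{equation*}
\sum_{k=0}^{a} \binom{a}{k} \frac{(-r)^{a-k}}{n^{k/(2b)}} \, \frac{\Gamma(\frac{2j+2\alpha+k}{2b})}{\Gamma(\frac{2j+2\alpha}{2b})} \bigg[ 1 + ((-1)^a e^u - 1) \, \frac{\gamma(\frac{2j+2\alpha+k}{2b}, nr^{2b})}{\Gamma(\frac{2j+2\alpha+k}{2b})} \bigg].
\end{equation*}
Taking logarithms and summing over $j$ yields \eqref{main exact formula}. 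The whole argument is essentially a direct computation; the only point that requires care is the bookkeeping of the index shift between the diagonal product ($j=0,\dots,n-1$) and the formula ($j=1,\dots,n$), and the correct sign handling of $|\rho-r|^a$ across the circle $|z|=r$, which is what produces the factor $(-1)^a e^u - 1$.
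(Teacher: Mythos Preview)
Your proposal is correct and follows essentially the same route as the paper's proof: exploit rotation invariance to diagonalize the moment matrix, expand $|\rho-r|^{a}$ via the binomial theorem, and reduce the radial integrals to (incomplete) gamma functions by the substitution $t=n\rho^{2b}$. The only cosmetic difference is that the paper quotes a known closed formula for $Z_{n}$ and then takes the ratio $D_{n}/Z_{n}$, whereas you compute $Z_{n}$ by the same diagonalization and cancel the common prefactors factor-by-factor; the index shift and the sign bookkeeping you flag are handled identically in both arguments.
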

\begin{proof}
The partition function $Z_{n}$ of the Mittag-Leffler ensemble is known to be
\begin{align}\label{Zn as product}
Z_{n} = n^{-\frac{n^{2}}{2b}}n^{-\frac{1+2\alpha}{2b}n} \frac{\pi^{n}}{b^{n}} \prod_{j=1}^{n} \Gamma(\tfrac{j+\alpha}{b}),
\end{align}
see e.g. \cite[eq. (1.23)]{Charlier 2d jumps}.
Since $\mathcal{E}_{n}=D_{n}/Z_{n}$, it only remains to find a simplified exact expression for $D_{n}$. Since $w$ is rotation-invariant, $\int_{\mathbb{C}} z^{j} \overline{z}^{k} w(z) d^{2}z = 0$ for $j \neq k$, and therefore, by \eqref{def of Dn intro}, we have
\begin{align*}
D_{n} = \prod_{j=0}^{n-1} \int_{\mathbb{C}} |z|^{2j} w(z) d^{2}z = (2\pi)^{n} \prod_{j=1}^{n} \int_{0}^{+\infty}v^{2j-1}w(v)dv.
\end{align*}
Since $a \in \mathbb{N}$, 
\begin{align*}
w(v) = v^{2\alpha}e^{-nv^{2b}} \begin{cases}
\ds e^{u}\sum_{k=0}^{a} {a \choose k} (-1)^{k}v^{k}r^{a-k}, & \mbox{if } v<r, \\
\ds \sum_{k=0}^{a} {a \choose k} (-1)^{a-k}v^{k}r^{a-k}, & \mbox{if } v\geq r,
\end{cases}
\end{align*}
and thus
\begin{align}\label{Dn as product}
& D_{n} = n^{-\frac{n^{2}}{2b}}n^{-\frac{1+2\alpha}{2b}n} \frac{\pi^{n}}{b^{n}} \prod_{j=1}^{n} \sum_{k=0}^{a} {a \choose k} \frac{(-r)^{a-k}}{n^{\frac{k}{2b}}} \bigg(\Gamma(\tfrac{2j+2\alpha+k}{2b}) + ((-1)^{a}e^{u}-1) \gamma(\tfrac{2j+2\alpha+k}{2b},nr^{2b}) \bigg).
\end{align}
The claim now follows directly from \eqref{Zn as product}, \eqref{Dn as product} and $\mathcal{E}_{n}=D_{n}/Z_{n}$.
\end{proof}

Through the paper, $c$ and $C$ denote positive constants which may change within a computation, and $\ln$ always denotes the principal branch of the logarithm. 

\medskip Let $M'$ be a large integer independent of $n$, let $\epsilon > 0$ be a small constant independent of $n$, and let $M:= n^{\frac{1}{8}}(\ln n)^{-\frac{1}{8}}$. Define
\begin{align}\label{def of jminus and jplus}
& j_{-}:=\lceil \tfrac{bnr^{2b}}{1+\epsilon} - \alpha \rceil, \qquad j_{+} := \lfloor  \tfrac{bnr^{2b}}{1-\epsilon} - \alpha \rfloor.
\end{align}
We choose $\epsilon$ small enough so that
\begin{align*}
\frac{br^{2b}}{1-\epsilon} < \frac{1}{1+\epsilon}.
\end{align*}
Using \eqref{main exact formula}, we divide $\ln \mathcal{E}_{n}$ into $4$ parts
\begin{align}\label{log Dn as a sum of sums}
\ln \mathcal{E}_{n} = S_{0} + S_{1} + S_{2} + S_{3},
\end{align}
with 
\begin{align}
& S_{0} = \sum_{j=1}^{M'} \ln \bigg( \sum_{k=0}^{a} {a \choose k} \frac{(-r)^{a-k}}{n^{\frac{k}{2b}}} \frac{\Gamma(\frac{2j+2\alpha+k}{2b})}{\Gamma(\frac{2j+2\alpha}{2b})} \bigg[ 1+((-1)^{a}e^{u}-1)\frac{\gamma(\frac{2j+2\alpha+k}{2b},nr^{2b})}{\Gamma(\frac{2j+2\alpha+k}{2b})} \bigg] \bigg), \label{def of S0} \\
& S_{1} = \sum_{j=M'+1}^{j_{-}-1} \hspace{-0.25cm} \ln \bigg( \sum_{k=0}^{a} {a \choose k} \frac{(-r)^{a-k}}{n^{\frac{k}{2b}}} \frac{\Gamma(\frac{2j+2\alpha+k}{2b})}{\Gamma(\frac{2j+2\alpha}{2b})} \bigg[ 1+((-1)^{a}e^{u}-1)\frac{\gamma(\frac{2j+2\alpha+k}{2b},nr^{2b})}{\Gamma(\frac{2j+2\alpha+k}{2b})} \bigg] \bigg), \label{def of S1} \\
& S_{2} = \sum_{j=j_{-}}^{j_{+}} \ln \bigg( \sum_{k=0}^{a} {a \choose k} \frac{(-r)^{a-k}}{n^{\frac{k}{2b}}} \frac{\Gamma(\frac{2j+2\alpha+k}{2b})}{\Gamma(\frac{2j+2\alpha}{2b})} \bigg[ 1+((-1)^{a}e^{u}-1)\frac{\gamma(\frac{2j+2\alpha+k}{2b},nr^{2b})}{\Gamma(\frac{2j+2\alpha+k}{2b})} \bigg] \bigg), \label{def of S2} \\
& S_{3}=\sum_{j=j_{+}+1}^{n} \hspace{-0.2cm} \ln \bigg( \sum_{k=0}^{a} {a \choose k} \frac{(-r)^{a-k}}{n^{\frac{k}{2b}}} \frac{\Gamma(\frac{2j+2\alpha+k}{2b})}{\Gamma(\frac{2j+2\alpha}{2b})} \bigg[ 1+((-1)^{a}e^{u}-1)\frac{\gamma(\frac{2j+2\alpha+k}{2b},nr^{2b})}{\Gamma(\frac{2j+2\alpha+k}{2b})} \bigg] \bigg). \label{def of S3}
\end{align}
In Sections \ref{section: S0S3S1} and \ref{section:S2}, we will analyze these sums in order of increasing difficulty: first $S_{0}$, then $S_{3}$, then $S_{1}$, and finally $S_{2}$. The sum $S_{0}$ is straightforward to analyze, but $S_{1}$, $S_{2}$ and $S_{3}$ are more involved and require some preparation. This preparation is carried out in the next subsection.


\subsection{Useful lemmas}
For $\ell \in \mathbb{N}:=\{0,1,\ldots\}$, let
\begin{equation}\label{def of gfrakl}
\mathfrak{g}_\ell(x):= \sum_{k=0}^a \binom{a}{k} x^k k^\ell. 
\end{equation}
If $k=\ell=0$ in \eqref{def of gfrakl}, then $k^{\ell}:=1$ so that $\mathfrak{g}_{0}(x)=(1+x)^{a}$. 
\begin{remark}
The sequence $\{\mathfrak{g}_{\ell}\}_{\ell=0}^{+\infty}$ satisfies $\mathfrak{g}_{\ell+1}(x)= x \mathfrak{g}_\ell'(x)$, $\ell\in \mathbb{N}$. Solving this recurrence relation using the initial value $\mathfrak{g}_{0}(x)=(1+x)^{a}$ yields
\begin{equation*}
\mathfrak{g}_\ell(x)= \Big[ \frac{d}{dt}\Big]^{\ell} \Big[ (1+e^t)^a  \Big] \Big|_{t=\ln x},
\end{equation*}
which is an interesting alternative representation of $\mathfrak{g}_{\ell}$. 
\end{remark}
The next lemma establishes yet another representation of $\mathfrak{g}_{\ell}$.
\begin{lemma}\label{lemma:prop of gfrakl}
For $\ell \in \mathbb{N}_{>0}$, we have
\begin{align}\label{gfrakl in lemma}
\mathfrak{g}_{\ell}(x) = x (x+1)^{a-\min(\ell,a)} \sum_{j=1}^{\min(\ell,a)} S(\ell,j) \frac{a!}{(a-j)!} x^{j-1}(x+1)^{\min(\ell,a)-j},
\end{align}
where $S(\ell,j)$ is the Stirling number of the second kind, i.e. the number of partitions of $\{1, \dots, \ell\}$ into exactly $j$ nonempty subsets. Furthermore,
\begin{align}
& \mathfrak{g}_{\ell}(x) = \bigO(x), & & \mbox{as } x \to 0, \quad \ell \in \mathbb{N}_{>0}, \label{estimate for gfrak ell 0} \\
& \mathfrak{g}_{\ell}(x) = \bigO(\min\{1,|x+1|^{a-\ell}\}), & & \mbox{as } x \to -1, \quad \ell\in \mathbb{N}. \label{estimate for gfrak ell}
\end{align}
\end{lemma}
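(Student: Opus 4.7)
The plan is to obtain \eqref{gfrakl in lemma} from a closed-form simplification of \eqref{def of gfrakl} using the standard Stirling expansion of a monomial together with the binomial theorem, and then to read off the two estimates by inspection.

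First I would expand $k^{\ell}$ via the classical identity
$$k^{\ell} = \sum_{j=0}^{\ell} S(\ell,j)\,\frac{k!}{(k-j)!},$$
with the convention that $\frac{k!}{(k-j)!}=k(k-1)\cdots(k-j+1)$ vanishes when $k<j$. Since $S(\ell,0)=0$ for $\ell\geq 1$, substituting this into \eqref{def of gfrakl} and swapping the two finite sums gives
$$\mathfrak{g}_{\ell}(x) = \sum_{j=1}^{\ell} S(\ell,j) \sum_{k=j}^{a}\binom{a}{k}\frac{k!}{(k-j)!}\,x^{k}.$$
For $j>a$ the inner sum is empty, so effectively $j$ ranges only up to $\min(\ell,a)$. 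For $1\leq j\leq a$ and $k\geq j$ I rewrite
$$\binom{a}{k}\frac{k!}{(k-j)!} = \frac{a!}{(a-j)!}\binom{a-j}{k-j},$$
substitute $m=k-j$ and apply the binomial theorem to collapse the inner sum into $\frac{a!}{(a-j)!}\,x^{j}(1+x)^{a-j}$. This yields the compact representation
$$\mathfrak{g}_{\ell}(x) = \sum_{j=1}^{\min(\ell,a)} S(\ell,j)\,\frac{a!}{(a-j)!}\,x^{j}(1+x)^{a-j},$$
and factoring out $x(1+x)^{a-\min(\ell,a)}$ produces exactly \eqref{gfrakl in lemma}.

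The two estimates drop out immediately from the explicit formula. The overall prefactor $x$ in \eqref{gfrakl in lemma} gives \eqref{estimate for gfrak ell 0}, because the remaining sum is a polynomial in $x$ and hence bounded near $x=0$. For \eqref{estimate for gfrak ell} I split on whether $\ell\leq a$ or $\ell>a$: in the first case the factor $(1+x)^{a-\ell}$ controls the rate as $x\to -1$ while the residual polynomial is bounded near $-1$; in the second case $a-\min(\ell,a)=0$, so $\mathfrak{g}_{\ell}$ is itself a polynomial and therefore $\bigO(1)$ as $x\to -1$. Combining the two cases gives the uniform bound $\bigO(\min\{1,|x+1|^{a-\ell}\})$.

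The argument is essentially routine; the only real work is bookkeeping, namely keeping track of the ranges in which $\frac{a!}{(a-j)!}$, $\binom{a-j}{k-j}$, and the truncated inner sum are each nonzero and well-defined, and verifying that the excluded index ranges contribute nothing. No analytic or asymptotic input is needed beyond these elementary combinatorial identities, so there is no real obstacle — the only point demanding care is the clean separation of the regimes $\ell\leq a$ and $\ell>a$ when reading off \eqref{estimate for gfrak ell}.
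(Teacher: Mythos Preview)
Your argument is correct and essentially identical to the paper's: expand $k^{\ell}$ into falling factorials via Stirling numbers, swap sums, collapse the inner sum with the binomial theorem, then read off the estimates. The only omission is that \eqref{estimate for gfrak ell} is stated for all $\ell\in\mathbb{N}$ including $\ell=0$, where \eqref{gfrakl in lemma} is not available; the paper handles this separately via $\mathfrak{g}_{0}(x)=(1+x)^{a}$, which you should add as a one-line remark.
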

\begin{remark}
Since $\mathfrak{g}_{0}(x)=(1+x)^{a}$, \eqref{gfrakl in lemma} and \eqref{estimate for gfrak ell 0} do not hold for $\ell=0$.
\end{remark}
\begin{proof}
Let $\ell \in \mathbb{N}_{>0}$ be fixed. By \cite[eq. 26.8.10]{NIST}, we have
\begin{equation}\label{def of alj}
k^{\ell}= \sum_{j=1}^\ell (k)_{j} S(\ell,j), \qquad \mbox{for all } k \in \mathbb{C},
\end{equation}
where $(k)_{j}:= k(k-1)(k-2)\ldots (k-j+1)$ is the descending factorial. Substituting \eqref{def of alj} in \eqref{def of gfrakl}, we obtain
\begin{align}
\mathfrak{g}_\ell(x) & = \sum_{k=0}^a \sum_{j=1}^{\min(\ell,k)} \frac{a! S(\ell,j) x^k}{(a-k)!(k-j)!} = \sum_{j=1}^{\min(\ell,a)}\sum_{k=j}^a  \frac{a! S(\ell,j) x^k}{(a-k)!(k-j)!} \nonumber \\
& = \sum_{j=1}^{\min(\ell,a)} S(\ell,j) \frac{a!}{(a-j)!} \sum_{k=j}^a  \binom{a-j}{k-j}x^k = \sum_{j=1}^{\min(\ell,a)} S(\ell,j) \frac{a!}{(a-j)!} x^{j}(x+1)^{a-j}, \label{gfrakl in the proof}
\end{align}
which is \eqref{gfrakl in lemma}. The expansions \eqref{estimate for gfrak ell 0} and \eqref{estimate for gfrak ell} for $\ell \geq 1$ directly follows from \eqref{gfrakl in lemma}, and \eqref{estimate for gfrak ell} for $\ell=0$ follows from $\mathfrak{g}_{0}(x)=(1+x)^{a}$. 
\end{proof}
The sums $S_{1}$, $S_{2}$ and $S_{3}$ naturally involve the functions
\begin{align} \label{def of gammaell}
	\gamma_{\ell}(x) := \begin{cases}
\ds \sum_{k=0}^{a} {a \choose k} (-r)^{a-k} \bigg( \frac{x}{b} \bigg)^{\frac{k}{2b}}k^{\ell}, & x > b r^{2b}, \\
\ds \sum_{k=0}^{a} {a \choose k} (-1)^{ k } r^{a-k} \bigg( \frac{x}{b} \bigg)^{\frac{k}{2b}}k^{\ell}, & x \in (0,b r^{2b}).
\end{cases}
\end{align}
The next lemma collects some properties of $\gamma_{\ell}$.
\begin{lemma}\label{lemma:gammal}
Let $\ell \in \mathbb{N}$. The function $\gamma_{\ell}$ can be written as
\begin{align}\label{def of gammaj Stirling}
\gamma_{\ell}(x) = \begin{cases}
\ds \Big|r-\Big(\frac{x}{b}\Big)^{\frac{1}{2b}}\Big|^{a}, & \mbox{if } \ell=0, \\
\ds \Big(\frac{x}{b}\Big)^{\frac{1}{2b}} \Big|r-\Big(\frac{x}{b}\Big)^{\frac{1}{2b}}\Big|^{a} \sum_{j=1}^{\min(\ell,a)} \frac{a!S(\ell,j)}{(a-j)!} \Big(\frac{x}{b}\Big)^{\frac{j-1}{2b}} \Big( \Big(\frac{x}{b}\Big)^{\frac{1}{2b}}-r  \Big)^{-j}, & \mbox{if } \ell \geq 1.
\end{cases}
\end{align}
In particular,
\begin{align}
& \gamma_{\ell}(x) = \bigO(x^{\frac{1}{2b}}), & & \mbox{as } x \to 0, \quad \ell \in \mathbb{N}_{>0}, \label{estimate for gamma ell 0} \\
& \gamma_{\ell}(x) = \bigO(\min\{1,|x-br^{2b}|^{a-\ell}\}), & & \mbox{as } x \to br^{2b}, \quad \ell\in \mathbb{N}. \label{estimate for gamma ell}
\end{align}
\end{lemma}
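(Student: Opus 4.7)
The plan is to reduce the lemma to Lemma \ref{lemma:prop of gfrakl} via a change of variables. Writing $y := (x/b)^{1/(2b)}$, a direct factorization in \eqref{def of gammaell} gives
\begin{equation*}
\gamma_\ell(x) = \begin{cases} (-r)^a \, \mathfrak{g}_\ell(-y/r), & x > br^{2b}, \\ r^a \, \mathfrak{g}_\ell(-y/r), & x \in (0, br^{2b}), \end{cases}
\end{equation*}
since $(-r)^{a-k} y^k = (-r)^a (-y/r)^k$ in the first case and $(-1)^k r^{a-k} y^k = r^a (-y/r)^k$ in the second. The case $\ell=0$ of \eqref{def of gammaj Stirling} is then immediate from $\mathfrak{g}_0(z)=(1+z)^a$: one verifies that $(-r)^a(1-y/r)^a=(y-r)^a$ for $y>r$ and $r^a(1-y/r)^a=(r-y)^a$ for $y<r$, and both coincide with $|r-y|^a$.

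For $\ell \geq 1$, I would substitute the Stirling-number representation \eqref{gfrakl in lemma} with $z=-y/r$, distribute the prefactor $(\pm r)^a$ across the factors $(-y/r)^j$ and $(1-y/r)^{a-j}$, and simplify. In both regimes the $j$-th summand reduces to $S(\ell,j)\tfrac{a!}{(a-j)!}\, y^j (y-r)^{a-j}$ after the signs arising from $(-r)^a$, $(-y/r)^j$, and the sign of $r-y$ conspire to cancel. Pulling out the common factor $y\cdot|r-y|^a$ and rewriting $(y-r)^{a-j}=|r-y|^a(y-r)^{-j}/\text{sign}$ yields the unified expression \eqref{def of gammaj Stirling} valid for both $y>r$ and $y<r$.

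The two estimates then fall out of the explicit form \eqref{def of gammaj Stirling}. For $x \to 0$ one has $y \to 0$ while $(y-r)^{-j}$ stays uniformly bounded and $|r-y|^a$ is bounded, so the factor $y^{j-1}$ inside the sum (with $j \geq 1$) forces $\gamma_\ell(x) = \bigO(y) = \bigO(x^{1/(2b)})$, giving \eqref{estimate for gamma ell 0}. For $x \to br^{2b}$, $y \to r$, and the dominant contribution comes from $j=\min(\ell,a)$, producing $\bigO(|y-r|^{a-\min(\ell,a)})=\bigO(\min\{1,|y-r|^{a-\ell}\})$; since $y-r$ is asymptotically comparable to $(x-br^{2b})/(2b^2 r^{2b-1})$, the estimate \eqref{estimate for gamma ell} follows. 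The only real obstacle is careful sign bookkeeping when verifying that the two regime-dependent formulas collapse into a single expression; this is purely algebraic and uses no input beyond Lemma \ref{lemma:prop of gfrakl}.
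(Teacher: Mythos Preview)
Your proposal is correct and follows exactly the same route as the paper: both reduce the claim to Lemma~\ref{lemma:prop of gfrakl} via the substitution $y=(x/b)^{1/(2b)}$ (equivalently, evaluating $\mathfrak{g}_\ell$ at $-(x/(br^{2b}))^{1/(2b)}$) and then read off the explicit formula and the two estimates. The paper's version is simply more terse, stating the identity \eqref{def of gammaell in proof} and declaring the rest a straightforward consequence, whereas you have written out the sign bookkeeping and the asymptotic verifications in more detail.
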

\begin{proof}
It is easily checked that
\begin{align} \label{def of gammaell in proof}
\gamma_{\ell}(x) = \begin{cases}
\displaystyle (-r)^a \mathfrak{g}_\ell\Big( -\Big( \frac{x}{br^{2b}}\Big)^{\frac{1}{2b}} \Big) 	, & x > b r^{2b}, \smallskip \\
r^a \mathfrak{g}_\ell\Big( -\Big( \dfrac{x}{br^{2b}}\Big)^{\frac{1}{2b}} \Big) 	, & x \in (0,b r^{2b}).
\end{cases}
\end{align}
The claim is now a straightforward consequence of Lemma \ref{lemma:prop of gfrakl}. 
\end{proof}

\begin{lemma}\label{lemma:ratio of gamma}
Let $k \in \mathbb{N}$ be fixed. As $j \to + \infty$, 
\begin{align}\label{large j asymp for the ratio of Gamma}
\frac{\Gamma(\frac{2j+2\alpha+k}{2b})}{\Gamma(\frac{2j+2\alpha}{2b})} \sim \bigg( \frac{j}{b} \bigg)^{\frac{k}{2b}} \bigg( 1+ \sum_{\ell=1}^{+\infty} \frac{\mathfrak{p}_{2\ell}(k)}{j^{\ell}}\bigg),
\end{align}
where
\begin{align}\label{def of p2l}
\mathfrak{p}_{2\ell}(k) := b^\ell \binom{\frac{k}{2b}}{\ell} B_\ell^{(1+ \frac{k}{2b}) }\Big( \frac{2\alpha+k}{2b}\Big) =: \sum_{m=1}^{2\ell} \mathfrak{p}_{2\ell,m}\,k^m,
\end{align}
$\binom{\frac{k}{2b}}{\ell} := \frac{\frac{k}{2b}(\frac{k}{2b}-1)\ldots (\frac{k}{2b}-\ell+1)}{\ell!}$, and $B_{\ell}^{(k)}(x)$ is the generalized Bernoulli\footnote{$B_\ell^{(1)}(x)$ is the classical Bernoulli polynomial of degree $\ell$, and $B_\ell^{(0)}(x) = x^{\ell}$.}  polynomial of degree $\ell$ defined through the generating function
\begin{align}\label{def of gen Bernouilli}
\Big( \frac{t}{e^t-1}\Big)^k \,e^{xt}= \sum_{\ell=0}^{+\infty} B_\ell^{(k)}(x) \frac{t^n}{n!}, \qquad |t|<2\pi.
\end{align}
\end{lemma}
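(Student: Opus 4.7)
The plan is to deduce \eqref{large j asymp for the ratio of Gamma} from the classical asymptotic expansion of the ratio of two gamma functions due to Tricomi and Erd\'elyi (see e.g.\ \cite[eq.~5.11.13]{NIST}), namely
\begin{equation*}
\frac{\Gamma(z+\tilde a)}{\Gamma(z+\tilde c)} \sim z^{\tilde a-\tilde c}\sum_{\ell=0}^{+\infty} \binom{\tilde a-\tilde c}{\ell} B_\ell^{(\tilde a-\tilde c+1)}(\tilde a)\, z^{-\ell}, \qquad z \to +\infty.
\end{equation*}
First I would rewrite the ratio in the lemma as $\Gamma(\tfrac{j}{b}+\tilde a)/\Gamma(\tfrac{j}{b}+\tilde c)$ with $\tilde a := (2\alpha+k)/(2b)$ and $\tilde c := \alpha/b$, so that $\tilde a-\tilde c = k/(2b)$. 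Applying the Tricomi-Erd\'elyi expansion with $z = j/b$ then produces the desired prefactor $(j/b)^{k/(2b)}$ and, at order $\ell \geq 1$, the correction
\begin{equation*}
\binom{\tfrac{k}{2b}}{\ell}\, B_\ell^{(1+\frac{k}{2b})}\!\bigg(\frac{2\alpha+k}{2b}\bigg)\, \bigg(\frac{b}{j}\bigg)^{\ell} = \frac{\mathfrak{p}_{2\ell}(k)}{j^\ell},
\end{equation*}
which matches \eqref{large j asymp for the ratio of Gamma} together with the first equality in \eqref{def of p2l}.

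The remaining task is to establish the second equality in \eqref{def of p2l}, i.e.\ that $\mathfrak{p}_{2\ell}(k)$ is a polynomial in $k$ of degree at most $2\ell$ whose constant term vanishes. The binomial factor $\binom{k/(2b)}{\ell} = \frac{1}{\ell!}\prod_{m=0}^{\ell-1}(k/(2b)-m)$ is visibly a polynomial in $k$ of degree exactly $\ell$ that vanishes at $k=0$ as soon as $\ell \geq 1$. For the generalized Bernoulli factor, I plan to invoke the standard fact that $B_\ell^{(c)}(x)$ is a polynomial in the pair $(c,x)$ of total degree $\ell$: this follows from $B_\ell^{(c)}(x) = \sum_{s=0}^{\ell}\binom{\ell}{s}B_s^{(c)}x^{\ell-s}$ combined with the observation that $B_s^{(c)} := B_s^{(c)}(0)$ is a polynomial in $c$ of degree exactly $s$, which itself is read off from \eqref{def of gen Bernouilli} via $(t/(e^t-1))^c = \exp(c\ln(t/(e^t-1)))$ and $\ln(t/(e^t-1)) = -t/2 + O(t^2)$. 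Substituting the linear-in-$k$ expressions $c = 1+k/(2b)$ and $x = (2\alpha+k)/(2b)$ therefore yields a polynomial in $k$ of degree at most $\ell$; combined with the binomial factor, the total degree is at most $2\ell$, and the vanishing at $k=0$ is inherited from the binomial factor.

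No serious obstacle is expected: the first step is a direct application of a classical formula, and the second is a routine polynomial-degree count. The only mildly delicate ingredient is the total-degree statement for $B_\ell^{(c)}(x)$ as a polynomial in $(c,x)$, but this follows immediately from the generating function \eqref{def of gen Bernouilli}.
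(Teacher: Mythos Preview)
Your proposal is correct and follows essentially the same approach as the paper: both deduce \eqref{large j asymp for the ratio of Gamma} directly from the Tricomi--Erd\'elyi expansion \cite[eq.~5.11.13]{NIST}, with the same choice of variables (the paper's $v,p_1,p_2$ are your $z,\tilde a-\tilde c,\tilde c$). You go slightly further than the paper by supplying an explicit argument for the polynomial structure in the second equality of \eqref{def of p2l}; the paper leaves this implicit (noting only in a subsequent remark that $\mathfrak{p}_{2\ell}(0)=0$), so your degree count for $B_\ell^{(c)}(x)$ via the generating function is a welcome addition.
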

\begin{remark}
The degree $2\ell$ polynomial $\mathfrak{p}_{2\ell}$ satisfies $\mathfrak{p}_{2\ell}(0)=0$. This is consistent with the fact that for $k=0$ the left-hand side of \eqref{large j asymp for the ratio of Gamma} is $1$.
\end{remark}
\begin{proof}
The claim directly follow from \cite[eq. 5.11.13]{NIST}
\begin{align}\label{lol33}
\frac{\Gamma(v+p_{2}+p_{1})}{\Gamma(v+p_{2})} \sim v^{p_{1}} \sum_{\ell=0}^{+\infty} \binom{p_{1}}{\ell} \frac{B_{\ell}^{(p_{1}+1)}(p_{1}+p_{2})}{v^{\ell}} \qquad \mbox{as } v \to + \infty, \; p_{1},p_{2} \mbox{ fixed}.
\end{align}
\end{proof}
\begin{lemma}
As $n \to + \infty$, 
\begin{align}
& \sum_{k=0}^{a} {a \choose k} \frac{(-r)^{a-k}}{n^{\frac{k}{2b}}} \frac{\Gamma(\frac{2j+2\alpha+k}{2b})}{\Gamma(\frac{2j+2\alpha}{2b})} \sim \gamma_{0}(j/n) + \sum_{\ell=1}^{+\infty} \frac{1}{n^{\ell}} \frac{\sum_{m=1}^{2\ell}\mathfrak{p}_{2\ell,m}\gamma_{m}(j/n)}{(j/n)^{\ell}}, \label{lol1} 
\end{align}
uniformly for $j \in \{j_{+}+1,\ldots,n\}$. Furthermore, $M'$ can be chosen sufficiently large (but fixed) such that the following holds: there exists $C>0$ such that 
\begin{align}
& \bigg|\sum_{k=0}^{a}  {a \choose k} \hspace{-0.07cm} \frac{r^{a-k}(-1)^{k}}{n^{\frac{k}{2b}}} \frac{\Gamma(\frac{2j+2\alpha+k}{2b})}{\Gamma(\frac{2j+2\alpha}{2b})} \hspace{-0.04cm} - \hspace{-0.04cm} \bigg\{ \hspace{-0.05cm} \gamma_{0}(j/n) \hspace{-0.03cm} + \hspace{-0.03cm} \frac{\gamma_{2}(j/n)\hspace{-0.03cm}+\hspace{-0.03cm}(4\alpha-2b)\gamma_{1}(j/n)}{8bj} \bigg\} \bigg|   \leq C \frac{(j/n)^{\frac{1}{2b}-2}}{n^{2}}, \label{lol5}
\end{align}
for all sufficiently large $n$ and all $j \in \{M'+1,\ldots,j_{-}-1\}$.
\end{lemma}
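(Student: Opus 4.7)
The plan is to substitute the asymptotic expansion of Lemma \ref{lemma:ratio of gamma} into the $k$-sum and then exchange the order of summation, matching the resulting algebraic expression against the definition \eqref{def of gammaell} of $\gamma_{\ell}$. Concretely, for each fixed $k \in \{0,1,\ldots,a\}$,
\begin{equation*}
\frac{1}{n^{k/(2b)}}\,\frac{\Gamma(\frac{2j+2\alpha+k}{2b})}{\Gamma(\frac{2j+2\alpha}{2b})} \sim \Big(\frac{j}{bn}\Big)^{\frac{k}{2b}} \Big(1 + \sum_{\ell=1}^{\infty} \frac{\mathfrak{p}_{2\ell}(k)}{j^{\ell}}\Big),
\end{equation*}
with an error uniform in the finite set $k \in \{0,\ldots,a\}$ as $j \to \infty$. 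Multiplying by $\binom{a}{k}(-r)^{a-k}$, summing over $k$, inserting the expansion $\mathfrak{p}_{2\ell}(k) = \sum_{m=1}^{2\ell} \mathfrak{p}_{2\ell,m} k^{m}$, and recognizing
\begin{equation*}
\sum_{k=0}^{a} \binom{a}{k} (-r)^{a-k} \Big(\frac{j}{bn}\Big)^{\frac{k}{2b}} k^{m} = \gamma_{m}(j/n)
\end{equation*}
from the first branch of \eqref{def of gammaell} (valid for $j/n > br^{2b}$), the coefficient of $j^{-\ell}$ becomes $\sum_{m=1}^{2\ell} \mathfrak{p}_{2\ell,m} \gamma_{m}(j/n)$. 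Writing $j^{-\ell} = n^{-\ell}(j/n)^{-\ell}$ yields \eqref{lol1}, and uniformity on $\{j_{+}+1,\ldots,n\}$ is immediate because $j/n \geq br^{2b}/(1-\epsilon) > 0$ is bounded below in this range.

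For \eqref{lol5}, the algebraic step is identical but is carried out in the regime $j/n < br^{2b}$, so $(-1)^{k}r^{a-k}$ appears in place of $(-r)^{a-k}$ and the second branch of \eqref{def of gammaell} is used. I would truncate Lemma \ref{lemma:ratio of gamma} at $\ell=1$, keeping a remainder of order $j^{-2}$ uniformly in $k \in \{0,\ldots,a\}$, and compute the $1/j$ coefficient explicitly. From the generating function \eqref{def of gen Bernouilli} one obtains $B_{1}^{(c)}(x) = x - c/2$, whence
\begin{equation*}
\mathfrak{p}_{2}(k) = b \binom{\frac{k}{2b}}{1} B_{1}^{(1+\frac{k}{2b})}\!\Big(\frac{k+2\alpha}{2b}\Big) = \frac{k(k+4\alpha-2b)}{8b},
\end{equation*}
so that $\mathfrak{p}_{2,1} = (4\alpha-2b)/(8b)$ and $\mathfrak{p}_{2,2} = 1/(8b)$. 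This reproduces exactly the numerator $\gamma_{2}(j/n) + (4\alpha-2b)\gamma_{1}(j/n)$ announced in \eqref{lol5}.

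The delicate step, and the main obstacle, is to bound the remainder by $C(j/n)^{1/(2b)-2}/n^{2}$ uniformly on $\{M'+1,\ldots,j_{-}-1\}$, because this estimate must remain small even when $j/n$ is as small as $M'/n$. The key observation is that $\mathfrak{p}_{2\ell}(0) = 0$ for every $\ell \geq 1$: the $k=0$ summand in the LHS of \eqref{lol5} equals $r^{a}$ exactly and matches its formal expansion, so the $k=0$ contribution to the remainder is identically zero, and only indices $k \geq 1$ contribute to the correction terms. For $j \leq j_{-}-1$ one has $j/n \leq br^{2b}/(1+\epsilon)$ bounded, hence $(j/(bn))^{k/(2b)} \leq C(j/n)^{1/(2b)}$ whenever $k \geq 1$. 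Combined with the estimate $\gamma_{m}(x) = \bigO(x^{1/(2b)})$ from \eqref{estimate for gamma ell 0}, applied both to the $\mathfrak{p}_{4}/j^{2}$ contribution and to the $j^{-2}$ truncation remainder in Lemma \ref{lemma:ratio of gamma}, both errors are of order $\bigO((j/n)^{1/(2b)}/j^{2}) = \bigO((j/n)^{1/(2b)-2}/n^{2})$, matching \eqref{lol5}. Choosing $M'$ sufficiently large ensures the Lemma \ref{lemma:ratio of gamma} error bounds apply uniformly on the entire range; without the vanishing $\mathfrak{p}_{2\ell}(0)=0$ one would be stuck with a $\bigO(j^{-2})$ contribution at $k=0$ that fails to decay in $n$ when $j$ is close to $M'+1$.
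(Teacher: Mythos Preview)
Your proof is correct and follows essentially the same approach as the paper: substitute the expansion of Lemma~\ref{lemma:ratio of gamma} into the $k$-sum, recognise the resulting coefficients as $\gamma_{m}(j/n)$ via \eqref{def of gammaell}, and for \eqref{lol5} control the $\ell\geq 2$ remainder using the $\bigO(x^{1/(2b)})$ behaviour of $\gamma_{m}$ near $0$ from \eqref{estimate for gamma ell 0}. Your write-up is in fact somewhat more explicit than the paper's, spelling out the computation of $\mathfrak{p}_{2}(k)$ and the reason (namely $\mathfrak{p}_{2\ell}(0)=0$) why the $k=0$ term contributes no error, whereas the paper compresses this into a single sentence invoking \eqref{lol37} and \eqref{estimate for gamma ell 0}.
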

\begin{proof}
Since $a$ is fixed, Lemma \ref{lemma:ratio of gamma} implies that
\begin{align}\label{lol38}
& \sum_{k=0}^{a} {a \choose k} \frac{(-r)^{a-k}}{n^{\frac{k}{2b}}} \frac{\Gamma(\frac{2j+2\alpha+k}{2b})}{\Gamma(\frac{2j+2\alpha}{2b})} \sim \sum_{k=0}^{a} {a \choose k} (-r)^{a-k} \bigg( \frac{j/n}{b} \bigg)^{\frac{k}{2b}} \bigg( 1+ \sum_{\ell=1}^{+\infty} \frac{\mathfrak{p}_{2\ell}(k)}{j^{\ell}}\bigg),
\end{align}
as $j \to + \infty$. Since $j/n \in (br^{2b},1]$ for all $j \in \{j_{+}+1,\ldots,n\}$, the expansion \eqref{lol1} directly follows from \eqref{lol38} and the definition \eqref{def of gammaell} of $\gamma_{\ell}$. In the same way, but using now the definition \eqref{def of gammaell} of $\gamma_{\ell}(x)$ for $x \in (0,br^{2b})$, we infer that
\begin{align}\label{lol37}
& \sum_{k=0}^{a} {a \choose k} \frac{r^{a-k}(-1)^{k}}{n^{\frac{k}{2b}}} \frac{\Gamma(\frac{2j+2\alpha+k}{2b})}{\Gamma(\frac{2j+2\alpha}{2b})} \sim \gamma_{0}(j/n) + \sum_{\ell=1}^{+\infty} \frac{\sum_{m=1}^{2\ell}\mathfrak{p}_{2\ell,m}\gamma_{m}(j/n)}{j^{\ell}},
\end{align}
as $j \to + \infty$ uniformly for $n$ such that $j/n \in (0,br^{2b})$. The estimate \eqref{lol5} then follows from \eqref{lol37} and the behavior \eqref{estimate for gamma ell 0}.
\end{proof}

For the large $n$ analysis of $S_{3},S_{1},S_{2}$, we will need to approximate various large sums involving functions with singularities; for this we will also rely on the following lemma from \cite{Charlier 2d gap}.
\begin{lemma}\label{lemma:Riemann sum NEW}(\cite[Lemma 3.4]{Charlier 2d gap})
Let $A,a_{0}$, $B,b_{0}$ be bounded function of $n \in \{1,2,\ldots\}$, such that 
\begin{align*}
& a_{n} := An + a_{0} \qquad \mbox{ and } \qquad b_{n} := Bn + b_{0}
\end{align*}
are integers. Assume also that $B-A$ is positive and remains bounded away from $0$. Let $f$ be a function independent of $n$, and which is $C^{4}([\min\{\frac{a_{n}}{n},A\},\max\{\frac{b_{n}}{n},B\}])$ for all $n\in \{1,2,\ldots\}$. Then as $n \to + \infty$, we have
\begin{align}
&  \sum_{j=a_{n}}^{b_{n}}f(\tfrac{j}{n}) = n \int_{A}^{B}f(x)dx + \frac{(1-2a_{0})f(A)+(1+2b_{0})f(B)}{2}  \nonumber \\
& + \frac{(-1+6a_{0}-6a_{0}^{2})f'(A)+(1+6b_{0}+6b_{0}^{2})f'(B)}{12n}+ \frac{(-a_{0}+3a_{0}^{2}-2a_{0}^{3})f''(A)+(b_{0}+3b_{0}^{2}+2b_{0}^{3})f''(B)}{12n^{2}} \nonumber \\
& + \bigO \bigg( \frac{\mathfrak{m}_{A}(f''')+\mathfrak{m}_{B}(f''')}{n^{3}} + \sum_{j=a_{n}}^{b_{n}-1} \frac{\mathfrak{m}_{j,n}(f'''')}{n^{4}} \bigg), \label{sum f asymp gap NEW}
\end{align}
where, for a given function $g$ continuous on $[\min\{\frac{a_{n}}{n},A\},\max\{\frac{b_{n}}{n},B\}]$,
\begin{align*}
\mathfrak{m}_{A}(g) := \max_{x \in [\min\{\frac{a_{n}}{n},A\},\max\{\frac{a_{n}}{n},A\}]}|g(x)|, \qquad \mathfrak{m}_{B}(g) := \max_{x \in [\min\{\frac{b_{n}}{n},B\},\max\{\frac{b_{n}}{n},B\}]}|g(x)|,
\end{align*}
and for $j \in \{a_{n},\ldots,b_{n}-1\}$, $\mathfrak{m}_{j,n}(g) := \max_{x \in [\frac{j}{n},\frac{j+1}{n}]}|g(x)|$.
\end{lemma}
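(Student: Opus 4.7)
The statement is a quantitative Euler--Maclaurin formula for sums of the form $\sum_{j=a_{n}}^{b_{n}} f(j/n)$ where the endpoints $a_{n}=An+a_{0}$ and $b_{n}=Bn+b_{0}$ are shifted from the ``grid positions'' $An$, $Bn$ by bounded amounts $a_{0},b_{0}$. My plan is to apply the classical Euler--Maclaurin formula to the function $h(t):=f(t/n)$, and then to Taylor-expand every quantity that depends on the shifted endpoints $a_{n}/n=A+a_{0}/n$ and $b_{n}/n=B+b_{0}/n$ to sufficient order, finally reorganizing the resulting expression in powers of $1/n$.

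\textbf{Step 1 (Euler--Maclaurin).} I would invoke the standard formula
\begin{equation*}
\sum_{j=a_{n}}^{b_{n}} h(j) = \int_{a_{n}}^{b_{n}} h(t)\,dt + \frac{h(a_{n})+h(b_{n})}{2} + \frac{1}{12}\bigl( h'(b_{n})-h'(a_{n})\bigr) + R,
\end{equation*}
where the remainder $R$ is an integral of a periodic Bernoulli polynomial against $h''''$, and so is bounded by a constant times $\sum_{j=a_{n}}^{b_{n}-1}\max_{t\in[j,j+1]}|h''''(t)|$. Setting $h(t)=f(t/n)$ gives $h^{(k)}(t)=n^{-k}f^{(k)}(t/n)$, which already produces the $\sum_{j}\mathfrak{m}_{j,n}(f'''')/n^{4}$ contribution to the error. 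The integral becomes $n\int_{a_{n}/n}^{b_{n}/n} f(s)\,ds$ after the substitution $s=t/n$.

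\textbf{Step 2 (Taylor expand near the endpoints).} Writing $a_{n}/n=A+a_{0}/n$, $b_{n}/n=B+b_{0}/n$ and splitting
\begin{equation*}
n\!\int_{A+a_{0}/n}^{B+b_{0}/n}\!\!f = n\!\int_{A}^{B}\!\!f + n\!\int_{B}^{B+b_{0}/n}\!\!f - n\!\int_{A}^{A+a_{0}/n}\!\!f,
\end{equation*}
I would Taylor-expand each boundary integral up to order $n^{-3}$, producing the moments $b_{0}f(B), \frac{b_{0}^{2}}{2n}f'(B), \frac{b_{0}^{3}}{6n^{2}}f''(B)$ at $B$ and the analogous terms at $A$, with an $\bigO(\mathfrak{m}_{A}(f''')+\mathfrak{m}_{B}(f'''))/n^{3})$ remainder. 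The same kind of Taylor expansion applied to $f(a_{n}/n), f(b_{n}/n)$ (to order $n^{-2}$) and to $f'(a_{n}/n), f'(b_{n}/n)$ (to order $n^{-1}$) yields all the remaining expressions that enter the claimed asymptotics.

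\textbf{Step 3 (Reassemble by orders).} Collecting everything in powers of $1/n$ and grouping the coefficients of $f(A), f(B), f'(A), f'(B), f''(A), f''(B)$ separately, the leading order instantly gives $n\int_{A}^{B}f(s)\,ds+\tfrac{1}{2}\bigl((1-2a_{0})f(A)+(1+2b_{0})f(B)\bigr)$. At order $1/n$, the coefficient of $f'(A)$ equals $-\tfrac{a_{0}^{2}}{2}+\tfrac{a_{0}}{2}-\tfrac{1}{12}=\tfrac{-1+6a_{0}-6a_{0}^{2}}{12}$ and the coefficient of $f'(B)$ equals $\tfrac{b_{0}^{2}}{2}+\tfrac{b_{0}}{2}+\tfrac{1}{12}=\tfrac{1+6b_{0}+6b_{0}^{2}}{12}$, matching the statement; the $1/n^{2}$ coefficients are obtained similarly by combining $\tfrac{\pm c^{3}}{6}+\tfrac{c^{2}}{4}\pm\tfrac{c}{12}$. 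All leftover terms lie in the claimed error. The main obstacle is purely bookkeeping: one must keep several simultaneous Taylor expansions aligned and verify that the combinatorial identities $\tfrac{-1+6a_{0}-6a_{0}^{2}}{12}$ and $\tfrac{-a_{0}+3a_{0}^{2}-2a_{0}^{3}}{12}$ emerge correctly, and one must argue that the (non-alternating) Euler--Maclaurin remainder is indeed controlled by $\sum_{j}\mathfrak{m}_{j,n}(f'''')/n^{4}$ uniformly even when $a_{0},b_{0}$ depend mildly on $n$.
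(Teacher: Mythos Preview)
The paper does not prove this lemma at all: it is imported verbatim from \cite[Lemma 3.4]{Charlier 2d gap} and stated without proof. Your Euler--Maclaurin argument is the natural and correct way to establish it, and your Step~3 arithmetic for the $1/n$ and $1/n^{2}$ coefficients checks out; there is nothing to compare against in this paper.
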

\section{Global analysis: large $n$ asymptotics of $S_{0}$, $S_{3}$ and $S_{1}$}\label{section: S0S3S1}
As mentioned earlier, we will analyze the sums \eqref{def of S0}--\eqref{def of S3} in order of increasing difficulty: first $S_{0}$, then $S_{3}$, then $S_{1}$, and finally $S_{2}$. In this section we focus on $S_{0}$, $S_{3}$ and $S_{2}$. We defer the analysis of $S_{2}$ to the next section.
\begin{lemma}\label{lemma: S0}
As $n \to + \infty$, 
\begin{align}\label{asymp of S0}
S_{0} = M' \ln (r^{a}e^{u}) + \bigO(n^{-\frac{1}{2b}} ).
\end{align}
\end{lemma}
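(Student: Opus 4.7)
The crucial observation is that $M'$ is a fixed integer (independent of $n$), so $S_0$ is a sum of a bounded number of terms whose asymptotics I can analyze term by term. For each $j \in \{1, \ldots, M'\}$ and each $k \in \{0, 1, \ldots, a\}$, the parameter $\tilde{a}_{j,k} := \frac{2j+2\alpha+k}{2b}$ lies in a fixed bounded set, while $nr^{2b} \to +\infty$.

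First, I would invoke the standard large-$z$ asymptotic of the incomplete gamma function for bounded parameter,
$$1 - \frac{\gamma(\tilde{a}_{j,k}, z)}{\Gamma(\tilde{a}_{j,k})} = \frac{\Gamma(\tilde{a}_{j,k}, z)}{\Gamma(\tilde{a}_{j,k})} = \mathcal{O}\bigl( z^{\tilde{a}_{j,k}-1} e^{-z} \bigr),$$
evaluated at $z = nr^{2b}$. Since $r>0$ and $\tilde{a}_{j,k}$ is bounded uniformly in $j \leq M'$ and $0 \leq k \leq a$, this shows
$$1 + \bigl( (-1)^a e^u - 1 \bigr) \frac{\gamma(\tilde{a}_{j,k}, nr^{2b})}{\Gamma(\tilde{a}_{j,k})} = (-1)^a e^u + \mathcal{O}(e^{-cn})$$
for some constant $c>0$, uniformly in the (finite) range of $j$ and $k$.

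Next, I would separate the $k=0$ contribution in the outer $k$-sum, which gives exactly $(-r)^a$ (the $\Gamma$-ratio is $1$), from the terms $k \geq 1$, each of which carries a factor $n^{-k/(2b)}$ multiplied by a bounded $\Gamma$-ratio. The latter thus contribute $\mathcal{O}(n^{-\frac{1}{2b}})$. Combining, the argument of the logarithm equals
$$(-1)^a e^u \cdot (-r)^a + \mathcal{O}(n^{-\frac{1}{2b}}) = r^a e^u + \mathcal{O}(n^{-\frac{1}{2b}}),$$
where the exponentially small correction from the first step has been absorbed. Since $r^a e^u$ is a fixed positive constant, one can expand $\ln(r^a e^u + \mathcal{O}(n^{-\frac{1}{2b}})) = \ln(r^a e^u) + \mathcal{O}(n^{-\frac{1}{2b}})$, and summing the resulting $M'$ identical contributions (each with its own $\mathcal{O}(n^{-\frac{1}{2b}})$ error absorbed into the final one, since $M'$ is fixed) yields \eqref{asymp of S0}. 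I do not anticipate any real obstacle: this is the benign regime where $j$ stays bounded while $n \to \infty$, and everything reduces to a combination of elementary limits, which is why the authors flag $S_0$ as the easiest of the four sums.
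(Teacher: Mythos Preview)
Your proof is correct and follows essentially the same approach as the paper: use the large-$z$ asymptotics of $\gamma(\tilde a,z)$ for bounded $\tilde a$ (the paper invokes its Lemma~\ref{lemma:various regime of gamma}) to reduce the bracket to $(-1)^a e^u + \mathcal{O}(e^{-cn})$, then observe that only the $k=0$ term survives at order $1$ while the $k\geq 1$ terms are $\mathcal{O}(n^{-1/(2b)})$. The paper's write-up is terser but the logic is identical.
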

\begin{proof}
Using \eqref{def of S0} and Lemma \ref{lemma:various regime of gamma}, we obtain 
\begin{align*}
S_{0} & = \sum_{j=1}^{M'} \ln \bigg( \sum_{k=0}^{a} {a \choose k} \frac{(-r)^{a-k}}{n^{\frac{k}{2b}}} \frac{\Gamma(\frac{2j+2\alpha+k}{2b})}{\Gamma(\frac{2j+2\alpha}{2b})} \bigg[ 1+((-1)^{a}e^{u}-1)[1 + \bigO(e^{-cn})] \bigg] \bigg), \\
& = \sum_{j=1}^{M'} \ln \bigg( \sum_{k=0}^{a} {a \choose k} \frac{(-1)^{k}r^{a-k}e^{u}}{n^{\frac{k}{2b}}} \frac{\Gamma(\frac{2j+2\alpha+k}{2b})}{\Gamma(\frac{2j+2\alpha}{2b})} \bigg) + \bigO(e^{-cn}), \qquad \mbox{as } n \to + \infty.
\end{align*}
Since $M'$ is fixed, only the $(k=0)$-terms contribute to order $1$ in the large $n$ asymptotics of $S_{0}$; the other terms are $\bigO(n^{-\frac{1}{2b}})$. 
\end{proof}
Recall that $S_{1},S_{3}$ are given by \eqref{def of S1} and \eqref{def of S3}. Following the approach of \cite{Charlier 2d jumps, Charlier 2d gap}, we define
\begin{align} \label{def of theta n eps}
\theta_{+}^{(n,\epsilon)} = \bigg( \frac{b n r^{2b}}{1-\epsilon}-\alpha \bigg)-\bigg\lfloor \frac{b n r^{2b}}{1-\epsilon}-\alpha \bigg\rfloor, \qquad \theta_{-}^{(n,\epsilon)} = \bigg\lceil \frac{b n r^{2b}}{1+\epsilon}-\alpha \bigg\rceil-\bigg( \frac{b n r^{2b}}{1+\epsilon}-\alpha \bigg),
\end{align}
and for $j=1,\ldots,n$ and $k=0,1,\ldots,a$, we also define
\begin{align}
& a_{j}:=\frac{j+\alpha}{b}, & & \lambda_{j} := \frac{bnr^{2b}}{j+\alpha}, & & \eta_{j} := (\lambda_{j}-1)\sqrt{\frac{2 (\lambda_{j}-1-\ln \lambda_{j})}{(\lambda_{j}-1)^{2}}}, \label{def of aj} \\
& a_{j,k} := \frac{2j+2\alpha+k}{2b}, & & \lambda_{j,k} := \frac{bnr^{2b}}{j+\alpha+\frac{k}{2}}, & & \eta_{j,k} := (\lambda_{j,k}-1)\sqrt{\frac{2 (\lambda_{j,k}-1-\ln \lambda_{j,k})}{(\lambda_{j,k}-1)^{2}}}. \label{def of ajk}
\end{align}
\begin{lemma}\label{lemma: S3}
As $n \to + \infty$, 
\begin{align*}
 S_{3} & = n  \int_{\frac{br^{2b}}{1-\epsilon}}^{1} a \ln \bigg( \bigg( \frac{x}{b} \bigg)^{\frac{1}{2b}}-r \bigg)dx + a \frac{(2\alpha + 2\theta_{+}^{(n,\epsilon)}-1)\ln(r(1-\epsilon)^{-\frac{1}{2b}}-r)+\ln(b^{-\frac{1}{2b}}-r)}{2} \\
& + \frac{a}{4}\bigg( \frac{1-a}{(br^{2b})^{-\frac{1}{2b}}-1}-\frac{1-a}{(1-\epsilon)^{-\frac{1}{2b}}-1}+(a-2b+4\alpha)\ln \bigg( \frac{(br^{2b})^{-\frac{1}{2b}}-1}{(1-\epsilon)^{-\frac{1}{2b}}-1} \bigg) \bigg) + \bigO(n^{-1}).
\end{align*}
\end{lemma}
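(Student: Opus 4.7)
For $j \in \{j_{+}+1,\ldots,n\}$ the ratio $\lambda_{j,k} = bnr^{2b}/(j+\alpha+\tfrac{k}{2})$ satisfies $\lambda_{j,k} \le 1-\tfrac{\epsilon}{2}$ for all $k \in \{0,\ldots,a\}$ and all $n$ large. Standard uniform asymptotics of the incomplete gamma function (the same used in the proof of Lemma \ref{lemma: S0}) therefore give $\gamma(a_{j,k},nr^{2b})/\Gamma(a_{j,k}) = O(e^{-cn})$ uniformly, so the bracket in \eqref{def of S3} is $1+O(e^{-cn})$ and the factor $((-1)^{a}e^{u}-1)$ drops out entirely. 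Inserting the expansion \eqref{lol1} into \eqref{def of S3} and expanding the logarithm using that $\gamma_{0}(j/n)=((j/(bn))^{1/(2b)}-r)^{a}$ is bounded away from $0$ on this range yields $S_{3} = \mathcal{S}^{(0)}+\mathcal{S}^{(1)}+O(n^{-1})$, where
\begin{equation*}
\mathcal{S}^{(0)} := \sum_{j=j_{+}+1}^{n} a \ln\bigl((j/(bn))^{1/(2b)}-r\bigr), \qquad \mathcal{S}^{(1)} := \frac{1}{n} \sum_{j=j_{+}+1}^{n} \frac{\gamma_{2}(j/n)+(4\alpha-2b)\gamma_{1}(j/n)}{8b\,(j/n)\,\gamma_{0}(j/n)}.
\end{equation*}

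\textbf{Evaluating the two sums.} To $\mathcal{S}^{(0)}$ I would apply Lemma \ref{lemma:Riemann sum NEW} with $f(x)=a\ln((x/b)^{1/(2b)}-r)$, $A=br^{2b}/(1-\epsilon)$, $B=1$, $a_{0}=1-\theta_{+}^{(n,\epsilon)}-\alpha$, $b_{0}=0$. The leading term $n\int_{A}^{1} f \, dx$ is the first integral of the claim, and the order-one term $\tfrac{(1-2a_{0})f(A)+f(1)}{2}$ in \eqref{sum f asymp gap NEW} becomes $\tfrac{a}{2}[(2\alpha+2\theta_{+}^{(n,\epsilon)}-1)\ln(r(1-\epsilon)^{-1/(2b)}-r)+\ln(b^{-1/(2b)}-r)]$, matching the second line of the statement; higher-order terms in \eqref{sum f asymp gap NEW} are $O(n^{-1})$. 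For $\mathcal{S}^{(1)}$ a single leading-order Riemann sum suffices and produces $\int_{A}^{1}\frac{\gamma_{2}(x)+(4\alpha-2b)\gamma_{1}(x)}{8b\,x\,\gamma_{0}(x)}\,dx + O(n^{-1})$. Changing variable to $y=(x/b)^{1/(2b)}$ (so $dx/x = 2b\,dy/y$), and using Lemma \ref{lemma:gammal} to get $\gamma_{1}/\gamma_{0}=ay/(y-r)$ and $\gamma_{2}/\gamma_{0}=ay/(y-r)+a(a-1)y^{2}/(y-r)^{2}$ together with the identity $y/(y-r)^{2}=1/(y-r)+r/(y-r)^{2}$, the integrand collapses to $\tfrac{a(a-2b+4\alpha)}{4(y-r)}-\tfrac{a(a-1)r}{4(y-r)^{2}}$. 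Antidifferentiating, and then using $r/(b^{-1/(2b)}-r)=1/((br^{2b})^{-1/(2b)}-1)$ at the upper endpoint (and the analogous identity at the lower endpoint), reproduces the third line of the statement.

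\textbf{Main obstacle.} There is no delicate analytic transition here: on this range $\lambda_{j,k}$ is uniformly bounded away from $1$, so one never needs the subtle uniform asymptotics of $\gamma(\widetilde{a},\lambda\widetilde{a})$ at $\lambda\approx 1$ (which are postponed to $S_{2}$), and $\gamma_{0}(j/n)$ is bounded away from $0$, so the logarithm is tame. The actual work is purely bookkeeping, namely matching the boundary values of Lemma \ref{lemma:Riemann sum NEW} to the second line of the statement and performing the closed-form antiderivative for $\mathcal{S}^{(1)}$. The latter is what makes it necessary to keep the $\ell=1$ correction in \eqref{lol1} rather than only the leading $\gamma_{0}(j/n)$ term.
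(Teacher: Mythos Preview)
Your approach is correct and essentially identical to the paper's: drop the bracket via uniform incomplete-gamma asymptotics, expand using \eqref{lol1} to order $\ell=1$, then apply Lemma \ref{lemma:Riemann sum NEW} with the same $A,B,a_{0},b_{0}$ and compute $\int f_{1}$ explicitly. Two small slips to fix when you write it up: the integrand in the $y$-variable should read $\tfrac{a(a-2b+4\alpha)}{4(y-r)}+\tfrac{a(a-1)r}{4(y-r)^{2}}$ (plus, not minus---the antiderivative $-r/(y-r)$ then gives the correct $\tfrac{1-a}{(br^{2b})^{-1/(2b)}-1}$ sign), and the asymptotics you need here are the \emph{uniform} ones of Lemma \ref{lemma: uniform}(ii) (valid as $a_{j,k}\to\infty$ with $\lambda_{j,k}$ in a compact subset of $(0,1)$), not the fixed-$\tilde{a}$ estimate used in Lemma \ref{lemma: S0}.
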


\begin{proof}
Using \eqref{def of S3} and Lemma \ref{lemma: uniform} (ii) with $a$ and $\lambda$ replaced by $a_{j,k}$ and $\lambda_{j,k}$ respectively, where $j \in \{j_{+}+1,\ldots,n\}$ and $k\in \{0,\ldots,a\}$, we obtain
\begin{align}\label{lol9}
S_{3} & = \sum_{j=j_{+}+1}^{n} \ln \bigg( \sum_{k=0}^{a} {a \choose k} \frac{(-r)^{a-k}}{n^{\frac{k}{2b}}} \frac{\Gamma(\frac{2j+2\alpha+k}{2b})}{\Gamma(\frac{2j+2\alpha}{2b})} \bigg[ 1+((-1)^{a}e^{u}-1)\bigO(e^{-\frac{a_{j,k}\eta_{j,k}^{2}}{2}}) \bigg] \bigg),
\end{align}
as $n \to + \infty$. It is easy to check from \eqref{def of ajk} that $c_{1} n \leq  a_{j,k} \leq c_{1}'n$, $c_{2} \leq |\lambda_{j,k}-1| \leq c_{2}'$ and $c_{3} \leq \eta_{j,k}^{2} \leq c_{3}'$ hold for some positive constants $\{c_{j},c_{j}'\}_{j=1}^{3}$, for all $n$ sufficiently large, for all $j \in \{j_{+}+1,\ldots,n\}$ and for all $k \in \{0,\ldots,a\}$. Thus
\begin{align}\label{lol2}
S_{3} & = \sum_{j=j_{+}+1}^{n} \ln \bigg( \sum_{k=0}^{a} {a \choose k} \frac{(-r)^{a-k}}{n^{\frac{k}{2b}}} \frac{\Gamma(\frac{2j+2\alpha+k}{2b})}{\Gamma(\frac{2j+2\alpha}{2b})} \bigg) + \bigO(e^{-cn}), \qquad \mbox{as } n \to + \infty.
\end{align}
To complete the proof of this lemma, we need the following weaker version of \eqref{lol1}:
\begin{align*}
\sum_{k=0}^{a} {a \choose k} \frac{(-r)^{a-k}}{n^{\frac{k}{2b}}} \frac{\Gamma(\frac{2j+2\alpha+k}{2b})}{\Gamma(\frac{2j+2\alpha}{2b})} = \gamma_{0}(j/n) + \frac{1}{n}\frac{\gamma_{2}(j/n)+(4\alpha-2b)\gamma_{1}(j/n)}{8bj/n} + \bigO(n^{-2}),
\end{align*}
as $n \to +\infty$ and simultaneously $j \in \{j_{+}+1,\ldots,n\}$. Note from \eqref{def of jminus and jplus} that $j/n$ lies in $(br^{2b},1]$ and remains bounded away from $br^{2b}$ as $n \to +\infty$ and simultaneously $j \in \{j_{+}+1,\ldots,n\}$; in particular $\gamma_{0}(j/n)$ remains bounded away from $0$. Hence, by substituting the above expansion in \eqref{lol2} and using \eqref{def of gammaj Stirling} with $\ell=0,1,2$, we obtain after a computation that
\begin{align}
& S_{3} = \Sigma_{0}+\frac{1}{n}\Sigma_{1}+\bigO(n^{-1}), \qquad \Sigma_{\ell} := \sum_{j=j_{+}+1}^{n}f_{\ell}(j/n), \; \ell=0,1, \label{def of f0 and f1}
\\
& f_{0}(x) :=\ln \gamma_0(x)= a \ln \bigg( \bigg(\frac{x}{b}\bigg)^{\frac{1}{2b}}-r \bigg), \nonumber \\
& f_{1}(x) := \frac{\gamma_2(x)+(4\alpha-2b) \gamma_1(x)}{8bx\,\gamma_0(x)} = \frac{a(\frac{x}{b})^{\frac{1}{2b}}}{8bx((\frac{x}{b})^{\frac{1}{2b}}-r)^{2}}\bigg( (a+4\alpha-2b) \bigg(\frac{x}{b}\bigg)^{\frac{1}{2b}} - (1+4\alpha-2b)r \bigg), \nonumber
\end{align}
where we have also used that $\ln(1+x)=x+\bigO(x^{2})$ as $x \to 0$. From Lemma \ref{lemma:Riemann sum NEW} (with $A=\frac{br^{2b}}{1-\epsilon}$, $a_{0} = 1-\alpha-\theta_{+}^{(n,\epsilon)}$, $B=1$ and $b_{0}=0$), we infer that
\begin{align*}
\Sigma_{\ell} = n \int_{\frac{br^{2b}}{1-\epsilon}}^{1}f_{\ell}(x)dx + \frac{(2\alpha+2\theta_{+}^{(n,\epsilon)}-1)f_{\ell}(\frac{br^{2b}}{1-\epsilon})+f_{\ell}(1)}{2} + \bigO(n^{-1}), \qquad \ell=0,1,
\end{align*}
as $n \to + \infty$. We then obtain the claim for $S_{3}$ after a computation using the simplification
\begin{align*}
\int_{\frac{br^{2b}}{1-\epsilon}}^{1}f_{1}(x)dx = \frac{a}{4}\bigg( \frac{1-a}{(br^{2b})^{-\frac{1}{2b}}-1}-\frac{1-a}{(1-\epsilon)^{-\frac{1}{2b}}-1}+(a-2b+4\alpha)\ln \bigg( \frac{(br^{2b})^{-\frac{1}{2b}}-1}{(1-\epsilon)^{-\frac{1}{2b}}-1} \bigg) \bigg).
\end{align*}
\end{proof}

The large $n$ asymptotics of $S_{1}$ are harder to obtain than those of $S_{3}$. The main reason for it is that $S_{1}$ involves small $j$'s, and for such $j$'s the quantities $\gamma_{\ell}(j/n)$ have a singular behavior, see \eqref{estimate for gamma ell 0}. This is also the reason why the error term in Lemma \ref{lemma: S1} is more complicated than in Lemma \ref{lemma: S3}.

\begin{lemma}\label{lemma: S1}
As $n \to + \infty$, 
\begin{align*}
 S_{1} &= n  \int_{0}^{\frac{br^{2b}}{1+\epsilon}} \bigg(u + a \ln \bigg( r-\bigg( \frac{x}{b} \bigg)^{\frac{1}{2b}} \bigg)dx - M' \ln (r^{a}e^{u}) \\
&  + \frac{-(u+a\ln r)+(2\theta_{-}^{(n,\epsilon)} -1 - 2\alpha)\big(u + a \ln(r-r(1+\epsilon)^{-\frac{1}{2b}})}{2} \\
& + \frac{a}{4}\bigg( \frac{a-1}{(1+\epsilon)^{\frac{1}{2b}}-1}+(a-2b+4\alpha)\ln \big( 1-(1+\epsilon)^{-\frac{1}{2b}} \big) \bigg) + \bigO\big( n^{-\frac{1}{2b}}+n^{-1} \big).
\end{align*}
\end{lemma}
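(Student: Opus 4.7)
The plan is to follow the same overall strategy as the proof of Lemma \ref{lemma: S3} for $S_{3}$, but with additional care near the lower endpoint of the sum, which corresponds to the singular point $x=0$ of the functions $\gamma_{\ell}$ (as captured by \eqref{estimate for gamma ell 0}). The first reduction uses that for every $j \in \{M'+1,\ldots,j_{-}-1\}$ and $k \in \{0,\ldots,a\}$, the parameter $\lambda_{j,k}$ stays bounded above $1$; hence the relevant regime of Lemma \ref{lemma:various regime of gamma} gives $\gamma(a_{j,k},nr^{2b})/\Gamma(a_{j,k}) = 1 + O(e^{-cn})$ uniformly. The bracket in \eqref{def of S1} then collapses to $(-1)^{a} e^{u} + O(e^{-cn})$, and using $(-1)^{a}(-r)^{a-k}=(-1)^{k}r^{a-k}$ yields
\[
S_{1} = (j_{-}-1-M')\, u + \sum_{j=M'+1}^{j_{-}-1}\ln\!\bigg(\sum_{k=0}^{a}\binom{a}{k}\frac{(-1)^{k}r^{a-k}}{n^{k/(2b)}}\frac{\Gamma(a_{j,k})}{\Gamma(a_{j,0})}\bigg) + O(e^{-cn}).
\]

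Next, I would invoke the expansion \eqref{lol5}. Since $\gamma_{0}(j/n) = (r - (j/(nb))^{1/(2b)})^{a}$ is bounded below by a positive constant throughout this range (because $j/n \leq br^{2b}/(1+\epsilon)$), I can safely take the logarithm and obtain
\[
S_{1} = (j_{-}-1-M')\, u + \Sigma_{0}+\frac{1}{n}\Sigma_{1}+O\!\Big(n^{-1/(2b)}\Big),
\]
with $\Sigma_{\ell} := \sum_{j=M'+1}^{j_{-}-1} f_{\ell}(j/n)$, $f_{0}(x)=u+\ln \gamma_{0}(x)$ and $f_{1}(x) = \frac{\gamma_{2}(x)+(4\alpha-2b)\gamma_{1}(x)}{8bx\,\gamma_{0}(x)}$, where the error $O(n^{-1/(2b)})$ comes from estimating $\sum_{j}(j/n)^{1/(2b)-2}/n^{2}$ via a comparison with the integral of $x^{1/(2b)-2}$ on $[M'/n, br^{2b}/(1+\epsilon)]$ (the contribution is dominated by the left endpoint).

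The heart of the argument is then the Riemann-sum approximation of $\Sigma_{0}$ and $\Sigma_{1}$. The main obstacle is that $f_{0}$ and $f_{1}$ are not $C^{4}$ at $x=0$ (their derivatives feature $x^{1/(2b)-1}$ and $x^{1/(2b)-2}$), so Lemma \ref{lemma:Riemann sum NEW} cannot be applied naively with $A=0$. I plan to overcome this by fixing a small positive constant $A_{0}$, splitting each sum at $j = \lfloor A_{0} n\rfloor$, applying Lemma \ref{lemma:Riemann sum NEW} to the bulk piece on $[A_{0}, br^{2b}/(1+\epsilon)]$ (where the boundary contribution at $B = br^{2b}/(1+\epsilon)$ delivers the $(2\theta_{-}^{(n,\epsilon)}-1-2\alpha)$ factor in the answer), and handling the small-$j$ piece by exploiting the explicit structure of $\gamma_{\ell}$ as polynomials in $y=(x/b)^{1/(2b)}$ supplied by \eqref{def of gammaj Stirling} and Lemma \ref{lemma:prop of gfrakl}. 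This reduces the small-$j$ sum to finite linear combinations of $\sum_{j=M'+1}^{\lfloor A_{0} n\rfloor}(j/n)^{\beta}\phi(j/n)$ with $\phi$ smooth and $\beta \in \{m/(2b):m\in \mathbb{N}\}$; each such sum can be analyzed using the classical asymptotics for $\sum_{j=1}^{N}j^{\beta}$, producing a leading integral plus an $O(n^{-1/(2b)})$ correction.

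Finally, I would reassemble: adding and subtracting $n\int_{0}^{A_{0}}$ of the smooth integrands, the two parts of each $\Sigma_{\ell}$ combine to give the integrals over the whole interval $[0, br^{2b}/(1+\epsilon)]$; the small-$j$ corrections from the singular endpoint precisely reproduce the ``formal'' left-endpoint correction from Lemma \ref{lemma:Riemann sum NEW}, namely $\frac{(1-2(M'+1))f_{0}(0)}{2} = -M'\ln(r^{a}e^{u}) - \tfrac{1}{2}(u+a\ln r)$, plus terms absorbed into the $O(n^{-1/(2b)})$ error; and the closed-form evaluation of $\int_{0}^{br^{2b}/(1+\epsilon)} f_{1}(x)\,dx$, performed analogously to the corresponding computation in the proof of Lemma \ref{lemma: S3}, produces the $\tfrac{a}{4}(\cdots)$ term. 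The principal technical hurdle is precisely this final assembly step: verifying that all small-$j$ contributions combine into the clean boundary correction $-M'\ln(r^{a}e^{u}) - \tfrac{1}{2}(u+a\ln r)$ with error no worse than $O(n^{-1/(2b)}+n^{-1})$, despite the lack of smoothness of $f_{0}$ and $f_{1}$ at $x=0$.
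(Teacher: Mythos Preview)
Your overall strategy matches the paper's: replace $\gamma(a_{j,k},nr^{2b})/\Gamma(a_{j,k})$ by $1+O(e^{-cn})$, invoke the estimate \eqref{lol5}, and approximate the resulting Riemann sums. The differences are in execution.

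First, two small slips. You invoke Lemma~\ref{lemma:various regime of gamma}, but that lemma is for \emph{fixed} $\tilde a$, whereas here $a_{j,k}$ ranges up to order $n$; the paper uses Lemma~\ref{lemma: uniform}(i) instead and, crucially, chooses $M'$ large enough that $a_{j,k}$ exceeds the threshold $A(\epsilon/2)$ there. Also, having already extracted $(j_{-}-1-M')u$ in your first display, you should not put $u$ back into $f_0$; with your definitions you would double-count the $u$-contribution.

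The substantive difference is the Riemann-sum step. You propose to split at a fixed $A_{0}>0$, apply Lemma~\ref{lemma:Riemann sum NEW} on $[A_{0},B]$, and handle the small-$j$ piece by expanding $\gamma_{\ell}$ as power series in $(x/b)^{1/(2b)}$ together with the Euler--Maclaurin asymptotics of $\sum_j j^{\beta}$. This can be made to work, but it is what makes your ``principal technical hurdle'' genuinely laborious: you must track the boundary contributions at $A_{0}$ from \emph{both} pieces and verify they cancel, for each power $\beta=m/(2b)$. The paper avoids all of this by noting that Lemma~\ref{lemma:Riemann sum NEW} allows $A$ to depend on $n$, and applies it directly with $A=\tfrac{M'}{n}$, $a_{0}=1$. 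The singular behavior of $\tilde f_{0},f_{1}$ near $0$ is then absorbed automatically: the explicit correction terms $\tfrac{f'(A)}{n},\tfrac{f''(A)}{n^{2}},\tfrac{\mathfrak m_{A}(f''')}{n^{3}}$ and the sum $\sum_j \tfrac{\mathfrak m_{j,n}(f'''')}{n^{4}}$ are all $O(n^{-1/(2b)})$ because $\tilde f_{0}^{(k)}(x)\sim c\,x^{1/(2b)-k}$. One then simply expands $\tilde f_{0}(\tfrac{M'}{n})=u+a\ln r+O(n^{-1/(2b)})$ and $n\int_{M'/n}^{B}\tilde f_{0}=n\int_{0}^{B}\tilde f_{0}-M'(u+a\ln r)+O(n^{-1/(2b)})$, and likewise for $\int f_{1}$. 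This delivers the $-M'\ln(r^{a}e^{u})$ and $-\tfrac{1}{2}(u+a\ln r)$ terms in one stroke, with no intermediate splitting and no matching at $A_{0}$.
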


\begin{proof}
Lemma \ref{lemma: uniform} (i) implies that for any $\epsilon'>0$ there exist $A=A(\epsilon'),C=C(\epsilon')>0$ such that $|\frac{\gamma(a,z)}{\Gamma(a)}-1| \leq Ce^{-\frac{a\eta^{2}}{2}}$ for all $a \geq A$, for all $\lambda=\frac{z}{a} \in [1+\epsilon',+\infty]$, and where $\eta$ is given by \eqref{lol8}. Let us take $\epsilon'=\frac{\epsilon}{2}$ and choose $M'$ so large that $a_{j} = \frac{j+\alpha}{b} \geq A(\frac{\epsilon}{2})$ for all $j \in \{M'+1,\ldots,j_{1,-}-1\}$. Thus
\begin{align}\label{lol35}
S_{1} & = \sum_{j=M'+1}^{j_{-}-1} \ln \bigg( \sum_{k=0}^{a} {a \choose k} \frac{(-r)^{a-k}}{n^{\frac{k}{2b}}} \frac{\Gamma(\frac{2j+2\alpha+k}{2b})}{\Gamma(\frac{2j+2\alpha}{2b})} \bigg[ 1+((-1)^{a}e^{u}-1)(1+\bigO(e^{-\frac{a_{j,k}\eta_{j,k}^{2}}{2}})) \bigg] \bigg),
\end{align}
as $n \to + \infty$. From a direct analysis of \eqref{def of ajk}, we infer that $a_{j,k}\eta_{j,k}^{2}$ decreases as $j$ increases from $M'+1$ to $j_{-}-1$, and $a_{j,k}\eta_{j,k}^{2}$ decreases also as $k$ increases from $0$ to $a$. Therefore
\begin{align}\label{lol34}
\frac{a_{j,k}\eta_{j,k}^{2}}{2} \geq \frac{a_{j_{-}-1}\eta_{j_{-}-1}^{2}}{2} \geq cn, \qquad \mbox{for all } j \in \{M'+1,\ldots,j_{-}-1\}, 
\end{align}
for a small enough $c>0$. Thus
\begin{align}\label{lol4}
S_{1} & = \sum_{j=M'+1}^{j_{-}-1} \ln \bigg( \sum_{k=0}^{a} {a \choose k} \frac{r^{a-k}(-1)^{k}e^{u}}{n^{\frac{k}{2b}}} \frac{\Gamma(\frac{2j+2\alpha+k}{2b})}{\Gamma(\frac{2j+2\alpha}{2b})} \bigg) + \bigO(e^{-cn}), \qquad \mbox{as } n \to + \infty.
\end{align}
Substituting \eqref{lol5} in \eqref{lol4} and using \eqref{def of gammaj Stirling} with $\ell=0,1,2$, we obtain
\begin{align}
& S_{1} = \widetilde{\Sigma}_{0}+\frac{1}{n}\widetilde{\Sigma}_{1}+\bigO\bigg(n^{-2}\sum_{j=M'+1}^{j_{-}-1}(j/n)^{\frac{1}{2b}-2}\bigg) = \widetilde{\Sigma}_{0}+\frac{1}{n}\widetilde{\Sigma}_{1} + \bigO\big( n^{-\frac{1}{2b}}+n^{-1} \big), \nonumber \\
& \widetilde{\Sigma}_{0} := \sum_{j=M'+1}^{j_{-}-1} \widetilde{f}_{0}(j/n), \quad \widetilde{\Sigma}_{1} := \sum_{j=M'+1}^{j_{-}-1} f_{1}(j/n), \qquad \widetilde{f}_{0}(x) := u + a \ln \bigg( r-\bigg(\frac{x}{b}\bigg)^{\frac{1}{2b}} \bigg), \label{lol22}
\end{align}
as $n \to + \infty$, where $f_{1}$ is given by \eqref{def of f0 and f1}. Note that $f_{1}(x) \sim c x^{\frac{1}{2b}-1}$ as $x \to 0$; thus $f_{1}$ blows up at $0$ if $b > \frac{1}{2}$. Using now Lemma \ref{lemma:Riemann sum NEW} (with $A=\frac{M'}{n}$, $a_{0} = 1$, $B=\frac{br^{2b}}{1+\epsilon}$ and $b_{0}=\theta_{-}^{(n,\epsilon)}-1-\alpha$), we get
\begin{align*}
& \widetilde{\Sigma}_{0} = n \int_{\frac{M'}{n}}^{\frac{br^{2b}}{1+\epsilon}}\widetilde{f}_{0}(x)dx + \frac{-\widetilde{f}_{0}(\frac{M'}{n})+(2\theta_{-}^{(n,\epsilon)}-1-2\alpha)\widetilde{f}_{0}(\frac{br^{2b}}{1+\epsilon})}{2} + \bigO\big( n^{-\frac{1}{2b}}+n^{-1} \big), \\
& \frac{1}{n}\widetilde{\Sigma}_{1} =  \int_{\frac{M'}{n}}^{\frac{br^{2b}}{1+\epsilon}}f_{1}(x)dx + \bigO\big( n^{-\frac{1}{2b}}+n^{-1} \big).
\end{align*}
as $n \to + \infty$. Furthermore, by a direct analysis of $\widetilde{f}_{0}$ and $f_{1}$, 
\begin{align*}
& \widetilde{f}_{0}\bigg( \frac{M'}{n} \bigg) = u + a \ln r + \bigO\big( n^{-\frac{1}{2b}} \big), \\
&  \int_{\frac{M'}{n}}^{\frac{br^{2b}}{1+\epsilon}}f_{1}(x)\,dx = \frac{a}{4}\bigg( \frac{a-1}{(1+\epsilon)^{\frac{1}{2b}}-1} + (a-2b+4\alpha) \ln \Big( 1-(1+\epsilon)^{-\frac{1}{2b}} \Big) \bigg) + \bigO\big( n^{-\frac{1}{2b}} \big),  \\
& n \int_{\frac{M'}{n}}^{\frac{br^{2b}}{1+\epsilon}}\widetilde{f}_{0}(x) \, dx = n \int_{0}^{\frac{br^{2b}}{1+\epsilon}}\widetilde{f}_{0}(x)\,dx - (u+a \ln r)M' + \bigO\big( n^{-\frac{1}{2b}} \big),
\end{align*}
as $n \to + \infty$. The claim follows after substituting the above expansions in \eqref{lol22}. 
\end{proof}
\section{Large $n$ asymptotics of $S_{2}$}\label{section:S2}
It remains to obtain the large $n$ asymptotics of $S_{2}$, which was defined in \eqref{def of S2}. For this, let us split $S_{2}$ in three pieces,
\begin{align*}
& S_{2}=S_{2}^{(1)}+S_{2}^{(2)}+S_{2}^{(3)},
\end{align*}
where 
\begin{equation}
\label{asymp prelim of S2kpvp}
S_{2}^{(v)} := \sum_{j:\lambda_{j}\in I_{v}}  \ln \bigg( \sum_{k=0}^{a} {a \choose k} \frac{(-r)^{a-k}}{n^{\frac{k}{2b}}} \frac{\Gamma(\frac{2j+2\alpha+k}{2b})}{\Gamma(\frac{2j+2\alpha}{2b})} \bigg[ 1+((-1)^{a}e^{u}-1)\frac{\gamma(\frac{2j+2\alpha+k}{2b},nr^{2b})}{\Gamma(\frac{2j+2\alpha+k}{2b})} \bigg] \bigg)
\end{equation}
for $v=1,2,3$, $\lambda_{j}$ is given by \eqref{def of aj}, and
\begin{align*}
I_{1} := [1-\epsilon,1-\tfrac{M}{\sqrt{n}}), \qquad I_{2} := [1-\tfrac{M}{\sqrt{n}},1+\tfrac{M}{\sqrt{n}}], \qquad I_{3} := (1+\tfrac{M}{\sqrt{n}},1+\epsilon].
\end{align*}
Equivalently, the above sums can be rewritten using
\begin{align}\label{sums lambda j}
& \sum_{j:\lambda_{j}\in I_{3}} = \sum_{j=j_{-}}^{g_{-}-1}, \qquad \sum_{j:\lambda_{j}\in I_{2}} = \sum_{j= g_{-}}^{g_{+}}, \qquad \sum_{j:\lambda_{j}\in I_{1}} = \sum_{j= g_{+}+1}^{j_{+}},
\end{align}
where $g_{-} := \lceil \frac{bnr^{2b}}{1+\frac{M}{\sqrt{n}}}-\alpha \rceil$, $g_{+} := \lfloor \frac{bnr^{2b}}{1-\frac{M}{\sqrt{n}}}-\alpha \rfloor$. We also define $\theta_{-}^{(n,M)},\theta_{+}^{(n,M)} \in [0,1)$ by
\begin{align}
	\begin{split} \label{def of theta nM}
& \theta_{-}^{(n,M)} := g_{-} - \bigg( \frac{bn r^{2b}}{1+\frac{M}{\sqrt{n}}} - \alpha \bigg) = \bigg\lceil \frac{bn r^{2b}}{1+\frac{M}{\sqrt{n}}} - \alpha \bigg\rceil - \bigg( \frac{bn r^{2b}}{1+\frac{M}{\sqrt{n}}} - \alpha \bigg), \\
& \theta_{+}^{(n,M)} := \bigg( \frac{bn r^{2b}}{1-\frac{M}{\sqrt{n}}} - \alpha \bigg) - g_{+} = \bigg( \frac{bn r^{2b}}{1-\frac{M}{\sqrt{n}}} - \alpha \bigg) - \bigg\lfloor \frac{bn r^{2b}}{1-\frac{M}{\sqrt{n}}} - \alpha \bigg\rfloor.
	\end{split}
\end{align}
Note that the sums $S_{2}^{(1)}$ and $S_{2}^{(3)}$ each contain a number of elements proportional to $n$, while $S_{2}^{(2)}$ contains roughly $M\sqrt{n}$ elements. 
\subsection{Global analysis: large $n$ asymptotics of $S_{2}^{(1)}$ and $S_{2}^{(3)}$}
We first treat $S_{2}^{(1)}$, $S_{2}^{(3)}$. These sums are delicate to analyze because they involve the asymptotics of $\gamma(a,z)$ in the regime $a \to + \infty$, $z \to +\infty$, when $\lambda=\frac{z}{a}$ is close to $1$ but not very close (more precisely, $\lambda \in [1-\epsilon,1-\frac{M}{\sqrt{n}})\cup(1+\frac{M}{\sqrt{n}},1+\epsilon]$).
\begin{lemma}\label{lemma:S2kp1p}
As $n \to + \infty$,
\begin{align*}
& S_{2}^{(1)} = n \int_{br^{2b}}^{\frac{br^{2b}}{1-\epsilon}} a \ln \bigg( \bigg( \frac{x}{b} \bigg)^{\frac{1}{2b}} - r \bigg)dx + \bigg\{ a b r^{2b} M \bigg( \ln \bigg( \frac{2b\sqrt{n}}{r M} \bigg) + 1 \bigg) + \frac{a(a-1)b}{2M} \\
& - \frac{a b (3-5a+2a^{2})}{12r^{2b}M^{3}} \bigg\} \sqrt{n} + \bigg( -abr^{2b}M^{2}+a\frac{2\alpha+2\theta_{+}^{(n,M)}-1}{2} - \frac{a}{4}(a-2b+4\alpha) \bigg) \ln \bigg( \frac{M}{\sqrt{n}} \bigg) \\
& + \frac{ar^{2b}(2b-1+8b\ln(\frac{2b}{r}))}{8}M^{2} + \frac{2\alpha+2\theta_{+}^{(n,M)}-1}{2}a \ln \bigg( \frac{r}{2b} \bigg) \\
& + \frac{1-2\alpha-2\theta_{+}^{(n,\epsilon)}}{2}a \ln \bigg( r(1-\epsilon)^{-\frac{1}{2b}}-r \bigg) + \frac{a}{4} \bigg\{ \frac{(1-a)(1+2b)}{2} + (a-2b+4\alpha)\ln(2b) \\
& + \frac{1-a}{(1-\epsilon)^{-\frac{1}{2b}}-1} + (a-2b+4\alpha)\ln \bigg( (1-\epsilon)^{-\frac{1}{2b}}-1 \bigg) \bigg\} + \bigO\bigg( \frac{\sqrt{n}}{M^{5}} + \frac{M^{3}\ln n}{\sqrt{n}} \bigg).
\end{align*}
\end{lemma}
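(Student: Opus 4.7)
The analysis of $S_{2}^{(1)}$ proceeds in three stages. First, we dispense with the incomplete-gamma contribution. Second, we expand the summand to order $n^{-2}$, since the lower endpoint of the summation range sits only at distance $O(M/\sqrt{n})$ from the singular point $x=br^{2b}$ of $\gamma_{0}$, and this forces new sub-leading pieces to contribute. Third, we apply Lemma \ref{lemma:Riemann sum NEW} and expand every quantity in powers of $\delta := M/\sqrt{n}$ at the singular endpoint.

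\textbf{Steps 1--2.} For $j \in \{g_{+}+1,\ldots,j_{+}\}$ we have $\lambda_{j,k}\leq 1-M/\sqrt{n}$, and a direct inspection of \eqref{def of ajk} gives $a_{j,k}\eta_{j,k}^{2}\geq cM^{2}\to\infty$. Part~(i) of Lemma \ref{lemma: uniform} then yields $\gamma(a_{j,k},nr^{2b})/\Gamma(a_{j,k}) = O(e^{-cM^{2}})$, so, arguing as for \eqref{lol34}, the incomplete-gamma factor can be dropped at the cost of an error smaller than any power of $n^{-1}$. Next, iterating Lemma \ref{lemma:ratio of gamma} produces an extension of \eqref{lol1} valid to arbitrary order in $n^{-1}$; taking the logarithm and using \eqref{def of gammaj Stirling} yields
\begin{equation*}
S_{2}^{(1)} \;=\; \Sigma_{0} + \frac{\Sigma_{1}}{n} + \frac{\Sigma_{2}}{n^{2}} + O\!\bigg(\frac{1}{n^{3}}\sum_{j=g_{+}+1}^{j_{+}}\big(j/n-br^{2b}\big)^{-3}\bigg),
\end{equation*}
where $\Sigma_{\ell}=\sum_{j=g_{+}+1}^{j_{+}}f_{\ell}(j/n)$, with $f_{0}=\ln \gamma_{0}$, $f_{1}$ as in \eqref{def of f0 and f1}, and $f_{2}$ a new function that behaves like $(x-br^{2b})^{-4}$ near $br^{2b}$. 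A direct Riemann-sum estimate shows that the displayed error is $O(\sqrt{n}/M^{5})$, which is why the expansion must be pushed exactly this far.

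\textbf{Step 3 and main obstacle.} We apply Lemma \ref{lemma:Riemann sum NEW} to each of $\Sigma_{0},\Sigma_{1},\Sigma_{2}$ with $A=br^{2b}/(1-M/\sqrt{n})$, $a_{0}=1-\alpha-\theta_{+}^{(n,M)}$, $B=br^{2b}/(1-\epsilon)$, $b_{0}=-\alpha-\theta_{+}^{(n,\epsilon)}$. A routine check shows that the error terms $\mathfrak{m}_{A}(f_{\ell}''')/n^{3}$ and $\sum_{j}\mathfrak{m}_{j,n}(f_{\ell}'''')/n^{4}$ in Lemma \ref{lemma:Riemann sum NEW}, for $\ell=0,1,2$, are absorbed into $O(\sqrt{n}/M^{5}+M^{3}\ln n/\sqrt{n})$. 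The smooth endpoint at $B$ contributes the $\theta_{+}^{(n,\epsilon)}$-dependent constants on the last two lines of the claim---most of this computation already appears in the proof of Lemma \ref{lemma: S3}, in particular the antiderivative of $f_{1}$ needed to evaluate $\int_{A}^{B}f_{1}(x)\,dx$ at $x=B$. The heart of the proof lies in expanding the singular contributions at $A$. Writing $t_{A}=A-br^{2b}=br^{2b}\delta(1-\delta)^{-1}$ and using the local behaviour $(x/b)^{1/(2b)}-r = \tfrac{x-br^{2b}}{2b^{2}r^{2b-1}}(1+O(x-br^{2b}))$, an explicit computation of $n\int_{br^{2b}}^{A}f_{0}(x)\,dx$ produces the key terms $-abr^{2b}M\sqrt{n}(\ln\tfrac{2b\sqrt{n}}{rM}+1)$, $+abr^{2b}M^{2}\ln\tfrac{M}{\sqrt{n}}$ and the $\tfrac{ar^{2b}(2b-1+8b\ln(2b/r))}{8}M^{2}$ piece (subtracted from $n\int_{br^{2b}}^{B}f_{0}$ to yield the first line of the claim). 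Analogous expansions of $\int_{br^{2b}}^{A}f_{1}/n$ and $\int_{br^{2b}}^{A}f_{2}/n^{2}$ give the $a(a-1)b\sqrt{n}/(2M)$ and $-ab(3-5a+2a^{2})\sqrt{n}/(12r^{2b}M^{3})$ contributions respectively. The boundary correction $(1-2a_{0})f_{0}(A)/2$, expanded using $f_{0}(A)=a\ln(rM/(2b\sqrt{n}))+O(\delta)$, yields the $(2\alpha+2\theta_{+}^{(n,M)}-1)\ln(M/\sqrt{n})/2$ and $\ln(r/(2b))$ pieces. The principal technical challenge is bookkeeping: one must track the first several coefficients of $t_{A}$, $\ln t_{A}$, $f_{0}(A)$, $f_{1}(A)$ and of the Taylor expansions of $f_{0},f_{1},f_{2}$ around $br^{2b}$ simultaneously, and verify that every contribution larger than $\sqrt{n}/M^{5}+M^{3}\ln n/\sqrt{n}$ is correctly identified.
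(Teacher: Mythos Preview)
Your approach is essentially identical to the paper's: drop the incomplete-gamma piece, expand the summand through order $n^{-2}$ in terms of $f_{0},f_{1},f_{2}$, apply Lemma~\ref{lemma:Riemann sum NEW} with the same $A,B,a_{0},b_{0}$, and then expand every integral and boundary term at the singular endpoint $A$. The bookkeeping you describe in Step~3 matches the paper's computations \eqref{lol15}--\eqref{lol18} exactly.

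Two small technical slips. First, you invoke Lemma~\ref{lemma: uniform}\,(i), but for $j\in\{g_{+}+1,\dots,j_{+}\}$ one has $\lambda_{j,k}<1$, and $\lambda_{j,k}$ is not bounded away from $1$, so neither (i) nor (ii) applies directly; the paper uses the full representation $\gamma/\Gamma=\tfrac12\mathrm{erfc}(-\eta\sqrt{a/2})-R_{a}(\eta)$ and estimates both pieces via your (correct) bound $a_{j,k}\eta_{j,k}^{2}\geq cM^{2}$, yielding $O(e^{-n^{c}})$. Second, the displayed remainder $O\bigl(n^{-3}\sum_{j}(j/n-br^{2b})^{-3}\bigr)$ has the wrong exponent: the omitted $\ell=3$ term and the cubic part of $\ln(1+x)$ both carry $(\gamma_{m}/\gamma_{0})$-type factors of size $(x-br^{2b})^{-6}$, not $(x-br^{2b})^{-3}$; with exponent $-6$ the sum indeed evaluates to $O(\sqrt{n}/M^{5})$, while with $-3$ it would be only $O(n^{-1}M^{-2})$, which would not justify keeping $\Sigma_{2}/n^{2}$.
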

\begin{remark}
Since $M=n^{\frac{1}{8}}(\ln n)^{-\frac{1}{8}}$, the $\bigO$-term above is small as $n \to + \infty$.
\end{remark}
\begin{remark}
The above asymptotics are of the form
\begin{align*}
S_{2}^{(1)} = E_{1}^{(\epsilon)}n + \tilde{E}_{2}^{(M)} \sqrt{n} \ln n + E_{2}^{(M)}\sqrt{n} + \tilde{E}_{3}^{(n,M)} \ln n + E_{3}^{(n,\epsilon,M)} + o(1).
\end{align*}
This may seem a bit counter intuitive as the asymptotics of Theorem \ref{thm:main thm} only contain terms proportional to $n$, $\sqrt{n}$ and $1$. In fact, remarkable cancellations will occur in the asymptotics of $S_{2}^{(1)}+S_{2}^{(2)}+S_{2}^{(3)}$; in particular $\tilde{E}_{2}^{(M)}$ and $\tilde{E}_{3}^{(n,M)}$ will get perfectly canceled by other terms in the large $n$ asymptotics of $S_{2}^{(2)}$ and $S_{2}^{(3)}$, and we will show in Lemma \ref{lemma: asymp of S2k final} below that the large $n$ asymptotics of $S_{2} = S_{2}^{(1)}+S_{2}^{(2)}+S_{2}^{(3)}$ are of the form $S_{2} =  \widehat{C}_{1}^{(\epsilon)} n + \widehat{C}_{2} \sqrt{n} + \widehat{C}_{3}^{(n,\epsilon)} + o(1)$.  
\end{remark}
\begin{proof}
By \eqref{asymp prelim of S2kpvp} and Lemma \ref{lemma: uniform}, $S_{2}^{(1)}$ admits the following exact formula
\begin{align*}
S_{2}^{(1)} & = \sum_{j:\lambda_{j}\in I_{1}} \ln \bigg(\sum_{k=0}^{a} {a \choose k} \frac{(-r)^{a-k}}{n^{\frac{k}{2b}}} \frac{\Gamma(\frac{2j+2\alpha+k}{2b})}{\Gamma(\frac{2j+2\alpha}{2b})} \\
& \hspace{3cm} \times \bigg[ 1+((-1)^{a}e^{u}-1)\bigg( \frac{1}{2}\mathrm{erfc}\Big(\hspace{-0.1cm}-\hspace{-0.05cm}\eta_{j,k} \sqrt{\tfrac{a_{j,k}}{2}}\Big)  - R_{a_{j,k}}(\eta_{j,k}) \bigg) \bigg]\bigg),
\end{align*}
where $\eta_{j,k}$ and $a_{j,k}$ are given by \eqref{def of ajk}. Since $I_{1}=[1-\epsilon,1-\frac{M}{\sqrt{n}})$, 
\begin{align}
& \eta_{j,k} = \lambda_{j,k}-1 +\bigO((\lambda_{j,k}-1)^{2}) \leq -\tfrac{M}{\sqrt{n}} + \bigO(\tfrac{M^{2}}{n}), & & \mbox{as } n \to +\infty, \label{lol10} \\
& -\eta_{j,k} \sqrt{a_{j,k}/2} =-\eta_{j,k} \sqrt{ \tfrac{n r^{2b}}{ 2\lambda_{j,k} } } \geq  \tfrac{Mr^{b}}{\sqrt{2}} + \bigO(\tfrac{M^{2}}{\sqrt{n}}), & & \mbox{as } n \to +\infty, \label{lol11}
\end{align}
uniformly for $j\in \{j: \lambda_{j} \in I_{1}\}$ and $k \in \{0,1,\ldots,a\}$. Also, $M = n^{\frac{1}{8}}(\ln n)^{-\frac{1}{8}}$, and thus, by \eqref{asymp of Ra}, 
\begin{align*}
& R_{a_{j,k}}(\eta_{j,k}) =  \bigO(e^{-\frac{r^{2b}M^{2}}{ 4 }}) = \bigO(e^{-n^{c}}),  & & \frac{1}{2}\mathrm{erfc}\Big(\hspace{-0.1cm}-\hspace{-0.05cm}\eta_{j,k} \sqrt{\tfrac{a_{j,k}}{2}}\Big)  = \bigO(e^{-n^{c}}),
\end{align*}
as $n \to + \infty$ uniformly for $j\in \{j: \lambda_{j} \in I_{1}\}$ and $k \in \{0,1,\ldots,a\}$, and (using also \eqref{sums lambda j})
\begin{align}
S_{2}^{(1)} & = \sum_{j=g_{+}+1}^{j_{+}} \ln \bigg(\sum_{k=0}^{a} {a \choose k} \frac{(-r)^{a-k}}{n^{\frac{k}{2b}}} \frac{\Gamma(\frac{2j+2\alpha+k}{2b})}{\Gamma(\frac{2j+2\alpha}{2b})} \bigg) + \bigO(e^{-n^{c}}), \qquad \mbox{as } n \to + \infty. \label{lol12} 
\end{align}
In the same way as for \eqref{lol1}, we obtain
\begin{align}
& \sum_{k=0}^{a} {a \choose k} \frac{(-r)^{a-k}}{n^{\frac{k}{2b}}} \frac{\Gamma(\frac{2j+2\alpha+k}{2b})}{\Gamma(\frac{2j+2\alpha}{2b})} \sim \gamma_{0}(j/n) + \sum_{\ell=1}^{+\infty} \frac{\sum_{m=1}^{2\ell}\mathfrak{p}_{2\ell,m}\gamma_{m}(j/n)}{j^{\ell}} \label{lol36}
\end{align}
as $n \to + \infty$ uniformly for $j \in \{g_{+}+1,\ldots,j_{+}\}$. However, unlike for $S_{3}$, the first subleading term (corresponding to $\ell=1$) is not sufficient for our purpose; we also need the coefficients of $\mathfrak{p}_{4}$ which are given by (see \eqref{def of p2l})
\begin{align}
 \mathfrak{p}_{4}(k) & = \frac{k(k-2b)}{8}B_2^{(1+\frac{k}{2b})}\Big( \frac{2\alpha+k}{2b}\Big) = \frac{k(k-2b)(8b^{2}+3(k+4\alpha)^{2}-2b(7k+24\alpha))}{384b^{2}}. \label{def of pfrak 4}
\end{align}
The reason as to why we need $\mathfrak{p}_{4}$ can be seen as follows. The sum on the left-hand side of \eqref{lol36} appears inside $\ln$ in \eqref{lol12}. Since $\ln(\gamma_{0}+B)=\ln \gamma_{0} + \ln (1+B/\gamma_{0})$, what is relevant is to estimate the asymptotics series of \eqref{lol36} divided by $\gamma_{0}(j/n)$. Lemma \ref{lemma:gammal} implies 
\begin{equation}\label{asymp of gamma ratio}
\frac{\gamma_{\ell}(x)}{\gamma_{0}(x)} = \bigO(\min\{|x-br^{2b}|^{-a},|x-br^{2b}|^{-\ell}\}) = \bigO(|x-br^{2b}|^{-\min\{a,\ell\}}), \qquad \mbox{as } x \to br^{2b}.
\end{equation}
Using the definitions of $j_{+}$ and $g_{+}$ (see \eqref{def of jminus and jplus} and \eqref{sums lambda j}), we see that for $j \in \{g_{+}+1,\ldots,j_{+}\}$, $j/n$ lies in $(br^{2b},1]$ and $g_{+}/n-br^{2b}$ is of order $\frac{M}{\sqrt{n}}$ as $n \to +\infty$.
Therefore, for $\ell=1,2,\ldots$, 
\begin{align}\label{lol13}
\sum_{j=g_{+}+1}^{j_{+}}\frac{\sum_{m=1}^{2\ell}\mathfrak{p}_{2\ell,m}\gamma_{m}(j/n)}{j^{\ell}\gamma_{0}(j/n)} = \bigO\bigg( \sum_{j=g_{+}+1}^{j_{+}}\frac{|j/n - br^{2b}|^{-\min\{a,2\ell\}}}{n^{\ell}} \bigg) = \bigO\bigg( \frac{\sqrt{n}}{M^{2\ell}} \bigg),
\end{align}
as $n \to + \infty$ uniformly for $j \in \{g_{+}+1,\ldots,j_{+}\}$. Since $M=n^{\frac{1}{8}}(\ln n)^{-\frac{1}{8}}$, only the terms corresponding to $\ell \geq 3$ in \eqref{lol36} (i.e. the terms associated with $\mathfrak{p}_{6},\mathfrak{p}_{8},\ldots$) will give an error in the asymptotics of $S_{2}^{(1)}$. Substituting \eqref{lol36}--\eqref{def of pfrak 4} in \eqref{lol12} and using \eqref{def of gammaj Stirling} (with $\ell=0,1,2,3,4)$, \eqref{estimate for gamma ell} and \eqref{lol13}, we obtain 
\begin{align}\label{lol19}
& S_{2}^{(1)} = \Sigma_{0}^{(1)} + \frac{1}{n}\Sigma_{1}^{(1)} + \frac{1}{n^{2}}\Sigma_{2}^{(1)} +  \bigO\bigg( \frac{\sqrt{n}}{M^{5}} \bigg)
\end{align}
as $n \to + \infty$, where
\begin{align*}
\Sigma_{\ell}^{(1)} := \sum_{j=g_{+}+1}^{j_{+}} f_{\ell}(j/n), \qquad \ell=0,1,2,
\end{align*}
$f_{0}$ and $f_{1}$ are as in \eqref{def of f0 and f1}, and $f_{2}$ is defined by
\begin{align}
 f_{2}(x) 
& : = \frac{1}{384b^{2}x^{2}}\bigg\{ -16b(b^{2}-6b\alpha+6\alpha^{2})\frac{a\big(\frac{x}{b}\big)^{\frac{1}{2b}}}{(\frac{x}{b}\big)^{\frac{1}{2b}}-r} + 12(3b^{2}-8b\alpha+4\alpha^{2}) \frac{a\big(\frac{x}{b}\big)^{\frac{1}{2b}}\big(a\big(\frac{x}{b}\big)^{\frac{1}{2b}}-r\big)}{\big(a\big(\frac{x}{b}\big)^{\frac{1}{2b}}-r\big)^{2}} \nonumber \\
& +4(6\alpha-5b) \frac{a\big(\frac{x}{b}\big)^{\frac{1}{2b}}\big(a^{2}\big(\frac{x}{b}\big)^{\frac{1}{b}}+(1-3a)r\big(\frac{x}{b}\big)^{\frac{1}{2b}}+r^{2}\big)}{\big(a\big(\frac{x}{b}\big)^{\frac{1}{2b}}-r\big)^{3}} \nonumber \\
& + \frac{3a\big(\frac{x}{b}\big)^{\frac{1}{2b}}    \big[ a^{3}\big(\frac{x}{b}\big)^{\frac{3}{2b}} + (4a-6a^{2}-1)r \big(\frac{x}{b}\big)^{\frac{1}{b}} + (7a-4)r^{2}\big(\frac{x}{b}\big)^{\frac{1}{2b}} -r^{3} \big]}{\big(a\big(\frac{x}{b}\big)^{\frac{1}{2b}}-r\big)^{4}} \bigg\} - \frac{f_{1}(x)^{2}}{2}. \label{def of f2}
\end{align}
Using Lemma \ref{lemma:Riemann sum NEW} (with $A=\frac{br^{2b}}{1-\frac{M}{\sqrt{n}}}$, $a_{0}=1-\alpha-\theta_{+}^{(n,M)}$, $B = \frac{br^{2b}}{1-\epsilon}$ and $b_{0}=-\alpha-\theta_{+}^{(n,\epsilon)}$), we get
\begin{align}
& \Sigma_{0}^{(1)} = n \int_{\frac{br^{2b}}{1-\frac{M}{\sqrt{n}}}}^{\frac{br^{2b}}{1-\epsilon}}f_{0}(x)dx + \frac{(2\alpha+2\theta_{+}^{(n,M)}-1)f_{0}(\frac{br^{2b}}{1-\frac{M}{\sqrt{n}}})+(1-2\alpha-2\theta_{+}^{(n,\epsilon)})f_{0}(\frac{br^{2b}}{1-\epsilon})}{2} + \bigO\bigg( \frac{1}{M\sqrt{n}} \bigg), \nonumber \\
& \frac{1}{n}\Sigma_{1}^{(1)} = \int_{\frac{br^{2b}}{1-\frac{M}{\sqrt{n}}}}^{\frac{br^{2b}}{1-\epsilon}}f_{1}(x)dx + \bigO\bigg( \frac{1}{M^{2}} \bigg), \qquad \frac{1}{n^{2}}\Sigma_{2}^{(1)} = \frac{1}{n}\int_{\frac{br^{2b}}{1-\frac{M}{\sqrt{n}}}}^{\frac{br^{2b}}{1-\epsilon}}f_{2}(x)dx + \bigO\bigg( \frac{1}{M^{4}} \bigg), \label{lol14}
\end{align}
as $n \to + \infty$. To obtain the above error terms, we also used \eqref{asymp of gamma ratio}, \eqref{def of gammaj Stirling}, and in particular that
\begin{align*}
f_{0}'\bigg( \frac{br^{2b}}{1-\frac{M}{\sqrt{n}}} \bigg)= \bigO\bigg( \frac{\sqrt{n}}{M} \bigg), \qquad f_{1}\bigg( \frac{br^{2b}}{1-\frac{M}{\sqrt{n}}} \bigg) = \bigO\bigg( \frac{n}{M^{2}} \bigg), \qquad f_{2}\bigg( \frac{br^{2b}}{1-\frac{M}{\sqrt{n}}} \bigg) = \bigO\bigg( \frac{n^{2}}{M^{4}} \bigg)
\end{align*}
as $n \to + \infty$. Furthermore, it follows from a long but straightforward analysis of $f_{0}$, $f_{1}$ and $f_{2}$ that
\begin{align}
f_{0}\bigg( \frac{br^{2b}}{1-\frac{M}{\sqrt{n}}} \bigg) & = a \ln \bigg( \frac{Mr}{2b\sqrt{n}} \bigg) + \bigO \bigg( \frac{M}{\sqrt{n}} \bigg), \label{lol15} \\
n \int_{\frac{br^{2b}}{1-\frac{M}{\sqrt{n}}}}^{\frac{br^{2b}}{1-\epsilon}}f_{0}(x)dx & = n \int_{br^{2b}}^{\frac{br^{2b}}{1-\epsilon}}f_{0}(x)dx + \sqrt{n} \; a b  r^{2b} M \bigg( 1+\ln\bigg( \frac{2b\sqrt{n}}{r M} \bigg) \bigg) + a b r^{2b} M^{2} \ln \bigg( \frac{\sqrt{n}}{M} \bigg) \nonumber \\
& \quad + \frac{a r^{2b} M^{2}(-1+2b+8b \ln (\frac{2b}{r}))}{8} + \bigO \bigg( \frac{M^{3}\ln n}{\sqrt{n}} \bigg), \label{lol16} \\
\int_{\frac{br^{2b}}{1-\frac{M}{\sqrt{n}}}}^{\frac{br^{2b}}{1-\epsilon}}f_{1}(x)dx & = \frac{a(a-1)b\sqrt{n}}{2M} + \frac{a}{4}\bigg\{ \frac{(1-a)(1+2b)}{2} + \frac{1-a}{(\frac{1}{1-\epsilon})^{\frac{1}{2b}}-1} - (a-2b+4\alpha)\ln \bigg( \frac{M}{2b\sqrt{n}} \bigg) \nonumber \\
& \quad + (a-2b+4\alpha) \ln \bigg( (1-\epsilon)^{-\frac{1}{2b}}-1 \bigg) \bigg\} + \bigO\bigg( \frac{M}{\sqrt{n}} \bigg), \label{lol17} \\
\frac{1}{n}\int_{\frac{br^{2b}}{1-\frac{M}{\sqrt{n}}}}^{\frac{br^{2b}}{1-\epsilon}}f_{2}(x)dx & = - \frac{a b (3-5a+2a^{2})\sqrt{n}}{12 r^{2b} M^{3}} + \bigO\bigg( \frac{1}{M^{2}} \bigg), \label{lol18}
\end{align}
as $n \to + \infty$. Substituting \eqref{lol15}--\eqref{lol18} in \eqref{lol14} and then in \eqref{lol19}, we obtain the claim after another long but direct computation.
\end{proof}

\begin{lemma}\label{lemma:S2kp3p}
As $n \to + \infty$,
\begin{align*}
S_{2}^{(3)} = & \; n \int_{\frac{br^{2b}}{1+\epsilon}}^{br^{2b}}\bigg[ u + a \ln \bigg( r - \bigg( \frac{x}{b} \bigg)^{\frac{1}{2b}} \bigg) \bigg] dx + \bigg\{ b r^{2b} M \bigg( a\ln \bigg( \frac{2b\sqrt{n}}{r M} \bigg) + a - u \bigg) + \frac{a(a-1)b}{2M} \\
& - \frac{a b (3-5a+2a^{2})}{12r^{2b}M^{3}} \bigg\} \sqrt{n} + \bigg( abr^{2b}M^{2}+a\frac{2\theta_{-}^{(n,M)}-1-2\alpha}{2} + \frac{a}{4}(a-2b+4\alpha) \bigg) \ln \bigg( \frac{M}{\sqrt{n}} \bigg) \\
& + \frac{r^{2b}(a-2ab+8bu+8ab\ln(\frac{r}{2b}))}{8}M^{2} + \frac{2\theta_{-}^{(n,M)}-1-2\alpha}{2}\bigg( u + a \ln \bigg( \frac{r}{2b} \bigg)\bigg) \\
& + \frac{1+2\alpha-2\theta_{-}^{(n,\epsilon)}}{2}\bigg( u + a \ln \bigg( r-r(1+\epsilon)^{-\frac{1}{2b}} \bigg) \bigg) + \frac{a}{4} \bigg\{ \frac{(a-1)(1+2b)}{2} - (a-2b+4\alpha)\ln(2b) \\
& + \frac{1-a}{1-(1+\epsilon)^{-\frac{1}{2b}}} - (a-2b+4\alpha)\ln \bigg( 1-(1+\epsilon)^{-\frac{1}{2b}} \bigg) \bigg\} + \bigO\bigg( \frac{\sqrt{n}}{M^{5}} + \frac{M^{3}\ln n}{\sqrt{n}} \bigg).
\end{align*}
\end{lemma}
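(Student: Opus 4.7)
The plan is to follow the template of Lemma \ref{lemma:S2kp1p} with two key modifications reflecting that for $\lambda_j \in I_3$ the regime $\lambda_{j,k} > 1$ forces (i) the incomplete gamma ratio $\gamma(a_{j,k},nr^{2b})/\Gamma(a_{j,k})$ to tend to $1$ rather than $0$, producing an extra additive $u(g_- - j_-)$ piece, and (ii) the branch of $\gamma_\ell$ used in \eqref{def of gammaell} is the one for $x \in (0, br^{2b})$, so the relevant leading function is $\tilde f_0(x) = a \ln(r - (x/b)^{1/2b})$ rather than $a \ln((x/b)^{1/2b} - r)$.

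First, starting from \eqref{asymp prelim of S2kpvp} with $v=3$ and applying Lemma \ref{lemma: uniform}(ii), I would use that the $\mathcal{O}(1/n)$ shift between $\lambda_j$ and $\lambda_{j,k}$ still leaves $\lambda_{j,k} \geq 1 + M/(2\sqrt{n})$, so $\eta_{j,k} > 0$ with $\eta_{j,k}\sqrt{a_{j,k}/2}$ bounded below by $cMr^b$. Combined with \eqref{asymp of Ra} and the asymptotics of $\mathrm{erfc}$ at $-\infty$ this gives
\[
1 + ((-1)^a e^u - 1)\frac{\gamma(a_{j,k}, nr^{2b})}{\Gamma(a_{j,k})} = (-1)^a e^u \bigl(1 + \mathcal{O}(e^{-n^c})\bigr).
\]
Factoring $(-1)^a e^u$ out of the logarithm and merging the $(-1)^a$ with the Gamma-ratio sum (which on the branch $j/n < br^{2b}$ is the positive quantity naturally associated with the second line of \eqref{def of gammaell}), I obtain
\[
S_2^{(3)} = u\,(g_- - j_-) + \sum_{j=j_-}^{g_- - 1} \ln\!\bigg(\sum_{k=0}^a \binom{a}{k}\frac{r^{a-k}(-1)^k}{n^{k/2b}}\frac{\Gamma(\frac{2j+2\alpha+k}{2b})}{\Gamma(\frac{2j+2\alpha}{2b})}\bigg) + \mathcal{O}(e^{-n^c}).
\]

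Next I would expand the inner sum exactly as in \eqref{lol36}--\eqref{def of pfrak 4}, retaining the $\mathfrak{p}_2$ and $\mathfrak{p}_4$ contributions. By the same power-counting argument that produced \eqref{lol13}, terms with $\ell \geq 3$ contribute only $\mathcal{O}(\sqrt{n}/M^5)$, so
\[
S_2^{(3)} = u(g_- - j_-) + \tilde\Sigma_0 + \tfrac{1}{n}\tilde\Sigma_1 + \tfrac{1}{n^2}\tilde\Sigma_2 + \mathcal{O}(\sqrt{n}/M^5),
\]
where $\tilde\Sigma_\ell = \sum_{j=j_-}^{g_- - 1} \tilde f_\ell(j/n)$ and $\tilde f_0, \tilde f_1, \tilde f_2$ are obtained from $f_0,f_1,f_2$ in \eqref{def of f0 and f1}, \eqref{def of f2} via the substitution $(x/b)^{1/2b} - r \mapsto -(r - (x/b)^{1/2b})$ and the induced sign changes. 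I would then apply Lemma \ref{lemma:Riemann sum NEW} to each $\tilde\Sigma_\ell$ with $A = br^{2b}/(1+\epsilon)$, $a_0 = \theta_-^{(n,\epsilon)} - \alpha$, $B = br^{2b}/(1 + M/\sqrt{n})$, $b_0 = \theta_-^{(n,M)} - 1 - \alpha$, and Taylor expand $\tilde f_\ell(B)$, $\tilde f_\ell'(B)$, and $\int_B^A \tilde f_\ell$ in powers of $M/\sqrt{n}$, using $\tilde f_\ell(B) = \mathcal{O}((M/\sqrt{n})^{-\min(a,2\ell)})$. The outputs are the exact mirror analogues of \eqref{lol15}--\eqref{lol18}, producing terms of orders $n$, $\sqrt{n}$, $M^2$, $\ln(M/\sqrt{n})$, $1$, with error $\mathcal{O}(\sqrt{n}/M^5 + M^3\ln n/\sqrt{n})$.

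Finally, I would expand the linear piece
\[
u(g_- - j_-) = u\!\left[\frac{bnr^{2b}}{1 + M/\sqrt{n}} - \frac{bnr^{2b}}{1 + \epsilon} + \theta_-^{(n,M)} - \theta_-^{(n,\epsilon)}\right] = \frac{u\epsilon br^{2b}}{1+\epsilon}\, n - u b r^{2b} M\sqrt{n} + u b r^{2b} M^2 + u(\theta_-^{(n,M)} - \theta_-^{(n,\epsilon)}) + \mathcal{O}(M^3/\sqrt{n}),
\]
and add this to the Riemann-sum contributions. The $u$-independent terms match their counterparts in Lemma \ref{lemma:S2kp1p} up to the noted sign/branch flip and $\theta_+ \leftrightarrow \theta_-$ relabeling, while the $-u$ inside the $\sqrt{n}$-brace, the $8bu$ inside the $M^2$ coefficient, and the $u$ pieces in the two constant-order boundary lines all arise directly from this linear expansion; the integral $u\cdot n \int_{br^{2b}/(1+\epsilon)}^{br^{2b}} dx$ combines with $\tilde\Sigma_0$ to produce the single integral $n\int_{br^{2b}/(1+\epsilon)}^{br^{2b}}\bigl[u + a\ln(r-(x/b)^{1/2b})\bigr]dx$ of the claim. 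The main obstacle, exactly as in Lemma \ref{lemma:S2kp1p}, is the careful tracking of the near-singular Taylor expansions at the endpoint $B \to (br^{2b})^-$ for each $\tilde f_\ell$ (producing the $(a-1)b\sqrt{n}/(2M)$ and $-ab(3 - 5a + 2a^2)/(12 r^{2b} M^3)$ contributions), together with the combinatorial bookkeeping of signs, boundary-value evaluations, and $\theta_-$-corrections; no new analytic input beyond Lemmas \ref{lemma: uniform}, \ref{lemma:ratio of gamma} and \ref{lemma:Riemann sum NEW} is required.
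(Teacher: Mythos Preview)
Your proposal is correct and follows the same route as the paper's proof; the only difference is cosmetic, namely that you pull the factor $e^{u}$ out as $u(g_{-}-j_{-})$ and expand it separately, whereas the paper keeps $u$ inside the summand by working directly with $\widetilde{f}_{0}(x)=u+a\ln\bigl(r-(x/b)^{1/2b}\bigr)$ (your expansion of $u(g_{-}-j_{-})$ reproduces exactly the $u$-contributions obtained that way). Two minor points: the reference to Lemma~\ref{lemma: uniform}(ii) should be to the general $\tfrac12\mathrm{erfc}+R_{\tilde a}$ decomposition of that lemma, since part~(ii) concerns $\lambda<1$ while here $\lambda_{j,k}>1$ (your actual argument via \eqref{asymp of Ra} and the $\mathrm{erfc}$ asymptotics is the right one); and since $\gamma_{\ell}/\gamma_{0}$ is the same rational expression on both sides of $br^{2b}$, your $\tilde f_{1},\tilde f_{2}$ coincide with $f_{1},f_{2}$, which is how the paper uses them.
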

\begin{proof}
By \eqref{asymp prelim of S2kpvp} and Lemma \ref{lemma: uniform},
\begin{align*}
S_{2}^{(3)} & = \sum_{j:\lambda_{j}\in I_{3}} \ln \bigg(\sum_{k=0}^{a} {a \choose k} \frac{(-r)^{a-k}}{n^{\frac{k}{2b}}} \frac{\Gamma(\frac{2j+2\alpha+k}{2b})}{\Gamma(\frac{2j+2\alpha}{2b})} \\
& \hspace{3cm} \times \bigg[ 1+((-1)^{a}e^{u}-1)\bigg( \frac{1}{2}\mathrm{erfc}\Big(\hspace{-0.1cm}-\hspace{-0.05cm}\eta_{j,k} \sqrt{\tfrac{a_{j,k}}{2}}\Big) - R_{a_{j,k}}(\eta_{j,k}) \bigg) \bigg]\bigg).
\end{align*}
Because $I_{3} = (1+\frac{M}{\sqrt{n}},1+\epsilon]$, we have
\begin{align*}
& \eta_{j,k} = \lambda_{j,k}-1 +\bigO((\lambda_{j,k}-1)^{2}) \geq \tfrac{M}{\sqrt{n}} + \bigO(\tfrac{M^{2}}{n}), & & \mbox{as } n \to \infty, \\
& -\eta_{j,k} \sqrt{a_{j,k}/2} \leq - \tfrac{M r^{b}}{\sqrt{2}} + \bigO(\tfrac{M^{2}}{\sqrt{n}}), & & \mbox{as } n \to \infty,
\end{align*}
uniformly for $j\in \{j:\lambda_{j}\in I_{3}\}$ and $k \in \{1,\ldots,a\}$. Since $M = n^{\frac{1}{8}}(\ln n)^{-\frac{1}{8}}$, we get
\begin{align*}
& R_{a_{j,k}}(\eta_{j,k}) = \bigO(e^{-\frac{r^{2b}M^{2}}{4}  }) = \bigO(e^{-n^{c}}), & & \frac{1}{2}\mathrm{erfc}\Big(\hspace{-0.1cm}-\hspace{-0.05cm}\eta_{j,k} \sqrt{\tfrac{a_{j,k}}{2}}\Big)  = 1-\bigO(e^{-\frac{r^{2b}M^{2}}{4} }) = 1-\bigO(e^{-n^{c}}),
\end{align*}
as $n \to + \infty$ uniformly for $j\in \{j:\lambda_{j}\in I_{3}\}$ and $k \in \{1,\ldots,a\}$, and thus 
\begin{align}\label{lol20}
S_{2}^{(3)} & = \sum_{j=j_{-}}^{g_{-}-1} \ln \bigg( \sum_{k=0}^{a} {a \choose k} \frac{r^{a-k}(-1)^{k}e^{u}}{n^{\frac{k}{2b}}} \frac{\Gamma(\frac{2j+2\alpha+k}{2b})}{\Gamma(\frac{2j+2\alpha}{2b})} \bigg) + \bigO(e^{-n^{c}}),
\end{align}
where we have also used \eqref{sums lambda j}. By \eqref{lol37},
\begin{align}\label{lol37 bis}
& \sum_{k=0}^{a} {a \choose k} \frac{r^{a-k}(-1)^{k}e^{u}}{n^{\frac{k}{2b}}} \frac{\Gamma(\frac{2j+2\alpha+k}{2b})}{\Gamma(\frac{2j+2\alpha}{2b})} \sim e^{u} \bigg\{\gamma_{0}(j/n) + \sum_{\ell=1}^{+\infty} \frac{\sum_{m=1}^{2\ell}\mathfrak{p}_{2\ell,m}\gamma_{m}(j/n)}{j^{\ell}} \bigg\},
\end{align}
as $n \to + \infty$ uniformly for $j \in \{j_{-},\ldots,g_{-}-1\}$. Using Lemma \ref{lemma:gammal}, \eqref{asymp of gamma ratio} and the definitions \eqref{def of jminus and jplus}, \eqref{sums lambda j} of $j_-$ and $g_-$, we infer that \eqref{lol13} holds also as $n \to + \infty$ uniformly for $j\in \{j_{-},\ldots,g_{-}-1\}$. Hence, substituting \eqref{lol37 bis} in \eqref{lol20} and using \eqref{def of gammaj Stirling} (with $\ell=0,1,2,3,4$), \eqref{estimate for gamma ell} and \eqref{lol13}, we get
\begin{align}\label{lol21}
& S_{2}^{(3)} = \Sigma_{0}^{(3)} + \frac{1}{n}\Sigma_{1}^{(3)} + \frac{1}{n^{2}}\Sigma_{2}^{(3)} +  \bigO\bigg( \frac{\sqrt{n}}{M^{5}} \bigg)
\end{align}
as $n \to + \infty$, where
\begin{align*}
\Sigma_{0}^{(3)} := \sum_{j=j_{-}}^{g_{-}-1} \widetilde{f}_{0}(j/n), \qquad \Sigma_{\ell}^{(3)} := \sum_{j=j_{-}}^{g_{-}-1} f_{\ell}(j/n), \qquad \ell=1,2,
\end{align*}
$\widetilde{f}_{0}$ is given by \eqref{lol22}, $f_{1}$ is given by \eqref{def of f0 and f1}, and $f_{2}$ is given by \eqref{def of f2}. 

Using Lemma \ref{lemma:Riemann sum NEW} (with $A=\frac{br^{2b}}{1+\epsilon}$, $a_{0}=\theta_{-}^{(n,\epsilon)}-\alpha$, $B = \frac{br^{2b}}{1+\frac{M}{\sqrt{n}}}$ and $b_{0} = \theta_{-}^{(n,M)}-1-\alpha$), we get
\begin{align}
& \Sigma_{0}^{(3)} = n \int_{\frac{br^{2b}}{1+\epsilon}}^{\frac{br^{2b}}{1+\frac{M}{\sqrt{n}}}}\widetilde{f}_{0}(x)dx + \frac{(1+2\alpha-2\theta_{-}^{(n,\epsilon)})\widetilde{f}_{0}(\frac{br^{2b}}{1+\epsilon})+(2\theta_{-}^{(n,M)}-1-2\alpha)\widetilde{f}_{0}(\frac{br^{2b}}{1+\frac{M}{\sqrt{n}}})}{2} + \bigO\bigg( \frac{1}{M\sqrt{n}} \bigg), \nonumber \\
& \frac{1}{n}\Sigma_{1}^{(3)} = \int_{\frac{br^{2b}}{1+\epsilon}}^{\frac{br^{2b}}{1+\frac{M}{\sqrt{n}}}} f_{1}(x)dx + \bigO\bigg( \frac{1}{M^{2}} \bigg), \qquad \frac{1}{n^{2}}\Sigma_{2}^{(3)} = \frac{1}{n}\int_{\frac{br^{2b}}{1+\epsilon}}^{\frac{br^{2b}}{1+\frac{M}{\sqrt{n}}}} f_{2}(x)dx + \bigO\bigg( \frac{1}{M^{4}} \bigg), \label{lol24}
\end{align}
as $n \to + \infty$. Furthermore, a long but straightforward analysis of $\widetilde{f}_{0}$, $f_{1}$ and $f_{2}$ shows that
\begin{align}
\widetilde{f}_{0}\bigg( \frac{br^{2b}}{1+\frac{M}{\sqrt{n}}} \bigg) & = u+ a \ln \bigg( \frac{Mr}{2b\sqrt{n}} \bigg) + \bigO \bigg( \frac{M}{\sqrt{n}} \bigg), \label{lol25} \\
n \int_{\frac{br^{2b}}{1+\epsilon}}^{\frac{br^{2b}}{1+\frac{M}{\sqrt{n}}}}\widetilde{f}_{0}(x)dx & = n \int_{\frac{br^{2b}}{1+\epsilon}}^{br^{2b}}\widetilde{f}_{0}(x)dx + \sqrt{n} \; b  r^{2b} M \bigg( a-u+a\ln\bigg( \frac{2b\sqrt{n}}{r M} \bigg) \bigg) + a b r^{2b} M^{2} \ln \bigg( \frac{M}{\sqrt{n}} \bigg) \nonumber \\
& \quad  + \frac{r^{2b} M^{2}(a-2ab+8bu+8ab \ln (\frac{r}{2b}))}{8} + \bigO \bigg( \frac{M^{3}\ln n}{\sqrt{n}} \bigg), \label{lol26} \\
\int_{\frac{br^{2b}}{1+\epsilon}}^{\frac{br^{2b}}{1+\frac{M}{\sqrt{n}}}}f_{1}(x)dx & = \frac{a(a-1)b\sqrt{n}}{2M} + \frac{a}{4}\bigg\{ \frac{(a-1)(1+2b)}{2} + \frac{1-a}{1-(\frac{1}{1+\epsilon})^{\frac{1}{2b}}} + (a-2b+4\alpha)\ln \bigg( \frac{M}{2b\sqrt{n}} \bigg) \nonumber \\
& \quad - (a-2b+4\alpha) \ln \bigg( 1-(1+\epsilon)^{-\frac{1}{2b}} \bigg) \bigg\} + \bigO\bigg( \frac{M}{\sqrt{n}} \bigg), \label{lol27} \\
\frac{1}{n}\int_{\frac{br^{2b}}{1+\epsilon}}^{\frac{br^{2b}}{1+\frac{M}{\sqrt{n}}}}f_{2}(x)dx & = - \frac{a b (3-5a+2a^{2})\sqrt{n}}{12 r^{2b} M^{3}} + \bigO\bigg( \frac{1}{M^{2}} \bigg). \label{lol28}
\end{align}
Substituting \eqref{lol25}--\eqref{lol28} in \eqref{lol24} and then in \eqref{lol21}, we obtain the claim after another long but direct computation.
\end{proof}
\subsection{Local analysis: large $n$ asymptotics of $S_{2}^{(2)}$}
Our next goal is to obtain the large $n$ asymptotics of $S_{2}^{(2)}$. This is the most technical part of the proof of Theorem \ref{thm:main thm}. As mentioned in the introduction, a major obstacle in the asymptotic analysis of $S_{2}^{(2)}$ is that, in order to treat the general case $a \in \mathbb{N}$, we need to expand various quantities to all orders. Lemma \ref{lemma:some computation for S2kp2p} below provides a general scheme to compute the coefficients appearing in these expansions in a recursive way. These coefficients are not all readily available in explicit forms, however only a few of those will really matter for us. Lemmas \ref{lemma:sum of qaa(1)} and \ref{lemma:sum of qaa(4)} establish some non-trivial identities between those relevant coefficients and the (associated) Hermite polynomials. The large $n$ asymptotics of $S_{2}^{(2)}$ are then obtained in Lemma \ref{lemma:S2kp2p}.

Let us define
\begin{align}
& M_{j,k} := \sqrt{n}(\lambda_{j,k}-1), & & k \in \{1,\ldots,m\}, \; j \in \{j: \lambda_{j} \in I_{2}\}=\{g_{-},\ldots,g_{+}\},  \label{Mjk}
\\
& M_{j} := \sqrt{n}(\lambda_{j}-1), & & j \in \{j: \lambda_{j} \in I_{2}\}=\{g_{-},\ldots,g_{+}\}.  \label{Mj}
\end{align}
Since $I_{2}=[1-\tfrac{M}{\sqrt{n}},1+\tfrac{M}{\sqrt{n}}]$ and $\lambda_{j}$ is decreasing in $j$, we have $-M \leq M_{g_{+}} < \ldots < M_{g_{-}} \leq M$. Note from \eqref{def of aj}--\eqref{def of ajk} that $\lambda_{j,k}$ is close to $\lambda_{j}$ for large $n$, and from \eqref{Mjk}--\eqref{Mj} that $M_{j,k}$ is close to $M_{j}$ for large $n$. For convenience, for $j \in \{g_{-},\ldots,g_{+}\}$, we also define
\begin{align}\label{def of Xijk}
\Xi_{j,k} := \frac{M_j r^b}{\sqrt{2}} - \frac{\sqrt{a_{j,k}}}{\sqrt{2}} \eta_{j,k},
\end{align}
where $a_{j,k}$ and $\eta_{j,k}$ are given in \eqref{def of ajk}. 
\begin{lemma}\label{lemma:some computation for S2kp2p}
Let $k \in \{0,1,\ldots,a\}$ be fixed. As $n \to + \infty$ and uniformly for $j \in \{g_{-},\ldots,g_{+}\}$, we have
\begin{align}
& \frac{1}{n^{\frac{k}{2b}}}\frac{\Gamma(\frac{2j+2\alpha+k}{2b})}{\Gamma(\frac{2j+2\alpha}{2b})} \sim r^{k} \sum_{\ell=0}^{+\infty} \frac{\sum_{m=0}^{\ell}\sum_{p=0}^{\ell} \mathfrak{q}_{\ell,m,p}^{(1)}k^{m}M_{j}^{p}}{n^{\frac{\ell}{2}}},   \label{Gamma ratio qlmp(1)}
\\
& \Xi_{j,k} \sim \Xi_{j,k}^{\mathrm{formal}} := \sum_{\ell=1}^{+\infty} \frac{\sum_{m=0}^{ \lfloor (\ell+1)/2 \rfloor } d_{\ell,m} \, k^m M_j^{\ell+1-2m}}{ n^{\ell/2} }, \label{Xi expansion} \\
& \frac{1}{2}\mathrm{erfc}\Big(\hspace{-0.1cm}-\hspace{-0.05cm}\eta_{j,k} \sqrt{\tfrac{a_{j,k}}{2}}\Big) \sim \frac{1}{2}\mathrm{erfc}\Big( \hspace{-0.1cm}-\hspace{-0.05cm}\frac{M_{j}r^{b}}{\sqrt{2}} \Big)-\frac{e^{-\frac{M_{j}^{2}r^{2b}}{2}}}{\sqrt{2\pi}} \sum_{\ell=1}^{+\infty} \frac{\sum_{m=0}^{\ell}\sum_{p=0}^{3\ell-1-2m} \mathfrak{q}_{\ell,m,p}^{(2)}k^{m}M_{j}^{p}}{n^{\frac{\ell}{2}}}, \label{erfc qlmp(2)}
\\
& R_{a_{j,k}}(\eta_{j,k}) \sim -\frac{e^{-\frac{M_{j}^{2}r^{2b}}{2}}}{3\sqrt{2\pi} \; r^{b} \sqrt{n}} \sum_{\ell=0}^{+\infty} \frac{\sum_{m=0}^{\ell}\sum_{p=0}^{3\ell-2m} \mathfrak{q}_{\ell,m,p}^{(3)}k^{m}M_{j}^{p}}{n^{\frac{\ell}{2}}}, \label{R qlmp(3)} \\
&\frac{1}{n^{\frac{k}{2b}}} \frac{\Gamma(\frac{2j+2\alpha+k}{2b})}{\Gamma(\frac{2j+2\alpha}{2b})} \bigg[ 1+((-1)^{a}e^{u}-1)\bigg( \frac{1}{2}\mathrm{erfc}\Big(\hspace{-0.1cm}-\hspace{-0.05cm}\eta_{j,k} \sqrt{\tfrac{a_{j,k}}{2}}\Big)  - R_{a_{j,k}}(\eta_{j,k}) \bigg) \bigg] \sim r^{k}\sum_{\ell=0}^{+\infty} \frac{\mathcal{A}_{\ell}(M_{j};k)}{\sqrt{n^{\ell} }}, \label{lol29}
\end{align}
for some $\mathfrak{q}_{\ell,m,p}^{(1)},\mathfrak{q}_{\ell,m,p}^{(2)},\mathfrak{q}_{\ell,m,p}^{(3)},\mathfrak{q}_{\ell,m,p}^{(4)},d_{\ell,m} \in \mathbb{C},$ $\mathfrak{q}_{0,0,0}^{(1)}:=1$, $\mathfrak{q}_{0,0,0}^{(3)}:=1$, where $a_{j,k}$, $\eta_{j,k}$ are given in \eqref{def of ajk}, and the $\mathcal{A}_{\ell}$'s are defined by
\begin{align}
\mathcal{A}_{0}(x;k) & =  1+\frac{(-1)^{a}e^{u}-1}{2}\mathrm{erfc}\bigg( -\frac{r^{b}x}{\sqrt{2}} \bigg), \label{def of Acal0} \\
\mathcal{A}_{\ell}(x;k) & = \bigg( 1+\frac{(-1)^{a}e^{u}-1}{2}\mathrm{erfc}\bigg( -\frac{r^{b}x}{\sqrt{2}} \bigg) \bigg) \sum_{m=1}^{\ell}\sum_{p=0}^{\ell}\mathfrak{q}_{\ell,m,p}^{(1)}k^{m}x^{p} \nonumber \\
&\quad  + ((-1)^{a}e^{u}-1)\frac{e^{-\frac{r^{2b}x^{2}}{2}}}{\sqrt{2\pi} } \sum_{m=0}^{\ell}\sum_{p=0}^{ 3\ell-1-2m } \mathfrak{q}_{\ell,m,p}^{(4)}k^{m}x^{p}, \quad \ell \geq 1. \label{expansion of mathcalA}
\end{align}
For $\ell \geq 1$, $m=0$ and $p=0,\ldots,\ell$, $\mathfrak{q}_{\ell,m,p}^{(1)}:=0$. The other coefficients $\{\mathfrak{q}_{\ell,m,p}^{(1)}\}$ are given by
\begin{align}
& \sum_{m=1}^{\ell}\sum_{p=0}^{\ell} \mathfrak{q}_{\ell,m,p}^{(1)}k^{m}x^{p}=  \sum_{s=0 }^{ \lfloor \ell/2 \rfloor  } \frac{1}{  ( r^{2b} )^{s} }  \binom{\frac{k}{2b}}{s} B_s^{(1+ \frac{k}{2b}) }\Big( \frac{k}{2b}\Big)  \binom{ s-\frac{k}{2b} }{ \ell-2s } x^{\ell-2s}, \label{mathfrak qlmp(1)}
\end{align}
and the coefficients $\{\mathfrak{q}_{\ell,m,p}^{(j)}\}_{j=2}^{4}$ can be found by equaling the terms of the same order in $n$ in the following formal power series
\begin{align}
& \sum_{\ell=1}^{+\infty} \frac{\sum_{m=0}^{\ell}\sum_{p=0}^{3\ell-1-2m} \mathfrak{q}_{\ell,m,p}^{(2)}k^{m}M_{j}^{p}}{n^{\frac{\ell}{2}}} = \sum_{\ell=1}^{+\infty} \frac{ 2^{ \ell/2 } }{\ell!} \He_{\ell-1} ( M_j r^b ) (\Xi_{j,k}^{\mathrm{formal}})^\ell,   \label{mathfrak qlmp(2)} 
\\
& \sum_{\ell=0}^{+\infty} \frac{\sum_{m=0}^{\ell}\sum_{p=0}^{3\ell-2m} \mathfrak{q}_{\ell,m,p}^{(3)}k^{m}M_{j}^{p}}{n^{\frac{\ell}{2}}} = -3 r^{b}\sqrt{n} \bigg( \sum_{\ell=0}^{+\infty} \frac{ 2^{ \ell/2 } }{\ell!}\, \He_\ell( M_j r^b )  (\Xi_{j,k}^{\mathrm{formal}})^\ell  \bigg) \bigg(\sum_{\ell=0}^{+\infty} \frac{c_{\ell}(\eta_{j,k})}{a_{j,k}^{\ell+\frac{1}{2}}} \bigg),   \label{mathfrak qlmp(3)} 
\\
& \sum_{\ell=1}^{+\infty} \frac1{n^{ \frac{\ell}{2} }}  \sum_{m=0}^{\ell}\sum_{p=0}^{3\ell-1-2m} \mathfrak{q}_{\ell,m,p}^{(4)}k^{m}M_{j}^{p}
	= -\bigg( 1+ \sum_{\ell=1}^{+\infty} \frac{\sum_{m=1}^{\ell}\sum_{p=0}^{\ell} \mathfrak{q}_{\ell,m,p}^{(1)}k^{m}M_{j}^{p}}{n^{\frac{\ell}{2}}}\bigg)  \nonumber \\
	& \hspace{0.5cm}\times \bigg\{ \sum_{\ell=1}^{+\infty} \frac{\sum_{m=0}^{\ell}\sum_{p=0}^{ 3\ell-1-2m } \mathfrak{q}_{\ell,m,p}^{(2)}k^{m}M_{j}^{p}}{n^{\frac{\ell}{2}}} -\frac{1}{3r^{b}\sqrt{n}}\sum_{\ell=0}^{+\infty} \frac{\sum_{m=0}^{\ell}\sum_{p=0}^{3\ell-2m} \mathfrak{q}_{\ell,m,p}^{(3)}k^{m}M_{j}^{p}}{n^{\frac{\ell}{2}}} \bigg\}, \label{q lmp 5 2 3}
\end{align}
where the $c_{\ell}$'s are defined recursively as in \eqref{recursive def of the cj}.  
\end{lemma}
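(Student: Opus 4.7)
The overall strategy is to obtain each of the five expansions \eqref{Gamma ratio qlmp(1)}--\eqref{lol29} by expanding the relevant quantities as formal power series in $n^{-1/2}$ whose coefficients are polynomials in $k$ and $M_{j}$ of controlled degree. The natural reference point is $M_{j}$ (not $M_{j,k}$) because, since $\lambda_{j,k}/\lambda_{j}=(1+k/(2(j+\alpha)))^{-1}$, we have $M_{j,k}=M_{j}+\bigO(n^{-1/2})$ with $k$-dependent corrections. The first expansion, \eqref{Gamma ratio qlmp(1)}, follows directly from Lemma \ref{lemma:ratio of gamma}: I substitute the identity $(j+\alpha)/b = nr^{2b}/(1+M_{j}/\sqrt{n})$ into \eqref{large j asymp for the ratio of Gamma}, rewriting $(j/(nb))^{k/2b}$ as $r^{k}(1+M_{j}/\sqrt{n})^{-k/2b}$ times a power series in $n^{-1}$. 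The closed form \eqref{mathfrak qlmp(1)} of $\mathfrak{q}_{\ell,m,p}^{(1)}$ is then identified by collecting like powers of $n^{-1/2}$ and recognizing the generalized Bernoulli polynomials that emerge from \eqref{def of gen Bernouilli} and \eqref{def of p2l}.

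For the expansion \eqref{Xi expansion} of $\Xi_{j,k}$, I start from the defining relation $\tfrac{1}{2}\eta_{j,k}^{2}=\lambda_{j,k}-1-\ln\lambda_{j,k}$, which, after solving for $\eta_{j,k}$ as a Taylor series in $\lambda_{j,k}-1$, gives $\eta_{j,k}=(\lambda_{j,k}-1)-\tfrac{1}{3}(\lambda_{j,k}-1)^{2}+\cdots$. Substituting $\lambda_{j,k}-1=M_{j,k}/\sqrt{n}$, expanding $\sqrt{a_{j,k}}$ from \eqref{def of ajk} in $n^{-1/2}$, and re-expressing $M_{j,k}$ in terms of $M_{j}$ and $k$ via $\lambda_{j,k}/\lambda_{j}=(1+k/(2(j+\alpha)))^{-1}$, one finds that the leading term of $\sqrt{a_{j,k}/2}\,\eta_{j,k}$ coincides exactly with $M_{j}r^{b}/\sqrt{2}$, so this cancels in $\Xi_{j,k}$. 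The stated degree bound $m\leq\lfloor(\ell+1)/2\rfloor$ follows by tracking, at each order, the number of $k$-factors produced by $\sqrt{a_{j,k}}$ and by the series for $\eta_{j,k}$ against the number of $M_{j}$-factors produced by $M_{j,k}^{p}$.

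The expansion \eqref{erfc qlmp(2)} is obtained by writing $\tfrac{1}{2}\mathrm{erfc}(-\eta_{j,k}\sqrt{a_{j,k}/2}) = \tfrac{1}{2}\mathrm{erfc}(-M_{j}r^{b}/\sqrt{2}+\Xi_{j,k})$ and Taylor expanding in the small parameter $\Xi_{j,k}$. Since the $\ell$-th derivative of $\tfrac{1}{2}\mathrm{erfc}(-y)$ at $y=M_{j}r^{b}/\sqrt{2}$ evaluates to $\tfrac{1}{\sqrt{2\pi}}\,2^{\ell/2}\He_{\ell-1}(M_{j}r^{b})e^{-M_{j}^{2}r^{2b}/2}$ (via $H_{\ell}(x)=2^{\ell/2}\He_{\ell}(\sqrt{2}x)$), substituting \eqref{Xi expansion} for $\Xi_{j,k}^{\ell}$ and collecting powers of $n^{-1/2}$ yields exactly \eqref{mathfrak qlmp(2)}. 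For \eqref{R qlmp(3)}, I apply the asymptotic expansion $R_{a_{j,k}}(\eta_{j,k})\sim -(2\pi)^{-1/2}e^{-a_{j,k}\eta_{j,k}^{2}/2}\sum_{\ell\geq 0}c_{\ell}(\eta_{j,k})/a_{j,k}^{\ell+1/2}$ from Lemma \ref{lemma: uniform}, expand $e^{-a_{j,k}\eta_{j,k}^{2}/2}=e^{-(M_{j}r^{b}/\sqrt{2}-\Xi_{j,k})^{2}}$ as $e^{-M_{j}^{2}r^{2b}/2}\sum_{\ell\geq 0}(2^{\ell/2}/\ell!)\He_{\ell}(M_{j}r^{b})\Xi_{j,k}^{\ell}$ via the generating function for Hermite polynomials, and multiply by the formal expansions of $c_{\ell}(\eta_{j,k})/a_{j,k}^{\ell+1/2}$, which are themselves power series in $n^{-1/2}$ with polynomial coefficients; the prefactor in \eqref{R qlmp(3)} is arranged so that $\mathfrak{q}_{0,0,0}^{(3)}=1$.

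Finally, \eqref{lol29} and the companion formula \eqref{q lmp 5 2 3} follow by multiplying \eqref{Gamma ratio qlmp(1)}, \eqref{erfc qlmp(2)}, and \eqref{R qlmp(3)} together and collecting terms of equal order in $n^{-1/2}$, which also identifies the $\mathcal{A}_{\ell}$ via \eqref{def of Acal0}--\eqref{expansion of mathcalA}. I expect the main obstacle to be the combinatorial verification that the stated degree bounds in $k$ and $M_{j}$ hold at every order $\ell$: this requires carefully tracking at each step the $k$-powers introduced by $(1+M_{j}/\sqrt{n})^{-k/2b}$, by the generalized Bernoulli polynomials $\mathfrak{p}_{2\ell}(k)$, by $\sqrt{a_{j,k}}$, and by the Hermite polynomials $\He_{\ell-1}(M_{j}r^{b})$, and showing that these contributions respect the upper bounds $p\leq\ell$, $p\leq 3\ell-1-2m$, and $p\leq 3\ell-2m$ stated in the lemma.
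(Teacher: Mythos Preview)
Your proposal is correct and follows essentially the same approach as the paper: expand the Gamma ratio via Lemma~\ref{lemma:ratio of gamma} combined with $(j+\alpha)/b=nr^{2b}/(1+M_j/\sqrt{n})$, Taylor-expand $\Xi_{j,k}$ through the series for $\eta_{j,k}$ in $\lambda_{j,k}-1$, Taylor-expand $\tfrac12\mathrm{erfc}$ in $\Xi_{j,k}$ using the Hermite-polynomial form of its derivatives, expand $R_{a_{j,k}}(\eta_{j,k})$ via \eqref{asymp of Ra} together with the Hermite generating function for $e^{-(\cdot)^2}$, and then multiply everything out.

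Two minor remarks. First, the paper streamlines the derivation of \eqref{mathfrak qlmp(1)} by observing at the outset that the coefficients $\mathfrak{q}_{\ell,m,p}^{(1)}$ cannot depend on $\alpha$ (since $j+\alpha$ enters only through $M_j$), and then sets $\alpha=0$; this lets the generalized Bernoulli polynomial appear directly as $B_s^{(1+k/2b)}(k/2b)$ rather than through an $\alpha$-dependent argument that would need to be simplified. Second, there is a sign slip in your summary of \eqref{asymp of Ra}: the asymptotic series for $R_a(\eta)$ carries a $+$ sign, and the overall minus in \eqref{R qlmp(3)} comes from $c_0(0)=-\tfrac13$, which is exactly how the prefactor $-1/(3\sqrt{2\pi}\,r^b\sqrt{n})$ and the normalization $\mathfrak{q}_{0,0,0}^{(3)}=1$ arise.
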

\begin{proof}
By \eqref{def of aj} and \eqref{Mj}, we have $j+\alpha= \frac{bnr^{2b}}{1+\frac{M_{j}}{\sqrt{n}}}$. This, by \eqref{Gamma ratio qlmp(1)}, implies that the coefficients $\mathfrak{q}_{\ell,m,p}^{(1)}$ do not depend on $\alpha$. For $\alpha=0$, using \eqref{large j asymp for the ratio of Gamma}, we get
\begin{align*}
\frac{1}{n^{\frac{k}{2b}}} \frac{\Gamma(\frac{2j+k}{2b})}{\Gamma(\frac{2j}{2b})} &\sim   r^k   \sum_{\ell=0}^{+\infty} \frac{1}{  ( r^{2b} )^{\ell} }  \binom{\frac{k}{2b}}{\ell} B_\ell^{(1+ \frac{k}{2b}) }\Big( \frac{k}{2b}\Big) \frac1{n^{\ell}}\bigg(  1+\frac{M_j}{\sqrt{n}} \bigg)^{\ell-\frac{k}{2b}} \\
& = r^k   \sum_{\ell=0}^{+\infty} \frac{1}{  ( r^{2b} )^{\ell} }  \binom{\frac{k}{2b}}{\ell} B_\ell^{(1+ \frac{k}{2b}) }\Big( \frac{k}{2b}\Big) \frac1{n^{\ell}} \sum_{s=0}^{+\infty} \binom{\ell-\frac{k}{2b}}{s} \frac{M_j^{s}}{n^{\frac{s}{2}}} \\
&= r^k   \sum_{\ell=0}^{+\infty} \bigg[ \sum_{s=0 }^{ \lfloor \ell/2 \rfloor  } \frac{1}{  ( r^{2b} )^{s} }  \binom{\frac{k}{2b}}{s} B_s^{(1+ \frac{k}{2b}) }\Big( \frac{k}{2b}\Big)  \binom{ s-\frac{k}{2b} }{ \ell-2s } M_j^{\ell-2s} \bigg] \frac1{n^{\frac{\ell}{2}}},
\end{align*}
and \eqref{Gamma ratio qlmp(1)}, \eqref{mathfrak qlmp(1)} follow. By \eqref{def of Xijk},
\begin{align} \label{Xi jk lambda jk}
\Xi_{j,k}-\frac{ M_j r^b }{\sqrt{2}} = - \sqrt{nr^{2b}} (\lambda_{j,k}-1)	 \sqrt{ \frac{ \lambda_{j,k}-1-\ln \lambda_{j,k} }{ \lambda_{j,k} (\lambda_{j,k}-1)^{2}   } } \sim - \sqrt{nr^{2b}} \sum_{s=1}^{+\infty} \mathfrak{a}_{s}^{(1)} (\lambda_{j,k}-1)^{s}
\end{align}
as $n \to + \infty$ uniformly for $j \in \{g_{-},\ldots,g_{+}\}$, for some $\{\mathfrak{a}_{s}^{(1)}\}_{s=1}^{+\infty}\subset \mathbb{C}$ that are independent of $j$ and $k$. The all-order expansion \eqref{Xi expansion} now follows from 
\begin{align*}
\lambda_{j,k}&= \Big( 1+\frac{M_j}{\sqrt{n}} \Big)\bigg( 1+\frac{k}{2bnr^{2b}} \Big(1+\frac{M_j}{\sqrt{n}}\Big)\bigg)^{-1} \sim \sum_{\ell=0}^{+\infty} \Big( \frac{-k}{2bnr^{2b}} \Big)^\ell \Big(1+\frac{M_j}{\sqrt{n}}\Big)^{\ell+1}= \sum_{\ell=0}^{+\infty} \frac{ [\lambda_{j,k}]_\ell }{n^{\ell/2}},
\end{align*} 
where $[\lambda_{j,k}]_\ell $ is given by 
\begin{align}\label{lambda jk all order}
[\lambda_{j,k}]_\ell = \sum_{s=\lceil (\ell-1)/3 \rceil }^{\lfloor \ell/2 \rfloor } \Big( \frac{-1}{2br^{2b}} \Big)^s \binom{s+1}{ \ell-2s } \, k^s M_j^{\ell-2s} .
\end{align}
Using that 
\begin{equation} \label{erfc derivatives}
\frac{1}{2}\frac{d^{\ell}}{dz^{\ell}} \mathrm{erfc}(z)= -\frac{1}{\sqrt{\pi}}  \frac{d^{\ell-1}}{dz^{\ell-1}} e^{-z^2} =  \frac{(-1)^{\ell}}{\sqrt{2\pi}}  \, 2^{\frac{\ell}{2}} \He_{\ell-1}(\sqrt{2}z) \,e^{-z^2}, \qquad \ell \in \mathbb{N},
\end{equation}
and $\He_\ell(z)=(-1)^{\ell}\He_\ell(-z)$, we infer that
\begin{align*}
\frac12 \mathrm{erfc} \Big(  -\frac{ M_j r^b }{\sqrt{2}}+ \Xi_{j,k}\Big) \sim \frac12 \mathrm{erfc} \Big(  -\frac{ M_j r^b }{\sqrt{2}}\Big)-\frac{e^{ -\frac{M_j^2 r^{2b}}{2} }}{\sqrt{2\pi}}  \sum_{\ell=1}^{+\infty} \frac{ 2^{ \ell/2 } }{\ell!} \He_{\ell-1} ( M_j r^b ) \Xi_{j,k}^{\ell}
\end{align*}
as $n \to + \infty$ uniformly for $j \in \{g_{-},\ldots,g_{+}\}$. Also, from \eqref{Xi expansion}, we obtain
\begin{align}\label{Xi l expansion}
\Xi_{j,k}^{\ell}\sim (\Xi_{j,k}^{\mathrm{formal}})^{\ell} = \bigg(\sum_{s=1}^{+\infty} \frac{\sum_{m=0}^{ \lfloor (s+1)/2 \rfloor } d_{s,m} \, k^m M_j^{s+1-2m}}{ n^{s/2} } \bigg)^{\ell} = \sum_{s=\ell}^{+\infty} \frac{\sum_{m=0}^{ \lfloor (s+\ell)/2 \rfloor } d_{s,m}^{(\ell)} \, k^m M_j^{s+\ell-2m}}{ n^{s/2} }
\end{align}
for some coefficients $\{d_{s,m}^{(\ell)}\}  \subset \mathbb{C}$. This implies \eqref{erfc qlmp(2)} and \eqref{mathfrak qlmp(2)}. 
Next, by \eqref{asymp of Ra}, we have
\begin{align*}
	& R_{a_{j,k}}(\eta_{j,k}) \sim \frac{e^{-\frac{1}{2}a_{j,k} \eta_{j,k}^{2}}}{\sqrt{2\pi a_{j,k} }}\sum_{\ell=0}^{+\infty} \frac{c_{\ell}(\eta_{j,k})}{a_{j,k}^{\ell}}, \qquad \mbox{as } n \to + \infty,
\end{align*}
uniformly for $j \in \{g_{-},\ldots,g_{+}\}$. Using again \eqref{erfc derivatives}, we get
\begin{align*}
e^{-\frac{1}{2}a_{j,k} \eta_{j,k}^{2}} \sim e^{-\frac{M_{j}^{2}r^{2b}}{2}}  \sum_{\ell=0}^{+\infty}  2^{ \ell/2 }\, \He_\ell( M_j r^b ) \frac{ \Xi_{j,k}^\ell }{ \ell! },
\end{align*}
and \eqref{R qlmp(3)} and \eqref{mathfrak qlmp(3)} follow from a long but direct analysis of the right-hand side of \eqref{mathfrak qlmp(3)}, using that the $c_{\ell}$'s are smooth, and using the all-order expansions \eqref{Xi l expansion} and  
\begin{align} 
& a_{j,k}= \frac{n r^{2b}}{1+\frac{M_j}{\sqrt{n}} }+\frac{k}{2b} \sim  \frac{k}{2b} + nr^{2b} \sum_{\ell=0}^{+\infty}\frac{(-1)^{\ell}M_{j}^{\ell}}{n^{\ell/2}}, \label{ajk expansion} \\
& \eta_{j,k} \sim   \sum_{s=1}^{+\infty} \mathfrak{b}_{s} (\lambda_{j,k}-1)^{s}= \sum_{s=1}^{+\infty} \mathfrak{b}_{s} \bigg( \sum_{\ell=1}^{+\infty} \frac{ [\lambda_{j,k}]_\ell }{n^{\ell/2}} \bigg)^{s}, \label{etajk expansion}
\end{align}
where the coefficients $\{\mathfrak{b}_{s}\}_{s=1}^{+\infty}\subset \mathbb{C}$ are independent of $j$ and $k$.

Finally, \eqref{lol29}, \eqref{def of Acal0}, \eqref{expansion of mathcalA} and \eqref{q lmp 5 2 3} are direct consequences of \eqref{Gamma ratio qlmp(1)}--\eqref{R qlmp(3)}. 
\end{proof}

\begin{lemma} \label{lemma:sum of qaa(1)}
The following relations hold
\begin{align}
& (2b)^a r^{ab} (-1)^a a! \sum_{p=0}^a \mathfrak{q}_{a,a,p}^{(1)} \Big( \frac{x}{r^b}\Big)^{p} =p_{0,a}(x), \label{q aap 1 sum} \\
& (2b)^{a+1} r^{(a+1)b} (-1)^a a!  \sum_{p=0}^{a+1} \mathfrak{q}_{a+1,a,p}^{(1)}  \Big( \frac{x}{r^b}\Big)^{p} \nonumber \\
& \hspace{3cm} = -ab  \Big(  p_{0,a+1}(x) + (1-3a) p_{0,a-1}(x) + \frac{5}{3}(a-1)(a-2) p_{0,a-3}(x) \Big), \label{q a+1ap 1 sum}
\end{align}
where $p_{0,a}(x)$ is given by \eqref{def of p0a}.
\end{lemma}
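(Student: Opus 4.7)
The plan is to extract the coefficient of $k^{a}$ from the explicit polynomial identity \eqref{mathfrak qlmp(1)}, first for $\ell=a$ (leading coefficient in $k$) and then for $\ell=a+1$ (subleading coefficient, obtained via a Leibniz-type expansion over the three factors). Each $s$-summand of \eqref{mathfrak qlmp(1)} has total degree $s+s+(\ell-2s)=\ell$ in $k$, so the distinction between leading and subleading is unambiguous. Throughout I use the variable $u=k/(2b)$.

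The crucial structural input from the Bernoulli side is that, viewed as a polynomial in $(\nu,x)$, $B_{s}^{(\nu)}(x)$ has top total-degree part equal to $(x-\nu/2)^{s}$ and next-to-top total-degree part equal to $-\nu s(s-1)/24\cdot(x-\nu/2)^{s-2}$. Both facts follow from the generating function \eqref{def of gen Bernouilli} together with the expansion $\log\bigl(t/(e^{t}-1)\bigr)=-t/2-t^{2}/24+O(t^{4})$, which gives $\nu\log\bigl(t/(e^{t}-1)\bigr)+xt=A_{1}t+A_{2}t^{2}+O(t^{4})$ with $A_{1}=x-\nu/2$ and $A_{2}=-\nu/24$. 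The coefficient of $t^{s}/s!$ in $\exp(A_{1}t+A_{2}t^{2}+\cdots)$ decomposes according to partitions of $s$: the top $(\nu,x)$-degree comes from $A_{1}^{s}$, and the next from the unique partition with $s-1$ parts, namely $s(s-1)\,A_{1}^{s-2}A_{2}$. Substituting $(\nu,x)=(1+u,u)$ turns $x-\nu/2$ into $u/2-1/2$, whence $[u^{s}]B_{s}^{(1+u)}(u)=1/2^{s}$ and, combining the top- and next-to-top-degree contributions, $[u^{s-1}]B_{s}^{(1+u)}(u)=-s(s+5)/(6\cdot2^{s})$.

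Identity \eqref{q aap 1 sum} is then immediate: insert the three leading coefficients $[u^{s}]\binom{u}{s}=1/s!$, $[u^{s}]B_{s}^{(1+u)}(u)=1/2^{s}$, and $[u^{a-2s}]\binom{s-u}{a-2s}=(-1)^{a}/(a-2s)!$ into \eqref{mathfrak qlmp(1)} at $\ell=a$; multiplication by $(2b)^{a}r^{ab}(-1)^{a}a!$ and rescaling $x\mapsto x/r^{b}$ cancel all the $r$-factors and collapse the sum to $\sum_{s=0}^{\lfloor a/2\rfloor}\frac{a!}{s!(a-2s)!}\cdot\frac{x^{a-2s}}{2^{s}}=p_{0,a}(x)$. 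For \eqref{q a+1ap 1 sum} I will also use the subleading coefficients $[u^{s-1}]\binom{u}{s}=-(s-1)/(2(s-1)!)$ and $[u^{a-2s}]\binom{s-u}{a+1-2s}=(-1)^{a}(4s-a)/(2(a-2s)!)$, both immediate from elementary symmetric function expansions. A Leibniz sum over the three positions in which a factor drops one power of $u$, followed by reduction over the common denominator $6\cdot 2^{s}s!(a+1-2s)!$, yields
\begin{align*}
[u^{a}]\Bigl(\binom{u}{s}\,B_{s}^{(1+u)}(u)\,\binom{s-u}{a+1-2s}\Bigr)=\frac{(-1)^{a}\bigl[2s(2s+1)+3(4s-a)(a+1-2s)\bigr]}{6\cdot 2^{s}\,s!\,(a+1-2s)!}.
\end{align*}

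To finish, I will match this expression, after pre-multiplication by $(2b)(-1)^{a}a!$ and the same $r$-cancellation as in the first identity, to the claimed combination $-ab\bigl(p_{0,a+1}+(1-3a)p_{0,a-1}+\tfrac{5}{3}(a-1)(a-2)p_{0,a-3}\bigr)$. Using the explicit formula $p_{0,m}(x)=\sum_{t}\tfrac{m!\,x^{m-2t}}{2^{t}\,t!(m-2t)!}$ with $t=s,s-1,s-2$ for $m=a+1,a-1,a-3$ respectively, a direct factorial manipulation shows that the coefficient of $x^{a+1-2s}$ in that combination equals $\tfrac{(a-1)!\,[3a(a+1)+6(1-3a)s+20s(s-1)]}{3\cdot 2^{s}\,s!\,(a+1-2s)!}$, and a short algebraic check confirms the polynomial identity $3a(a+1)+6(1-3a)s+20s(s-1)=-\bigl[2s(2s+1)+3(4s-a)(a+1-2s)\bigr]$ (both sides equal $20s^{2}-18as-14s+3a^{2}+3a$), which closes the matching. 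The main obstacle throughout is the subleading analysis of $B_{s}^{(\nu)}(x)$: correctly isolating its degree-$(s-1)$ piece from the $\log$-expansion and combining the two resulting contributions to $[u^{s-1}]B_{s}^{(1+u)}(u)$. Once that step is handled, the remaining Leibniz summation and sign-factorial bookkeeping, while lengthy, is purely mechanical.
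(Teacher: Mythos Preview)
Your proof is correct and follows essentially the same approach as the paper's: both extract the coefficient of $k^{a}$ from \eqref{mathfrak qlmp(1)} by computing the leading and subleading terms (in $u=k/(2b)$) of each of the three factors $\binom{u}{s}$, $B_{s}^{(1+u)}(u)$, $\binom{s-u}{\ell-2s}$ and combining via Leibniz; the paper does this with $a$-th derivatives at $k=0$ after the substitution $b=\tfrac12$, while you work directly with $[u^{a}]$, which amounts to the same thing. Your derivation of $[u^{s-1}]B_{s}^{(1+u)}(u)=-s(s+5)/(6\cdot 2^{s})$ from the $\log$-expansion of the generating function is a bit more explicit than the paper's statement \eqref{lol39}, and your final matching (computing the $x^{a+1-2s}$-coefficient of the claimed right-hand side directly) is just a reordering of the paper's sequential identification of $p_{0,a+1}$, $p_{0,a-1}$, $p_{0,a-3}$; the polynomial identity you check at the end is equivalent to the paper's factor $20s^{2}-2(9a+7)s+3a(a+1)$.
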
 
\begin{proof}
It follows from \eqref{mathfrak qlmp(1)} that
\begin{align*}
(2b)^a r^{ab} (-1)^a a! \sum_{p=0}^a \mathfrak{q}_{a,a,p}^{(1)} \Big( \frac{x}{r^b}\Big)^{p} &=(2b)^a (-1)^a    	\sum_{s=0 }^{ \lfloor a/2 \rfloor  }  \Big[\frac{d}{dk}\Big]^a \bigg[ B_s^{(1+ \frac{k}{2b}) }\Big( \frac{k}{2b}\Big) \binom{\frac{k}{2b}}{s}   \binom{ s-\frac{k}{2b} }{ a-2s }  \bigg] \bigg|_{k=0}  x^{a-2s}.
\end{align*}
For each $s \in \{0,\ldots,\lfloor \frac{a}{2} \rfloor\}$, the polynomial $k \mapsto B_s^{(1+ \frac{k}{2b}) }( \frac{k}{2b}) \binom{\frac{k}{2b}}{s}   \binom{ s-\frac{k}{2b} }{ a-2s }$ is of the form $\sum_{\ell=0}^{a} \tilde{c}_{\ell}(k/b)^{a}$, where $\tilde{c}_{0},\ldots,\tilde{c}_{a} \in \mathbb{C}$ are independent of $b$. Thanks to the prefactor $(2b)^a$, the above expression is thus independent of $b$. Replacing $b$ by $\frac{1}{2}$ yields
\begin{align}\label{lol41}
(2b)^a r^{ab} (-1)^a a! \sum_{p=0}^a \mathfrak{q}_{a,a,p}^{(1)} \Big( \frac{x}{r^b}\Big)^{p} &=  (-1)^a    \sum_{s=0 }^{ \lfloor a/2 \rfloor  }  \Big[\frac{d}{dk}\Big]^a \bigg[  B_s^{(1+ k) }( k )  \binom{k}{s} \binom{ s-k }{ a-2s }  \bigg] \bigg|_{k=0}  x^{a-2s}.
\end{align}
By \eqref{def of gen Bernouilli}, 
\begin{align}\label{lol39}
B_s^{(1+ k) }( k ) =  \Big( \frac{d}{dt}\Big)^s \Big[ \Big( \frac{t}{e^t-1}\Big)^{k+1} e^{ kt } \Big]\Big|_{t=0} = \frac{1}{2^s}\Big( k^s-\frac{s(s+5)}{6} k^{s-1} + \dots \Big) 
\end{align}
is a polynomial in $k$ of degree $s$, and
\begin{align}\label{lol40}
\binom{k}{s} \binom{ s-k }{ a-2s } = \frac{ (-1)^{a} }{s! (a-2s)!} \Big( k^{a-s}+\frac{ 7s^2-(6a-3)s+a^2-a}{2} \, k^{a-1-s}+\dots \Big)
\end{align}
is a polynomial in $k$ of degree $a-s$. Combining \eqref{lol39} and \eqref{lol40}, we obtain
\begin{align*}
\Big[\frac{d}{dk}\Big]^a \bigg[  B_s^{(1+ k) }( k )  \binom{k}{s} \binom{ s-k }{ a-2s }  \bigg] \bigg|_{k=0} = \frac{ (-1)^{a} a!}{s! (a-2s)! 2^{s}},
\end{align*}
and now \eqref{q aap 1 sum} follows directly from the right-most expression of $p_{0,a}$ in \eqref{def of p0a}. Now we turn to the proof of \eqref{q a+1ap 1 sum}. In a similar way as \eqref{lol41}, using \eqref{mathfrak qlmp(1)},
\begin{multline*}
(2b)^{a+1} r^{(a+1)b} (-1)^a a!  \sum_{p=0}^{a+1} \mathfrak{q}_{a+1,a,p}^{(1)}  \Big( \frac{x}{r^b}\Big)^{p}
\\
= 2b\, (-1)^a  \sum_{s=0 }^{ \lfloor (a+1)/2 \rfloor  }  \Big[\frac{d}{dk}\Big]^a \bigg[ B_s^{(1+k) }(k) \binom{k}{s}   \binom{ s-k }{ a+1-2s } \bigg] \bigg|_{k=0}   x^{a+1-2s} .  
\end{multline*}
From \eqref{lol39} and \eqref{lol40} with $a$ replaced by $a+1$, we get 
\begin{align*}
\Big[\frac{d}{dk}\Big]^a \bigg[ B_s^{(1+k) }(k) \binom{k}{s}   \binom{ s-k }{ a+1-2s } \bigg] \bigg|_{k=0} = \frac{ (-1)^{a+1} a!}{s! (a+1-2s)! 2^{s}} \frac{20s^{2}-2(9a+7)s+3a(a+1)}{6},
\end{align*}
and thus
\begin{multline*}
(2b)^{a+1} r^{(a+1)b} (-1)^a a!  \sum_{p=0}^{a+1} \mathfrak{q}_{a+1,a,p}^{(1)}  \Big( \frac{x}{r^b}\Big)^{p} = -\frac{b}{3} \sum_{s=0 }^{ \lfloor (a+1)/2 \rfloor  }  \frac{a!\, (20s^2-2(9a+7)s+3a(a+1)) }{ s! (a+1-2s)!}  \frac{ x^{a+1-2s} }{ 2^s }.
\end{multline*}
Finally, the expression \eqref{q a+1ap 1 sum} follows from the manipulation
\begin{align*}
&\quad \sum_{s=0 }^{ \lfloor (a+1)/2 \rfloor  }  \frac{a!\, (20s^2-2(9a+7)s+3a(a+1)) }{ s! (a+1-2s)!}  \frac{ x^{a+1-2s} }{ 2^s }
\\
&= 3a \,p_{0,a+1}(x)+ a! \sum_{s=0 }^{ \lfloor (a-1)/2 \rfloor  }  \frac{ 10s+3(1-3a) }{ s! (a-1-2s)!}  \frac{ x^{a-1-2s} }{ 2^s } 
\\
&= 3a \,p_{0,a+1}(x)+ 3a(1-3a) \,p_{0,a-1}(x) + a! \sum_{s=1 }^{ \lfloor (a-1)/2 \rfloor  }  \frac{ 10 }{ (s-1)! (a-1-2s)!}  \frac{ x^{a-1-2s} }{ 2^s } 
\\
&= 3a \,p_{0,a+1}(x)+ 3a(1-3a) \,p_{0,a-1}(x) + 5 a(a-1)(a-2) \, p_{0,a-3}(x).
\end{align*}
\end{proof}

\begin{lemma} \label{lemma:sum of qaa(4)}
The following relations hold
\begin{align}
& (2br^{b})^{a+1} (-1)^{a+1} (a+1)! \sum_{p=0}^{a} \mathfrak{q}_{a+1,a+1,p}^{(4)}  \Big(\frac{x}{r^b}\Big)^{p} = q_{0,a+1}(x), \label{q aap 4 sum} \\
& (2br^{b})^{a+1} (-1)^a a!  \sum_{p=0}^{a+2} \mathfrak{q}_{a+1,a,p}^{(4)}  \Big( \frac{x}{r^b}\Big)^{p} \nonumber \\
& \hspace{3cm} = -b  \Big( a \, q_{0,a+1}(x) + (1-3a) [a q_{0,a-1}(x)] + \frac{5}{3} [a(a-1)(a-2) q_{0,a-3}(x)] \Big), \label{q a+1ap 4 sum}
\end{align}
where $q_{0,a}(x)$, $[a q_{0,a-1}(x)]$ and $[a(a-1)(a-2) q_{0,a-3}(x)]$ are given by \eqref{def of q0ap1} and \eqref{q0 a negative}.
\end{lemma}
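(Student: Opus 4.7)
My plan is to extract the coefficients $\mathfrak{q}_{a+1,a+1,p}^{(4)}$ and $\mathfrak{q}_{a+1,a,p}^{(4)}$ from the defining generating-function relation \eqref{q lmp 5 2 3} and to match the result against the explicit series representation \eqref{def of q0ap1} of $q_{0,a}$. The proof will parallel that of Lemma~\ref{lemma:sum of qaa(1)}, but with an additional combinatorial layer stemming from the fact that $\mathfrak{q}^{(4)}$ is built as a product involving the three series $\mathfrak{q}^{(1)}$, $\mathfrak{q}^{(2)}$, $\mathfrak{q}^{(3)}$ rather than being given by a single closed-form expression.

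My first step will be to rewrite the right-hand side of \eqref{q lmp 5 2 3} as
$-\mathrm{Q}_{2}+\tfrac{1}{3r^{b}\sqrt{n}}\mathrm{Q}_{3}-\mathrm{Q}_{1}\cdot\mathrm{Q}_{2}+\tfrac{1}{3r^{b}\sqrt{n}}\mathrm{Q}_{1}\cdot\mathrm{Q}_{3}$, where $\mathrm{Q}_{j}$ denotes the double series appearing in \eqref{mathfrak qlmp(1)}--\eqref{mathfrak qlmp(3)}, and to identify which of these four pieces contribute to the coefficient of $k^{a+1}n^{-(a+1)/2}$. The degree-in-$k$ constraint $m\leq\ell$ satisfied by each $\mathfrak{q}^{(j)}_{\ell,m,p}$ immediately rules out both $\mathrm{Q}_{3}$-terms; only $\mathrm{Q}_{2}$ at $(\ell,m)=(a+1,a+1)$ and the cross-product $\mathrm{Q}_{1}\cdot\mathrm{Q}_{2}$ with $\ell_{i}=m_{i}$, $\ell_{1}+\ell_{2}=a+1$, $\ell_{i}\geq 1$ can survive. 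Using \eqref{mathfrak qlmp(2)} together with the all-order expansions \eqref{Xi expansion} and \eqref{lambda jk all order}, I will reduce each of these top-$k$ contributions to explicit combinatorial expressions involving the Hermite polynomials $\He_{\ell-1}(M_{j}r^{b})$. A direct evaluation paralleling \eqref{lol41}--\eqref{lol40} should then produce \eqref{q aap 4 sum}, with the series form \eqref{def of q0ap1} of $q_{0,a+1}$ emerging from the resulting combinatorial factor.

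For the second identity \eqref{q a+1ap 4 sum}, the analysis is considerably harder because the $m\leq\ell$ argument no longer eliminates the $\mathrm{Q}_{3}$-pieces at $m=a$: \emph{all four} terms in the product expansion of $\mathfrak{q}^{(4)}$ now contribute. The main obstacle is to show that the subleading-in-$k$ part of $\mathrm{Q}_{2}$, the mixed product $\mathrm{Q}_{1}\cdot\mathrm{Q}_{2}$, the leading-in-$k$ part of the shifted $\mathrm{Q}_{3}$-contribution, and the $\mathrm{Q}_{1}\cdot\mathrm{Q}_{3}$-term all reassemble into the specific three-term combination of $q_{0,a+1}$, $[a\,q_{0,a-1}]$ and $[a(a-1)(a-2)\,q_{0,a-3}]$ on the right-hand side of \eqref{q a+1ap 4 sum}. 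This will require a telescoping analogous to the final manipulation in the proof of Lemma~\ref{lemma:sum of qaa(1)}, now applied to the $q_{0}$-polynomials. Importantly, it is the $\mathrm{Q}_{3}$-contribution that will inject the \emph{associated} (rather than ordinary) Hermite polynomial structure into the final formula, via the leading term $c_{0}(\eta_{j,k})/\sqrt{a_{j,k}}$ from \eqref{mathfrak qlmp(3)}. The bracket conventions in \eqref{q0 a negative} should appear naturally as the boundary values of the combinatorics for the low-$a$ cases $a\in\{0,1,2\}$, where $\He^{(1)}_{a-1}$ would otherwise not be defined in the standard sense.
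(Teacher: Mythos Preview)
Your structural decomposition is correct and matches the paper's: extract the $k^{a+1}$- and $k^{a}$-coefficients at order $n^{-(a+1)/2}$ from \eqref{q lmp 5 2 3}, use the degree constraint $m\le\ell$ to see which products survive, and feed in the $\mathfrak{q}^{(1)}$-identities from Lemma~\ref{lemma:sum of qaa(1)}. But the proposal has a genuine gap at the identification step, and one misconception.

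After your reduction for \eqref{q aap 4 sum}, what you actually obtain (as the paper does) is the convolution
\[
(2br^{b})^{a+1}(-1)^{a+1}(a+1)!\sum_{p}\mathfrak{q}_{a+1,a+1,p}^{(4)}\Big(\tfrac{x}{r^{b}}\Big)^{p}
=(-1)^{a}\sum_{s=0}^{a}\binom{a+1}{s}\,i^{s}\He_{s}(ix)\,\He_{a-s}(x),
\]
coming from $p_{0,s}(x)=i^{-s}\He_{s}(ix)$ on the $\mathrm{Q}_{1}$ side and $\He_{\ell-1}(x)$ on the $\mathrm{Q}_{2}$ side. The step you describe as ``the series form \eqref{def of q0ap1} emerging from the resulting combinatorial factor'' is not a routine evaluation analogous to \eqref{lol41}--\eqref{lol40}; it requires the connection formula
\[
\sum_{s=0}^{a}\binom{a+1}{s}\,i^{s}\He_{s}(ix)\,\He_{a-s}(x)=i^{a}\,\He_{a}^{(1)}(ix),
\]
which the paper imports from \cite[eq.~(9.6)]{Wunsche}. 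Without this identity (or an independent proof of it), your argument for \eqref{q aap 4 sum} does not close.

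Relatedly, your claim that ``the $\mathrm{Q}_{3}$-contribution injects the associated Hermite structure'' is a misconception. In the paper, the top-degree $\mathrm{Q}_{3}$-piece gives only \emph{ordinary} Hermite polynomials, namely $\tfrac{1}{3r^{b}}\sum_{p}\mathfrak{q}_{\ell,\ell,p}^{(3)}(x/r^{b})^{p}=\tfrac{1}{3}(2b)^{-\ell}r^{-b(\ell+1)}(\ell!)^{-1}\He_{\ell}(x)$. The associated polynomial $\He^{(1)}$ always enters through the \emph{convolution} of $\He_{s}(ix)$ (from the $\mathrm{Q}_{1}$ factor) against $\He_{a-s}(x)$ (from $\mathrm{Q}_{2}$ or $\mathrm{Q}_{3}$), collapsed via the W\"{u}nsche identity above together with the vanishing relation $\sum_{s=0}^{a}\binom{a}{s}i^{s}\He_{s}(ix)\He_{a-s}(x)=\delta_{a,0}$. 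For \eqref{q a+1ap 4 sum}, the paper splits the resulting convolution into three blocks $A_{1}+A_{2}+A_{3}$ and applies these identities repeatedly; the bracketed boundary values in \eqref{q0 a negative} appear precisely as the small-$a$ exceptions to the vanishing relation, not from $\mathrm{Q}_{3}$. Your telescoping idea is right in spirit, but it is aimed at the wrong source and is missing the key Hermite convolution tool.
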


\begin{remark}\label{remark:degree of pol}
The degree of the polynomial in the right-hand side of \eqref{q a+1ap 4 sum} is given by 
\begin{equation*}
\begin{cases}
	2 & \textup{if }a=0,
	\\
	a & \textup{if }a\ge 1. 
\end{cases}
\end{equation*}
In particular, $\mathfrak{q}_{a+1,a,a+1}^{(4)}=\mathfrak{q}_{a+1,a,a+2}^{(4)}=0.$ 
\end{remark}

\begin{proof}
Let us first rewrite the sums on the left-hand sides of \eqref{q aap 4 sum} and \eqref{q a+1ap 4 sum} in terms of the coefficients $\{\mathfrak{q}_{\ell,m,p}^{(1)},\mathfrak{q}_{\ell,m,p}^{(2)},\mathfrak{q}_{\ell,m,p}^{(3)}\}$. By \eqref{q lmp 5 2 3}, 
\begin{align*}
& \sum_{m=0}^{a+1}\sum_{p=0}^{3a+2-2m} \mathfrak{q}_{a+1,m,p}^{(4)}k^{m}x^{p}  = - \sum_{s=0}^{a} \Bigg\{ \sum_{m=0}^{s}\sum_{p=0}^{s} \mathfrak{q}_{s,m,p}^{(1)}k^{m}x^{p} \nonumber \\
&  \times \bigg\{ \sum_{m=0}^{a+1-s}\sum_{p=0}^{ 3(a+1-s)-1-2m } \hspace{-0.3cm} \mathfrak{q}_{a+1-s,m,p}^{(2)}k^{m}x^{p} -\frac{1}{3r^{b}} \sum_{m=0}^{a-s}\sum_{p=0}^{3(a-s)-2m} \mathfrak{q}_{a-s,m,p}^{(3)}k^{m}x^{p} \bigg\} \Bigg\}. 
\end{align*}
Equaling the coefficients of $k^{a+1}$ and of $k^{a}$ gives the identities
\begin{align}\label{lol43}
& \sum_{p=0}^{a} \mathfrak{q}_{a+1,a+1,p}^{(4)}x^{p} = - \sum_{s=0}^{a} \Bigg\{ \sum_{p=0}^{s} \mathfrak{q}_{s,s,p}^{(1)}x^{p} \times \bigg\{ \sum_{p=0}^{ a-s } \mathfrak{q}_{a+1-s,a+1-s,p}^{(2)}x^{p} \bigg\} \Bigg\}
\end{align}
and
\begin{align}
& \sum_{p=0}^{a+2} \mathfrak{q}_{a+1,a,p}^{(4)}x^{p} \hspace{-0.05cm} = \hspace{-0.05cm}  - \hspace{-0.05cm} \sum_{s=1}^{a} \hspace{-0.05cm} \Bigg\{ \hspace{-0.05cm} \sum_{p=0}^{s} \mathfrak{q}_{s,s-1,p}^{(1)}x^{p}  \hspace{-0.05cm} \times \hspace{-0.05cm} \bigg\{ \hspace{-0.05cm} \sum_{p=0}^{ a-s }  \mathfrak{q}_{a+1-s,a+1-s,p}^{(2)}x^{p} \bigg\} \hspace{-0.05cm}\Bigg\} \nonumber \\
& - \sum_{s=0}^{a} \Bigg\{ \sum_{p=0}^{s} \mathfrak{q}_{s,s,p}^{(1)}x^{p} \times \bigg\{ \sum_{p=0}^{ a-s+2 } \mathfrak{q}_{a+1-s,a-s,p}^{(2)}x^{p} -\frac{1}{3r^{b}} \sum_{p=0}^{a-s} \mathfrak{q}_{a-s,a-s,p}^{(3)}x^{p} \bigg\} \Bigg\}. \label{lol44}
\end{align}
It remains to simplify the right-hand sides of \eqref{lol43} and \eqref{lol44}. Sums of the form $\sum_{p=0}^{s} \mathfrak{q}_{s,s,p}^{(1)}x^{p}$ and $\sum_{p=0}^{s} \mathfrak{q}_{s,s-1,p}^{(1)}x^{p} $ were already simplified in \eqref{q aap 1 sum} and \eqref{q a+1ap 1 sum}. Also, by \eqref{mathfrak qlmp(2)},
\begin{equation} \label{q ll 2}
\sum_{p=0}^{\ell-1} \mathfrak{q}_{\ell,\ell,p}^{(2)} \Big( \frac{x}{r^b}\Big)^{p} = \sum_{s=1}^{\ell} \frac{ 2^{ s/2 } }{s!} \He_{s-1} ( x ) \frac{1}{\ell!}\Big[\frac{d}{dk}\Big]^{\ell}\Big[[(\Xi_{j,k}^{\mathrm{formal}})^s]_{\ell}\Big]\Big|_{k=0,M_{j}\to \frac{x}{r^{b}}}
\end{equation}
where $[(\Xi_{j,k}^{\mathrm{formal}})^s]_{\ell}$ is the coefficient of the term of order $n^{-\frac{\ell}{2}}$ in the asymptotic series $(\Xi_{j,k}^{\mathrm{formal}})^s$. Using \eqref{Xi expansion}, we infer that
\begin{align}\label{lol42}
\frac{1}{\ell!}\Big[\frac{d}{dk}\Big]^{\ell}\Big[[(\Xi_{j,k}^{\mathrm{formal}})^s]_{\ell}\Big]\Big|_{k=0} = \delta_{s,\ell} d_{1,1}^{\ell}, \qquad s=1,\ldots,\ell.
\end{align}
Also, a direct computation using \eqref{def of Xijk} shows that
\begin{align}\label{Xi expansion lol}
\Xi_{j,k} = \frac{1}{2\sqrt{2} b r^{b} \sqrt{n}}\bigg\{ k + \frac{5 (r^{b}M_{j})^{2}}{3}b + \frac{1}{\sqrt{n}} \bigg( \frac{k M_{j}}{3} - \frac{53}{36}bM_{j}^{3}r^{2b} \bigg) + \bigO(n^{-1}) \bigg\}                     
\end{align}
as $n \to + \infty$ uniformly for $j \in \{g_{-},\ldots,g_{+}\}$. In particular, $d_{1,1} = \frac{1}{2\sqrt{2} b r^{b}}$ and thus, by \eqref{q ll 2}-\eqref{lol42}, 
\begin{equation} \label{q ll 2 new}
\sum_{p=0}^{\ell-1} \mathfrak{q}_{\ell,\ell,p}^{(2)} \Big( \frac{x}{r^b}\Big)^{p} = \frac{1}{ (2b r^b)^{\ell} }  \frac{1}{ \ell! } \He_{\ell-1}(x), \qquad \ell \geq 1.
\end{equation}
Combining \eqref{lol43} with \eqref{def of p0a}, \eqref{q aap 1 sum} and \eqref{q ll 2 new} yields 
\begin{align*}
(2br^{b})^{a+1} (-1)^{a+1} (a+1)! \sum_{p=0}^{a} \mathfrak{q}_{a+1,a+1,p}^{(4)}  \Big(\frac{x}{r^b}\Big)^{p} = (-1)^a \sum_{s=0}^a  \binom{a+1}{s}  i^s \He_s(ix)  \He_{a-s}(x).
\end{align*}
Using the functional equation (see e.g. \cite[eq.(9.6)]{Wunsche})
\begin{equation}\label{Heap1p formula 1}
\sum_{s=0}^{a}  \binom{a+1}{s}  i^s \He_s(ix)  \He_{a-s}(x) =  i^a\, \He_a^{(1)}(ix), \qquad a \in \mathbb{N},
\end{equation}
we obtain the desired identity \eqref{q aap 4 sum}. It remains to simplify the right-hand side of \eqref{lol44} (with $x$ replaced by $\frac{x}{r^{b}}$) and to prove \eqref{q a+1ap 4 sum}. In view of \eqref{q aap 1 sum}, \eqref{q a+1ap 1 sum} and \eqref{q ll 2 new}, it only remains to evaluate explicitly sums of the forms
\begin{align}\label{two remaining sums}
\sum_{p=0}^{\ell+2} \mathfrak{q}_{\ell+1,\ell,p}^{(2)} \Big( \frac{x}{r^b}\Big)^{p} \quad \mbox{ and } \quad \frac{1}{3 r^{b} }\sum_{p=0}^{\ell} \mathfrak{q}_{\ell,\ell,p}^{(3)} \Big( \frac{x}{r^b}\Big)^{p}, \qquad \ell \geq 0.
\end{align}
For the first sum, we use \eqref{mathfrak qlmp(2)} to get
\begin{align}\label{first sum first expr}
\sum_{p=0}^{\ell+2} \mathfrak{q}_{\ell+1,\ell,p}^{(2)} \Big( \frac{x}{r^b}\Big)^{p} = \sum_{s=1}^{\ell+1} \frac{ 2^{ s/2 } }{s!} \He_{s-1} ( x ) \frac{1}{\ell!}\Big[\frac{d}{dk}\Big]^{\ell}\Big[[(\Xi_{j,k}^{\mathrm{formal}})^s]_{\ell+1}\Big]\Big|_{k=0,M_{j}\to \frac{x}{r^{b}}}, \qquad \ell \geq 0.
\end{align}
A direct computation using \eqref{Xi expansion} shows that
\begin{align*}
\frac{1}{\ell!}\Big[\frac{d}{dk}\Big]^{\ell}\Big[[(\Xi_{j,k}^{\mathrm{formal}})^s]_{\ell+1}\Big]\Big|_{k=0} = \begin{cases}
0, & \mbox{if } 1 \leq s \leq \ell-2, \\
(\ell-1) d_{1,1}^{\ell-2}d_{3,2}, & \mbox{if } s = \ell-1, \\
\ell \frac{x}{r^{b}} d_{1,1}^{\ell-1}d_{2,1}, & \mbox{if } s = \ell, \\
(\ell+1) \frac{x^{2}}{r^{2b}} d_{1,1}^{\ell}d_{1,0}, & \mbox{if } s = \ell+1,
\end{cases}
\end{align*}
and by \eqref{def of Xijk} and \eqref{Xi expansion lol},
\begin{align*}
d_{1,1} = \frac{1}{2\sqrt{2} b r^{b}}, \qquad d_{1,0} = \frac{5r^{b}}{6\sqrt{2}}, \qquad d_{2,1} = \frac{1}{6\sqrt{2} br^{b}}, \qquad d_{3,2} = \frac{-1}{24\sqrt{2}b^{2}r^{3b}}.
\end{align*}
Substituting the above in \eqref{first sum first expr}, for $\ell \geq 0$ we get
\begin{align}
\sum_{p=0}^{\ell+2} \mathfrak{q}_{\ell+1,\ell,p}^{(2)} \Big( \frac{x}{r^b}\Big)^{p} & = \frac{1}{(2b)^\ell r^{b(\ell+1) }  }\frac{1}{ \ell! } \Big(  \frac56 \, x^2 \, \He_{\ell}(x) + \frac{\ell}{3} \,  x\, \He_{\ell-1}(x) -\frac{\ell(\ell-1)}{6} \He_{\ell-2}(x)  \Big), \nonumber \\
& = \frac{1}{(2b)^\ell r^{b(\ell+1) }  }\frac{1}{ \ell! } \Big[ \frac{5x^{2}+2\ell}{6} \, \He_{\ell}(x) + \frac{\ell(\ell-1)}{6} \He_{\ell-2}(x)  \Big], \label{sum2 lol}
\end{align}
where $\He_{-2}(x) \equiv \He_{-1}(x) \equiv 0$, and for the second line we have used the three-term recurrence relation \eqref{Hermite recurrence} of $\He_\ell$. 

Now we turn to the problem of simplifying the second sum in \eqref{two remaining sums}. Let us write
\begin{align*}
\sqrt{n}\bigg(\sum_{\ell=0}^{+\infty} \frac{c_{\ell}(\eta_{j,k})}{a_{j,k}^{\ell+\frac{1}{2}}} \bigg) \sim \sum_{\ell=0}^{+\infty} \frac{e_{\ell}}{n^{\frac{\ell}{2}}}, \qquad \mbox{as } n \to + \infty
\end{align*}
for some $\{e_{\ell}\}_{\ell=0}^{+\infty} \subset \mathbb{C}$. By \eqref{mathfrak qlmp(3)}, 
\begin{align}
& \frac{1}{3r^{b}}\sum_{p=0}^{\ell} \mathfrak{q}_{\ell,\ell,p}^{(3)} = - \frac{1}{\ell!}\Big[\frac{d}{dk}\Big]^{\ell} \bigg\{ e_{\ell} + \sum_{s=1}^{\ell} e_{\ell-s} \sum_{m=1}^{s} \frac{ 2^{ m/2 } }{m!} \He_{m} ( x ) \Big[[(\Xi_{j,k}^{\mathrm{formal}})^m]_{s}\Big] \bigg\}\Big|_{k=0,M_{j}\to \frac{x}{r^{b}}} \nonumber \\
& = - \frac{1}{\ell!} \bigg\{ \Big[\frac{d}{dk}\Big]^{\ell}e_{\ell} + \sum_{s=1}^{\ell}  \sum_{m=1}^{s} \frac{ 2^{ m/2 } }{m!} \He_{m} ( x ) \sum_{q=0}^{\ell}\binom{\ell}{q}\Big[\frac{d}{dk}\Big]^{q}e_{\ell-s} \; \Big[\frac{d}{dk}\Big]^{\ell-q}\Big[[(\Xi_{j,k}^{\mathrm{formal}})^m]_{s}\Big] \bigg\}\Big|_{k=0,M_{j}\to \frac{x}{r^{b}}}. \label{lol45}
\end{align}
Long but direct calculations using \eqref{Xi l expansion}, \eqref{ajk expansion} and \eqref{etajk expansion} show that
\begin{align*}
& \Big[\frac{d}{dk}\Big]^{q}e_{\ell-s}\Big|_{k=0} = 0, & & \mbox{for } \; \ell \geq 0, \; 0 \leq s \leq \ell, \; q \geq 1+\Big\lfloor \frac{\ell-s}{2} \Big\rfloor, \\
& \Big[\frac{d}{dk}\Big]^{\ell-q}\Big[[(\Xi_{j,k}^{\mathrm{formal}})^m]_{s}\Big]\Big|_{k=0} = 0, & & \mbox{for } \; \ell \geq 1, \; 1 \leq s \leq \ell, \; 1 \leq m \leq s, \; q \leq \Big\lfloor \ell-\frac{s+m+1}{2} \Big\rfloor.
\end{align*}
This means that for $\ell \geq 1$, the only term that contributes in \eqref{lol45} corresponds to $m=s=\ell$ and $q=0$. Thus, for any $\ell \geq 0$, we have
\begin{align*}
\frac{1}{3r^{b}}\sum_{p=0}^{\ell} \mathfrak{q}_{\ell,\ell,p}^{(3)} = -  \bigg\{    e_{0}\frac{ 2^{ \ell/2 } }{\ell!} \He_{\ell} ( x )  \; \frac{1}{\ell!}\Big[\frac{d}{dk}\Big]^{\ell}\Big[[(\Xi_{j,k}^{\mathrm{formal}})^\ell]_{\ell}\Big] \bigg\}\Big|_{k=0,M_{j}\to \frac{x}{r^{b}}}.
\end{align*}
A direct computation shows that $e_{0}=-\frac{1}{3r^{b}}$. Using also \eqref{lol42}, we get
\begin{equation}\label{sum3 lol}
\frac{1}{3r^{b}}\sum_{p=0}^{\ell} \mathfrak{q}_{\ell,\ell,p}^{(3)} \Big( \frac{x}{r^b}\Big) ^{p}= \frac13 \frac{1}{(2b)^{\ell} r^{b(\ell+1) }  }\frac{1}{ \ell! } \He_{\ell}(x), \qquad \ell \geq 0.
\end{equation}
Using \eqref{sum2 lol} and \eqref{sum3 lol}, for $\ell \geq 0$ we obtain 
\begin{align}
& \sum_{p=0}^{\ell+2} \mathfrak{q}_{\ell+1,\ell,p}^{(2)}  \Big( \frac{x}{r^b}\Big) ^{p} -\frac{1}{3r^{b}} \sum_{p=0}^{\ell} \mathfrak{q}_{\ell,\ell,p}^{(3)}  \Big( \frac{x}{r^b}\Big) ^{p} \nonumber \\
& =  \frac{1}{(2b)^{\ell} r^{b(\ell+1) }  }\frac{1}{\ell! }  \Big[ \frac{5x^{2}+2(\ell-1)}{6}  \, \He_{\ell}(x) + \frac{\ell(\ell-1)}{6} \He_{\ell-2}(x)  \Big]. \label{q l l-1 23}
\end{align}
Combining \eqref{lol44} with \eqref{q aap 1 sum}, \eqref{q a+1ap 1 sum} and \eqref{q l l-1 23}, we obtain after some calculations that
\begin{equation} \label{q hat sum p hat}
(2br^{b})^{a+1} (-1)^a a!  \sum_{p=0}^{a+2} \mathfrak{q}_{a+1,a,p}^{(4)}  \Big( \frac{x}{r^b}\Big)^{p} =- \frac{b}{3}\bigg\{ \sum_{s=1}^a (-1)^{a-s} \binom{a}{s-1} \widehat{p}_{0,s}(x)  \He_{a-s}(x) +\mathfrak{s}_a(x) \bigg\}, 
\end{equation} 
where 
\begin{align}
& \widehat{p}_{0,a+1}(x):=  3 a\,p_{0,a+1}(x)+ 3a(1-3a) \,p_{0,a-1}(x) + 5a (a-1)(a-2) \, p_{0,a-3}(x), \label{def of p0hat}\\
& \mathfrak{s}_a(x) := (-1)^{a}\sum_{s=0}^a  \binom{a}{s} i^s\He_{s}(ix) \Big[ ( 5 x^2+2(a-s-1))  \, \He_{a-s}(x) + (a-s)(a-s-1) \He_{a-s-2}(x)  \Big] .
\end{align}
The polynomial $\mathfrak{s}_a$ can actually be considerably simplified. Indeed, since 
\begin{equation*}
	\He_s(x)= \Big[ \frac{d}{dt} \Big]^s \Big[ e^{xt-\frac{t^2}{2}}\Big] \Big|_{t=0}, \qquad  i^s \He_s(ix)=\Big[ \frac{d}{dt} \Big]^s \Big[ e^{-xt+\frac{t^2}{2}}\Big] \Big|_{t=0},
\end{equation*} 
we have  
\begin{equation} \label{Hermite sum vanishing}
\sum_{s=0}^a  \binom{a}{s} i^s\He_s(ix) \He_{a-s}(x)= \Big[ \frac{d}{dt} \Big]^a \Big[ 1 \Big] \Big|_{t=0}= \begin{cases}
1 & \textup{if }a=0, \\
0 & \textup{if }a \ge 1.
\end{cases} 
\end{equation}
Hence,
\begin{multline}\label{Hermite sum2}
\sum_{s=0}^a \binom{a}{s} (a-s)(a-s-1) i^s\He_{s}(ix)  \He_{a-s-2}(x) \\
=a(a-1) \sum_{s=0}^{a-2} \binom{a-2}{s} i^s\He_{s}(ix)  \He_{a-s-2}(x)
= \begin{cases}
2 & \textup{if } a=2, \\
0 &\textup{if } a\not=2,
\end{cases} 
\end{multline}
and using also the recurrence relation \eqref{Hermite recurrence} we get
\begin{multline}\label{Hermite sum3}
\sum_{s=0}^a \binom{a}{s} (a-s) i^s\He_{s}(ix)  \He_{a-s}(x) \\
=a \sum_{s=0}^{a-1}  i^s \binom{a-1}{s}  \He_{s}(ix) \Big( x \He_{a-s-1}(x)-(a-s-1)\He_{a-s-2}(x) \Big)
= \begin{cases}
x & \textup{if } a=1, \\
-2 &\textup{if }a=2, \\
0 &\textup{otherwise}.
\end{cases} 
\end{multline}
Using \eqref{Hermite sum vanishing}, \eqref{Hermite sum2} and \eqref{Hermite sum3} to simplify $\mathfrak{s}_{a}$, we finally obtain
\begin{equation} \label{mathfrak s evaluation}
\mathfrak{s}_a(x)= \begin{cases}
5x^2-2 & \textup{if } a=0, \\
-2x & \textup{if } a=1, \\
-2 & \textup{if } a=2, \\
0  & \textup{if } a\geq 3. 
\end{cases}
\end{equation}
Let us now simplify the sum in \eqref{q hat sum p hat}. First, substituting the definitions \eqref{def of p0hat} and \eqref{def of p0a}, we rewrite it as
\begin{align}\label{lol47}
& \sum_{s=1}^a (-1)^{a-s} \binom{a}{s-1} \widehat{p}_{0,s}(x) \He_{a-s}(x) = A_{1} + A_{2} + A_{3},
\end{align}
where
\begin{align*}
& A_{1} := 3(-1)^{a} \sum_{s=2}^a \binom{a}{s-1} (s-1) i^s \He_{s}(ix)\He_{a-s}(x), \\
& A_{2} := -3(-1)^a \sum_{s=2}^a \binom{a}{s-1} (s-1)\Big( 3(s-2)+2\Big) i^{s-2} \He_{s-2}(ix)\He_{a-s}(x), \\
& A_{3} := 5(-1)^a \sum_{s=4}^a \binom{a}{s-1} (s-1)(s-2)(s-3) i^{s} \He_{s-4}(ix)\He_{a-s}(x).
\end{align*}
To simplify $A_{1}$, we first establish two formulas. Using \eqref{Hermite sum vanishing} and $\binom{a+1}{s}=\binom{a}{s}+\binom{a}{s-1}$ in \eqref{Heap1p formula 1}, we infer that
\begin{align}\label{Heap1p formula 2}
\sum_{s=1}^a  \binom{a}{s-1} i^s\He_s(ix)  \He_{a-s}(x) = \begin{cases}
i^a\, \He_a^{(1)}(ix), & \mbox{if } a \geq 1, \\
0, & \mbox{if } a=0.
\end{cases}
\end{align}
Also, using the recurrence \eqref{Hermite recurrence} together with \eqref{Hermite sum vanishing}, 
\begin{align}
a \sum_{s=1}^{a} \binom{a-1}{s-1} i^{s} \He_{s}(ix) \He_{a-s}(x) & = a\sum_{s=1}^{a} \binom{a-1}{s-1} i^{s} \Big( i x \He_{s-1}(ix)-(s-1)\He_{s-2}(ix) \Big) \He_{a-s}(x) \nonumber \\
& = \begin{cases}
-x, & \mbox{if } a = 1, \\
2, & \mbox{if } a=2, \\
0, & \mbox{otherwise.}
\end{cases}\label{lol46}
\end{align}
For $A_{1}$, we use $\binom{a-1}{s-2}=\binom{a}{s-1}-\binom{a-1}{s-1}$ together with \eqref{Heap1p formula 2} and \eqref{lol46}, and find
\begin{align}\label{A1 simplif}
A_{1}= 3(-1)^{a} a \sum_{s=2}^a \binom{a-1}{s-2}  i^s \He_{s}(ix)\He_{a-s}(x) = 3 a \,q_{0,a+1}(x) + \begin{cases}
-3x & \textup{if } a=1, \\
-6 & \textup{if } a=2, \\
0 & \textup{otherwise}. 
\end{cases}
\end{align}
Similarly, by \eqref{Heap1p formula 1} and \eqref{Heap1p formula 2},
\begin{align}
A_{2} &= 3(-1)^a \sum_{s=2}^a \Big[3a(a-1)\binom{a-2}{s-3} + 2a\binom{a-1}{s-2}  \Big] i^{s-2} \He_{s-2}(ix)\He_{a-s}(x) \nonumber \\
&= 3(1-3a) \,[a q_{0,a-1}(x)]+ \begin{cases}
-3 &\textup{if } a=0, \\
18 &\textup{if } a=2, \\
0 & \textup{otherwise}
\end{cases} 	\label{A2 simplif}
\end{align}
Simplifying $A_{3}$ is a simpler task as it only relies on \eqref{Heap1p formula 1}, namely
\begin{align}
A_{3} & = 5(-1)^a a(a-1)(a-2)\sum_{s=0}^{a-4} \binom{a-3}{s} i^{s} \He_{s}(ix)\He_{a-4-s}(x) = \begin{cases}
0, & \hspace{-1cm} \mbox{if } a\in \{0,1,2\} \\
5a(a-1)(a-2)q_{0,a-3}(x), & \hspace{-0.1cm} \mbox{if } a \geq 3
\end{cases} \nonumber \\
& = 5[a(a-1)(a-2)q_{0,a-3}(x)] + \begin{cases}
5(1-x^{2}), & \mbox{if } a=0, \\
5x, & \mbox{if } a = 1, \\
-10, & \mbox{if } a = 2, \\
0, & \mbox{otherwise.}
\end{cases} \label{A3 simplif}
\end{align}
Therefore, by \eqref{lol47}, \eqref{A1 simplif}, \eqref{A2 simplif} and \eqref{A3 simplif}, we have 
\begin{align}
& \sum_{s=1}^a (-1)^{a-s} \binom{a}{s-1} \widehat{p}_{0,s}(x) \He_{a-s}(x) \nonumber \\
&= 3 a \,q_{0,a+1}(x)+3 (1-3a) [a\,q_{0,a-1}(x)]+5[a(a-1)(a-2)\,q_{0,a-3}(x)]-\mathfrak{s}_a(x). \label{q hat sum p hat v2}
\end{align} 
Now \eqref{q a+1ap 4 sum} directly follows from \eqref{q hat sum p hat} and \eqref{q hat sum p hat v2}, which completes the proof.
\end{proof}
In the large $n$ asymptotics of $S_{2}^{(2)}$, which are obtained in Lemma \ref{lemma:S2kp2p} below, the function $\mathcal{G}_{0}(y; u,a)$, defined in \eqref{function F}, will appear inside a logarithm and in a denominator. The next lemma ensures that this function is positive for relevant values of the parameters.
\begin{lemma} \label{lemma:mathcalG0 positive}
The function $\mathcal{G}_{0}(y; u,a)$ given in \eqref{function F} is positive for all $y \in \R$, $u \in \R$, and $a \in \mathbb{N}$. 
\end{lemma}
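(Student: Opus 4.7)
The plan is to establish, for all $y \in \mathbb{R}$, $u \in \mathbb{R}$ and $a \in \mathbb{N}$, the integral representation
\begin{equation*}
\mathcal{G}_{0}(y; u,a) = \frac{2^{a/2}}{\sqrt{\pi}} \bigg[ e^{u} \int_{y}^{+\infty} (t-y)^{a} e^{-t^{2}}\,dt + \int_{-\infty}^{y}(y-t)^{a} e^{-t^{2}}\,dt \bigg].
\end{equation*}
Once this is proved, positivity is immediate: both integrands are non-negative and strictly positive on open subsets of the real line, so both integrals are strictly positive for every $y$, $u$, $a$.

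The strategy is to compare the exponential generating functions in the variable $a$ of the two sides. Setting $Y = \sqrt{2}\,y$ and using the classical identity $\sum_{a\geq 0}\He_{a}(x)s^{a}/a! = e^{xs-s^{2}/2}$ together with $p_{0,a}(x) = i^{-a}\He_{a}(ix)$ gives $\sum_{a \geq 0} p_{0,a}(-Y)s^{a}/a! = e^{-sY+s^{2}/2}$. For $q_{0,a}$, the recurrence $q_{0,a+1}(x) = x q_{0,a}(x) + a q_{0,a-1}(x)$ (valid for $a \geq 1$ with $q_{0,0}=0$, $q_{0,1}=1$, which follows from \eqref{Hermite recurrence} and the definition of $q_{0,a}$) translates, for $G(s,x) := \sum_{a \geq 0} q_{0,a}(x) s^{a}/a!$, into the first-order linear ODE $\partial_{s}G = 1 + (x+s)G$ with $G(0,x) = 0$, whose unique solution is
\begin{equation*}
\sum_{a \geq 0}\frac{q_{0,a}(-Y)}{a!}s^{a} = e^{-sY + s^{2}/2}\int_{0}^{s} e^{Yt - t^{2}/2}\,dt.
\end{equation*}
The $(-1)^{a}$-weighted analogues are obtained by $s \mapsto -s$. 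Substituting these into \eqref{function F} and writing $\tfrac{1}{2}\mathrm{erfc}(y) = \Phi(-Y)$, where $\Phi$ is the standard Gaussian CDF, and $e^{-y^{2}}/\sqrt{2\pi} = \phi(Y)$ with $\phi(Y) = e^{-Y^{2}/2}/\sqrt{2\pi}$, yields an explicit form for the generating function of the left-hand side.

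For the right-hand side, the changes of variable $s = \sqrt{2}\,t$ followed by $w = s \mp Y$ reduce the two integrals to $A_{a}(Y) := \int_{0}^{\infty} w^{a}\phi(w+Y)\,dw$ and $C_{a}(Y) := \int_{0}^{\infty} w^{a}\phi(w-Y)\,dw$, both manifestly positive. Summing against $s^{a}/a!$ and completing the square in the Gaussian produces $\sum_{a \geq 0} A_{a}(Y)s^{a}/a! = e^{s^{2}/2 - sY}\Phi(s-Y)$ and $\sum_{a \geq 0} C_{a}(Y)s^{a}/a! = e^{s^{2}/2 + sY}\Phi(s+Y)$. The required equality of generating functions then decouples into two independent identities, one proportional to $e^{u}$ and one not, each of the form $\Phi(\mp Y) + \int_{0}^{\pm s}\phi(Y-t)\,dt = \Phi(\pm s \mp Y)$; these follow immediately from the elementary observation $\phi(Y)e^{Yt - t^{2}/2} = \phi(Y-t)$ together with the substitution $v = Y - t$. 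The main obstacle is only the bookkeeping of signs and parities induced by the factor $(-1)^{a}$ in \eqref{function F}; no deeper ingredient is required.
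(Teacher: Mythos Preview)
Your proof is correct and takes a genuinely different route from the paper. The paper decomposes $\mathcal{G}_{0}=\mathcal{G}_{0,0}+\tfrac{e^{u}}{2}\mathcal{G}_{0,1}$ (this is the same splitting you implicitly make into the $e^{u}$-part and the constant part) and then shows each piece is positive by induction on $a$: from the differentiation rules for $p_{0,a}$ and $q_{0,a}$ one gets $\tfrac{d}{dy}\mathcal{G}_{0,j}(y,a+1)=\pm\sqrt{2}(a+1)\mathcal{G}_{0,j}(y,a)$, so positivity at level $a$ forces monotonicity at level $a+1$, and positivity then follows from the asymptotics as $y\to+\infty$. Your argument instead identifies $\mathcal{G}_{0,0}$ and $\tfrac12\mathcal{G}_{0,1}$ with the explicit Gaussian integrals $\frac{2^{a/2}}{\sqrt{\pi}}\int_{-\infty}^{y}(y-t)^{a}e^{-t^{2}}dt$ and $\frac{2^{a/2}}{\sqrt{\pi}}\int_{y}^{\infty}(t-y)^{a}e^{-t^{2}}dt$ via exponential generating functions, which makes positivity immediate. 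What your approach buys is a closed-form integral representation that is considerably stronger than mere positivity (and from which the paper's differentiation relations and the $y\to\pm\infty$ asymptotics can be read off by differentiating under the integral and by Laplace's method). What the paper's approach buys is that it avoids any generating-function machinery and stays entirely within the recurrence/differentiation calculus already set up for the associated Hermite polynomials. Both routes are short; yours is more conceptual, the paper's more self-contained.
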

\begin{proof}
Let us write 
\begin{align*}
\mathcal{G}_{0,0}(y,a)&:= (-1)^{a}  \bigg[ p_{0,a}(-\sqrt{2}y)\frac{2-\mathrm{erfc}(y)}{2}  - q_{0,a}(-\sqrt{2} y)  \frac{e^{-y^{2}}}{\sqrt{2\pi}} \bigg],
\\
\mathcal{G}_{0,1}(y,a)&:=  p_{0,a}(-\sqrt{2}y) \mathrm{erfc}(y)  + 2\,q_{0,a}(-\sqrt{2} y)  \frac{e^{-y^{2}}}{\sqrt{2\pi}} .
\end{align*}
Then by \eqref{function F}, we have 
\begin{equation} \label{mathcal G0 decomposition}
\mathcal{G}_{0}(y; u,a)=\mathcal{G}_{0,0}(y,a)+\frac{e^{u}}{2}\mathcal{G}_{0,1}(y,a).
\end{equation}
Using the definitions \eqref{def of p0a}, \eqref{def of q0ap1}, we obtain
\begin{equation} \label{p0 q0 differentiation rule}
	p_{0,a+1}'(x)=(a+1) p_{0,a}(x), \qquad 
	q_{0,a+1}'(x)=(a+1) q_{0,a}(x)+x q_{0,a+1}(x)-p_{0,a+1}(x). 
\end{equation}
The first identity in \eqref{p0 q0 differentiation rule} is well-known and easy to prove. The second one follows from two known identities, namely the differentiation rule $(\He_{k}^{(\nu)})'(x) = (k+\nu)\He_{k-1}^{(\nu)}(x)-\nu \He_{k-1}^{(\nu+1)}(x)$ and the recurrence relation $\He_{k+1}^{(\nu-1)}(x) = x \He_{k}^{(\nu)}(x)-\nu \He_{k-1}^{(\nu+1)}(x)$, see e.g. \cite[eqs (6.3) and (6.6)]{Wunsche}.
It follows from \eqref{p0 q0 differentiation rule} that
\begin{equation} \label{G01 derivative}
\frac{d}{dy} 	\mathcal{G}_{0,0}(y,a+1) =  \sqrt{2}(a+1) \mathcal{G}_{0,0}(y,a) ,\qquad 	\frac{d}{dy} 	\mathcal{G}_{0,1}(y,a+1) = - \sqrt{2}(a+1) \mathcal{G}_{0,1}(y,a).
\end{equation}
Let us now show that 
\begin{align} 
& \mathcal{G}_{0,1}(y,a) >0, & & y\in \R, \label{G01 positive} \\
& \mathcal{G}_{0,1}(y,a) = \sqrt{ \frac{2}{\pi} } e^{-y^2} \frac{a!}{ (\sqrt{2}y)^{a+1} } \Big(1+O(y^{-2}) \Big), & & \mbox{as } y \to +\infty. \label{asymp at inf G01}
\end{align}
We shall prove \eqref{G01 positive} and \eqref{asymp at inf G01} by induction on $a$. For $a=0$, we have $\mathcal{G}_{0,1}(y,0)=\mathrm{erfc}(y)$ and \eqref{G01 positive}, \eqref{asymp at inf G01} follow. Assume now that \eqref{G01 positive} and \eqref{asymp at inf G01} hold for a given $a$. By combining \eqref{G01 derivative} with \eqref{asymp at inf G01}, we infer that \eqref{asymp at inf G01} also holds with $a$ replaced by $a+1$; in particular $\mathcal{G}_{0,1}(y,a+1)>0$ for all sufficiently large $y>0$. Also, by \eqref{G01 derivative} and \eqref{G01 positive}, the function $\mathcal{G}_{0,1}(y,a+1)$ is decreasing for $y \in \R$. Since $\mathcal{G}_{0,1}(y,a+1)>0$ for all sufficiently large $y>0$, we conclude that $\mathcal{G}_{0,1}(y,a+1) >0$ for all $y \in \mathbb{R}$.  

The proof that $\mathcal{G}_{0,0}(y,a) >0$ for all $y\in \mathbb{R}$ is similar, so we omit it.

The statement of the lemma now readily follows from \eqref{mathcal G0 decomposition}.
\end{proof}

To obtain the large $n$ asymptotics of $\smash{S_{2}^{(2)}}$, we need the following lemma from \cite{Charlier 2d gap} to approximate sums of the form $\sum_{j}h(M_{j})$ (by contrast, Lemma \ref{lemma:Riemann sum NEW} approximates sums of the form $\sum_{j}f(j/n)$).
\begin{lemma}(\cite[Lemma 3.11]{Charlier 2d gap})\label{lemma:Riemann sum}
Let $h \in C^{3}(\mathbb{R})$ and $k \in \{1,\ldots,2g\}$. As $n \to + \infty$, we have
\begin{align}
& \sum_{j=g_{-}}^{g_{+}}h(M_{j}) = br^{2b} \int_{-M}^{M} h(t) dt \; \sqrt{n} - 2 b r^{2b} \int_{-M}^{M} th(t) dt + \bigg( \frac{1}{2}-\theta_{-}^{(n,M)} \bigg)h(M)+ \bigg( \frac{1}{2}-\theta_{+}^{(n,M)} \bigg)h(-M) \nonumber \\
& + \frac{1}{\sqrt{n}}\bigg[ 3br^{2b} \int_{-M}^{M}t^{2}h(t)dt + \bigg( \frac{1}{12}+\frac{\theta_{-}^{(n,M)}(\theta_{-}^{(n,M)}-1)}{2} \bigg)\frac{h'(M)}{br^{2b}} - \bigg( \frac{1}{12}+\frac{\theta_{+}^{(n,M)}(\theta_{+}^{(n,M)}-1)}{2} \bigg)\frac{h'(-M)}{br^{2b}} \bigg] \nonumber \\
& + \bigO\Bigg(  \frac{1}{n^{3/2}} \sum_{j=g_{-}+1}^{g_{+}} \bigg( (1+|M_{j}|^{3}) \tilde{\mathfrak{m}}_{j,n}(h) + (1+M_{j}^{2})\tilde{\mathfrak{m}}_{j,n}(h') + (1+|M_{j}|) \tilde{\mathfrak{m}}_{j,n}(h'') + \tilde{\mathfrak{m}}_{j,n}(h''') \bigg)   \Bigg), \label{sum f asymp 2}
\end{align}
where, for $\tilde{h} \in C(\mathbb{R})$ and $j \in \{g_{-}+1,\ldots,g_{+}\}$, $\tilde{\mathfrak{m}}_{j,n}(\tilde{h}) := \max_{x \in [M_{j},M_{j-1}]}|\tilde{h}(x)|$.
\end{lemma}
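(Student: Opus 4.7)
The natural plan is to apply the Euler--Maclaurin summation formula to the function $F(x):=h(M_x)$, where $M_x:=\sqrt{n}\bigl(bnr^{2b}/(x+\alpha)-1\bigr)$ extends the sequence $\{M_j\}$ to real $x$, and then perform a change of variables $t=M_x$ in the resulting integral. The endpoints $g_{\pm}$ are already integers, so standard Euler--Maclaurin applies directly; all ``fractional'' contributions will come from expressing $M_{g_\pm}$ in terms of $\theta_\pm^{(n,M)}$.

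The first step is to write
\begin{equation*}
\sum_{j=g_-}^{g_+}F(j)=\int_{g_-}^{g_+}F(x)\,dx+\tfrac12\bigl(F(g_-)+F(g_+)\bigr)+\tfrac{1}{12}\bigl(F'(g_+)-F'(g_-)\bigr)+R,
\end{equation*}
where the remainder $R$ is expressed as a sum over $[j,j+1]$ of local integrals involving $F'''$. The second step is the change of variables $t=M_x$, which turns the integral into $\int_{M_{g_+}}^{M_{g_-}}h(t)\,\rho_n(t)\,dt$ with density $\rho_n(t):=br^{2b}\sqrt{n}/(1+t/\sqrt{n})^2$. Expanding $\rho_n(t)=br^{2b}\sqrt{n}-2br^{2b}\,t+3br^{2b}\,t^2/\sqrt{n}+O(t^3/n)$ produces (after extending the limits to $[-M,M]$) precisely the three integrals $\int_{-M}^{M}h,\,th,\,t^2 h$ appearing on the right-hand side.

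The third step is to quantify the shift from $[M_{g_+},M_{g_-}]$ to $[-M,M]$. Expanding $\lambda_{g_\pm}=bnr^{2b}/(g_\pm+\alpha)$ as a geometric series, using the defining relations of $g_\pm$ and $\theta_\pm^{(n,M)}$ from \eqref{def of theta nM}, gives $M_{g_-}-M=-\theta_-^{(n,M)}/(br^{2b}\sqrt{n})+O(1/n)$ and $M_{g_+}+M=\theta_+^{(n,M)}/(br^{2b}\sqrt{n})+O(1/n)$. Taylor-expanding $h(t)\rho_n(t)$ near $t=\pm M$ to second order in the small displacement, and likewise expanding $F(g_\pm)=h(M_{g_\pm})$ and $F'(g_\pm)=h'(M_{g_\pm})M'_{g_\pm}$ with $M'_{g_\pm}=-(1+M_{g_\pm}/\sqrt{n})^2/(br^{2b}\sqrt{n})$, the first-order contributions combine with $\tfrac12(F(g_-)+F(g_+))$ to give the terms $(\tfrac12-\theta_\pm^{(n,M)})h(\pm M)$. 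The second-order contributions together with $\tfrac{1}{12}(F'(g_+)-F'(g_-))$ assemble, after a direct computation in which several $(1+2M/\sqrt{n})$-type factors cancel, into the combination $\tfrac{1}{12}+\theta_\pm^{(n,M)}(\theta_\pm^{(n,M)}-1)/2=\tfrac12 B_2(\theta_\pm^{(n,M)})$ multiplying $h'(\pm M)/(br^{2b}\sqrt{n})$, which is exactly the coefficient one expects from Euler--Maclaurin with fractional endpoints.

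The main obstacle is the bookkeeping for the remainder $R$. Each differentiation of $F(x)=h(M_x)$ brings in $M'_x,M''_x,M'''_x$, which carry growing powers of $\sqrt{n}$. The plan is to expand $F'''$ via the Faà di Bruno formula as a linear combination of $h^{(k)}(M_x)$ for $k=1,2,3$, with explicit coefficients depending polynomially on $M_x$ (and on $\sqrt{n}$ through $(1+M_x/\sqrt{n})$-factors). Converting the $x$-integral of $F'''$ over $[j,j+1]$ into a $t$-integral over $[M_{j+1},M_j]$ via the relation $|M_j-M_{j+1}|\asymp(br^{2b}\sqrt{n})^{-1}(1+M_j/\sqrt{n})^2$, and bounding each term by its supremum on this $t$-interval, produces the stated error of the form $O\bigl(n^{-3/2}\sum_j[(1+|M_j|^3)\tilde{\mathfrak m}_{j,n}(h)+(1+M_j^2)\tilde{\mathfrak m}_{j,n}(h')+(1+|M_j|)\tilde{\mathfrak m}_{j,n}(h'')+\tilde{\mathfrak m}_{j,n}(h''')]\bigr)$, where the weight $(1+|M_j|^k)$ records the polynomial growth in $M_x$ of the chain-rule coefficients at each order.
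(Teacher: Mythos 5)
The paper does not actually prove this lemma: it is imported verbatim from \cite[Lemma 3.11]{Charlier 2d gap}, so there is no in-paper argument to compare against, and your proposal must be judged on its own. Its skeleton is sound and is the natural route. With $t=M_x$ the Jacobian is $\rho_n(t)=br^{2b}\sqrt{n}\,(1+t/\sqrt{n})^{-2}$, whose expansion gives exactly the coefficients $br^{2b}\sqrt{n}$, $-2br^{2b}$, $3br^{2b}/\sqrt{n}$ of the three integrals; since $\phi(t):=\frac{bnr^{2b}}{1+t/\sqrt n}-\alpha$ satisfies $\phi(M_{g_-})-\phi(M)=\theta_{-}^{(n,M)}$, the endpoint displacement is $M-M_{g_-}=\theta_{-}^{(n,M)}(1+M/\sqrt n)^{2}/(br^{2b}\sqrt n)+\bigO(M/n)$ (note $\bigO(M/n)$, not $\bigO(1/n)$, which is harmless), and your assembly of the first- and second-order endpoint contributions is correct: $\tfrac{\theta^{2}}{2}-\tfrac{\theta}{2}+\tfrac1{12}=\tfrac12 B_{2}(\theta)$ multiplying $h'(\pm M)/(br^{2b}\sqrt n)$, with the expected sign flip at $t=-M$, and the $(\tfrac12-\theta_{\pm}^{(n,M)})h(\pm M)$ terms come out as you say.

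The one concrete flaw is in your final error-bookkeeping paragraph. The Euler--Maclaurin remainder involves only $h'(M_x),h''(M_x),h'''(M_x)$, and its chain-rule coefficients do \emph{not} grow polynomially in $M_x$: since $|M_x|\le M\ll\sqrt n$ one has $(M_x')^{3}=\bigO(n^{-3/2})$, $M_x'M_x''=\bigO(n^{-2})$, $M_x'''=\bigO(n^{-5/2})$ uniformly, so this remainder only produces the $n^{-3/2}\sum_j\tilde{\mathfrak{m}}_{j,n}(h''')$ piece (plus smaller ones); it cannot produce the term $(1+|M_j|^{3})\tilde{\mathfrak{m}}_{j,n}(h)$, which involves $h$ itself. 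That component comes instead from the $\bigO(|t|^{3}/n)$ truncation of $\rho_n(t)$ that you yourself write down in step two but then drop when assembling the error: integrating $h(t)\,\bigO(|t|^{3}/n)$ over each cell $[M_{j},M_{j-1}]$ of length $\asymp(br^{2b}\sqrt n)^{-1}$ gives precisely $n^{-3/2}|M_j|^{3}\tilde{\mathfrak{m}}_{j,n}(h)$ per cell. Similarly, the neglected $(1+t/\sqrt n)$-corrections and the second-order Taylor remainders in the endpoint expansions (of sizes $\bigO(Mn^{-1}|h'(\pm M)|)$ and $\bigO(n^{-1}|h''|)$ near $t=\pm M$) must be recorded explicitly rather than attributed to $F'''$; for the functions $h_0,h_1$ to which the lemma is applied in this paper these are negligible, but your write-up should trace each of the four error components to its actual source before the proof can be considered complete.
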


\begin{lemma}\label{lemma:S2kp2p}
As $n \to + \infty$, we have
\begin{align*}
&  S_{2}^{(2)} = -abr^{2b}M\sqrt{n} \ln n + br^{2b} \sqrt{n} \int_{-M}^{M}h_{0}(t)dt + a\frac{\theta_{-}^{(n,M)}+\theta_{+}^{(n,M)}-1}{2}\ln n \\
& + br^{2b} \int_{-M}^{M} \big( h_{1}(t)-2th_{0}(t) \big)dt + \bigg( \frac{1}{2}-\theta_{-}^{(n,M)} \bigg)h_{0}(M) + \bigg( \frac{1}{2}-\theta_{+}^{(n,M)} \bigg)h_{0}(-M) + \bigO\bigg( \frac{M^{3}}{\sqrt{n}}\ln n \bigg),
\end{align*}
where
\begin{align}
	\mathcal{H}_{0}(x) &:=\frac{r^{a-ab}}{(2b)^{a}} \, \mathcal{G}_0\Big( \hspace{-0.1cm}-\frac{r^bx}{\sqrt{2}};u,a\Big),  \label{def of Hmathcal0}
	\\
	 h_{0}(x) &:= \ln \big(\mathcal{H}_{0}(x)\big), \label{def of h0} 
	\\
	  h_1(x)&: =  \frac{r^{a-(1+a)b}}{(2b)^{a+1}} \,\frac{1}{\mathcal{H}_0(x)}\, \mathcal{G}_1\Big( \hspace{-0.1cm}-\frac{r^bx}{\sqrt{2}};u,a\Big),  \label{def of h1}
\end{align}
and the functions $\mathcal{G}_0$ and  $\mathcal{G}_1$ are given by \eqref{function F} and \eqref{function H}. 
\end{lemma}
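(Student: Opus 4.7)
The plan is to expand the summand of $S_2^{(2)}$ via \eqref{lol29} of Lemma \ref{lemma:some computation for S2kp2p}, exploit finite-difference cancellations that eliminate all contributions below order $n^{-a/2}$, identify the leading ($\ell=a$) and subleading ($\ell=a+1$) surviving pieces by combining the explicit identities of Lemmas \ref{lemma:sum of qaa(1)} and \ref{lemma:sum of qaa(4)}, and finally convert the remaining $j$-sum into an integral via Lemma \ref{lemma:Riemann sum}. Concretely, \eqref{lol29} gives
$$\sum_{k=0}^a \binom{a}{k}\frac{(-r)^{a-k}}{n^{k/(2b)}}\frac{\Gamma(\frac{2j+2\alpha+k}{2b})}{\Gamma(\frac{2j+2\alpha}{2b})}\big[\cdots\big] \sim r^a \sum_{\ell\ge 0}\frac{\mathcal{T}_\ell(M_j)}{n^{\ell/2}},\qquad \mathcal{T}_\ell(M_j):=\sum_{k=0}^a \binom{a}{k}(-1)^{a-k}\mathcal{A}_\ell(M_j;k).$$
Each $\mathcal{A}_\ell(M_j;k)$ is a polynomial in $k$ of degree $\ell$, and the classical identity $\sum_{k=0}^a\binom{a}{k}(-1)^{a-k}k^m=a!\,S(m,a)$, with Stirling numbers of the second kind $S(m,a)=0$ for $m<a$, forces $\mathcal{T}_\ell\equiv 0$ for $\ell=0,\dots,a-1$.

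At $\ell=a$ only the $k^a$ coefficient of $\mathcal{A}_a$ survives (weight $a!$): combining \eqref{q aap 1 sum} and \eqref{q aap 4 sum} (the latter applied with $a$ replaced by $a-1$) with the algebraic rewriting $(-1)^a\mathcal{G}_0(y;u,a) = p_{0,a}(-\sqrt{2}y)(1+\frac{(-1)^ae^u-1}{2}\mathrm{erfc}(y)) + q_{0,a}(-\sqrt{2}y)((-1)^ae^u-1)\frac{e^{-y^2}}{\sqrt{2\pi}}$ at $y=-r^bM_j/\sqrt{2}$ identifies $r^a\,a!\,[k^a]\mathcal{A}_a(M_j;\cdot) = \mathcal{H}_0(M_j)$. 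At $\ell=a+1$ both $[k^a]$ (weight $a!$) and $[k^{a+1}]$ (weight $\frac{a!\,a(a+1)}{2}$) contribute. The $[k^a]$-piece is handled via \eqref{q a+1ap 1 sum}, \eqref{q a+1ap 4 sum} together with the algebraic decompositions $-ab\,E_p = p_{1,a}+\frac{a}{2}p_{0,a+1}$ and $-b\,E_q = q_{1,a}+\frac{a}{2}q_{0,a+1}$ implicit in \eqref{def of p1a}, \eqref{def of q1a} (with $E_p,E_q$ the bracketed expressions there); the $[k^{a+1}]$-piece via \eqref{q aap 1 sum}, \eqref{q aap 4 sum} with $a\to a+1$. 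After weighting by the appropriate Stirling factors, the spurious $p_{0,a+1},q_{0,a+1}$-combinations cancel exactly: the $\frac{a(a+1)}{2}$-weight from $[k^{a+1}]$ precisely balances the $\frac{a}{2}$-correction from $[k^a]$. What remains is $\frac{r^{a-(a+1)b}}{(2b)^{a+1}}\mathcal{G}_1(-r^bM_j/\sqrt{2};u,a) = \mathcal{H}_0(M_j)\,h_1(M_j)$.

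Since $\mathcal{H}_0(M_j)>0$ by Lemma \ref{lemma:mathcalG0 positive}, one may take the logarithm and Taylor expand, obtaining $\ln(\sum_k \cdots) = -\frac{a}{2}\ln n + h_0(M_j) + h_1(M_j)/\sqrt{n} + \bigO(\cdot)$, where the error comes from levels $\ell\ge a+2$ and is polynomial in $M_j$. Summing over $j\in\{g_-,\dots,g_+\}$: the first term yields $-\frac{a}{2}(g_+-g_-+1)\ln n$, which by \eqref{def of theta nM} equals $-abr^{2b}M\sqrt{n}\,\ln n + a\frac{\theta_-^{(n,M)}+\theta_+^{(n,M)}-1}{2}\ln n$ up to corrections absorbed in the stated error; applying Lemma \ref{lemma:Riemann sum} to $\sum_j h_0(M_j)$ and $\sum_j h_1(M_j)/\sqrt{n}$ produces $br^{2b}\sqrt{n}\int_{-M}^M h_0\,dt$, $br^{2b}\int_{-M}^M(h_1-2th_0)\,dt$, and the boundary corrections $(\frac{1}{2}-\theta_-^{(n,M)})h_0(M)+(\frac{1}{2}-\theta_+^{(n,M)})h_0(-M)$ stated in the lemma.

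The main obstacle is the level-$(a+1)$ cancellation of the spurious $p_{0,a+1},q_{0,a+1}$-pieces: it requires orchestrating all four coefficient families $\mathfrak{q}_{a+1,a,p}^{(j)}, \mathfrak{q}_{a+1,a+1,p}^{(j)}$ ($j\in\{1,4\}$) simultaneously, and hinges on the delicate matching between the Stirling weight $S(a+1,a)=\frac{a(a+1)}{2}$ and the $\frac{a}{2}$-correction that is built by design into the definitions \eqref{def of p1a}, \eqref{def of q1a} of $p_{1,a}$ and $q_{1,a}$. Tracking the $\bigO(M^3\ln n/\sqrt{n})$ remainder throughout further requires polynomial bounds on the higher-level $\mathcal{A}_\ell$ in $M_j$ combined with the sub-leading error term of Lemma \ref{lemma:Riemann sum}.
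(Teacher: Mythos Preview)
Your proposal is correct and follows essentially the same route as the paper: expand via Lemma~\ref{lemma:some computation for S2kp2p}, kill the first $a$ levels by the Stirling identity \eqref{Stirling number sum}, identify the surviving $\ell=a$ and $\ell=a+1$ coefficients through Lemmas~\ref{lemma:sum of qaa(1)} and~\ref{lemma:sum of qaa(4)}, and finish with Lemma~\ref{lemma:Riemann sum}. Your explicit articulation of why the $p_{0,a+1},q_{0,a+1}$ contributions cancel at level $a+1$---namely that the Stirling weight $S(a+1,a)=\tfrac{a(a+1)}{2}$ exactly offsets the $\tfrac{a}{2}$-shift built into \eqref{def of p1a}--\eqref{def of q1a}---is a nice clarification of what the paper summarizes as ``after simplifications''.
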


\begin{proof}
By \eqref{asymp prelim of S2kpvp} and Lemma \ref{lemma: uniform}, 
\begin{align}
S_{2}^{(2)} & = \sum_{j:\lambda_{j}\in I_{2}} \ln \bigg(\sum_{k=0}^{a} {a \choose k} \frac{(-r)^{a-k}}{n^{\frac{k}{2b}}} \frac{\Gamma(\frac{2j+2\alpha+k}{2b})}{\Gamma(\frac{2j+2\alpha}{2b})} \nonumber \\
& \hspace{3cm} \times \bigg[ 1+((-1)^{a}e^{u}-1)\bigg( \mathrm{erfc}\Big(\hspace{-0.1cm}-\hspace{-0.05cm}\eta_{j,k} \sqrt{\tfrac{a_{j,k}}{2}}\Big)  - R_{a_{j,k}}(\eta_{j,k}) \bigg) \bigg]\bigg). \label{lol30}
\end{align}
Recall that for all $j \in \{j:\lambda_{j}\in I_{2}\}$, we have $1-\frac{M}{\sqrt{n}} \leq \lambda_{j} = \frac{bnr^{2b}}{j+\alpha} \leq 1+\frac{M}{\sqrt{n}}$, and $-M \leq M_{j} \leq M$. 


The expansion \eqref{lol29} implies that
\begin{align}\label{expansion in terms of mathcalB}
& \sum_{k=0}^{a} {a \choose k} \frac{(-r)^{a-k}}{n^{\frac{k}{2b}}} \frac{\Gamma(\frac{2j+2\alpha+k}{2b})}{\Gamma(\frac{2j+2\alpha}{2b})} \bigg[ 1+((-1)^{a}e^{u}-1)\bigg( \frac{1}{2}\mathrm{erfc}\Big(\hspace{-0.1cm}-\hspace{-0.05cm}\eta_{j,k} \sqrt{\tfrac{a_{j,k}}{2}}\Big) \hspace{-0.1cm}-\hspace{-0.05cm} R_{a_{j,k}}(\eta_{j,k}) \bigg) \bigg] \sim \sum_{\ell=0}^{+\infty} \frac{\mathcal{B}_{\ell}(M_{j};a)}{\sqrt{n^{\ell} }},
\end{align}
where
\begin{align}\label{def of mathcalB}
\mathcal{B}_{\ell}(x;a) := \sum_{k=0}^{a} {a \choose k} (-1)^{a-k}r^{a}\mathcal{A}_{\ell}(x;k).
\end{align}
It turns out that the first $a$ terms in the expansion \eqref{expansion in terms of mathcalB} are 0, i.e.
\begin{align}\label{first mathcalB are 0}
\mathcal{B}_{0}(\cdot;a) \equiv \mathcal{B}_{1}(\cdot;a) \equiv \ldots \equiv \mathcal{B}_{a-1}(\cdot;a) \equiv 0.
\end{align}
To prove this, we use \cite[eq 26.8.6]{NIST}, i.e.
\begin{equation} \label{Stirling number sum}
	\sum_{k=0}^{a}{a \choose k}(-1)^{a-k}k^{\ell} = a! \,S(\ell,a), \qquad \ell,a \in \mathbb{N},
\end{equation} 
where we recall that $S(\ell,a)$ is the Stirling number of the second kind. In particular, 
\begin{equation}\label{sum tricky 0}
	\sum_{k=0}^{a}{a \choose k}(-1)^{a-k}k^{\ell} =  \begin{cases}
		0 &\mbox{if } \ell < a,
		\\
		a! &\mbox{if }\ell=a,
		\\
		(a+1)!\,a/2 &\mbox{if }\ell=a+1.
	\end{cases}
\end{equation}
Since $k\mapsto \mathcal{A}_{\ell}(x;k)$ is a polynomial of degree $\ell$ by \eqref{expansion of mathcalA}, the identities \eqref{first mathcalB are 0} directly follow from \eqref{def of mathcalB}. 

Let us now compute $\mathcal{B}_a(x;a)$ and $\mathcal{B}_{a+1}(x;a)$. By \eqref{expansion of mathcalA}, \eqref{def of mathcalB} and \eqref{sum tricky 0}, we have 
\begin{align}
\mathcal{B}_a(x;a)& =r^a  \bigg( (-1)^a+\frac{e^{u}-(-1)^a}{2}\mathrm{erfc}\bigg( -\frac{r^{b}x}{\sqrt{2}} \bigg) \bigg) (-1)^a a! \sum_{p=0}^{a}\mathfrak{q}_{a,a,p}^{(1)} x^{p} \nonumber
\\
&+ r^a (e^{u}-(-1)^{a})\frac{e^{-\frac{r^{2b}x^{2}}{2}}}{\sqrt{2\pi} }(-1)^{a} a! \sum_{p=0}^{a-1} \mathfrak{q}_{a,a,p}^{(4)} x^{p} \label{Ba qaap (4,5)}
\end{align}
and 
\begin{align}
\mathcal{B}_{a+1}(x;a)& = r^a  \bigg( (-1)^a+\frac{e^{u}-(-1)^a}{2}\mathrm{erfc}\bigg( -\frac{r^{b}x}{\sqrt{2}} \bigg) \bigg) (-1)^a a!  \sum_{p=0}^{a+1} \Big( \mathfrak{q}_{a+1,a,p}^{(1)}+\frac{a(a+1)}{2} \mathfrak{q}_{a+1,a+1,p}^{(1)}  \Big) x^{p} \nonumber
		\\
		&+ r^a (e^{u}-(-1)^{a})\frac{e^{-\frac{r^{2b}x^{2}}{2}}}{\sqrt{2\pi} }(-1)^{a} a!  \bigg( \sum_{p=0}^{a+2} \mathfrak{q}_{a+1,a,p}^{(4)} x^{p} +  \frac{a(a+1)}{2}  \sum_{p=0}^{a}\mathfrak{q}_{a+1,a+1,p}^{(4)} x^{p} \bigg)  .  \label{Ba+1 qaap (4,5)}
\end{align}
More generally, for $\ell \geq 1$, we have
\begin{align}
\mathcal{B}_{a+\ell}(x;a) & = r^{a}\bigg( (-1)^{a}+\frac{e^{u}-(-1)^{a}}{2}\mathrm{erfc}\bigg( -\frac{r^{b}x}{\sqrt{2}} \bigg) \bigg)(-1)^{a}a! \sum_{m=\max(a,1)}^{a+\ell}\sum_{p=0}^{a+\ell}\mathfrak{q}_{a+\ell,m,p}^{(1)}S(m,a)x^{p} \nonumber \\
&\quad  + r^{a}(e^{u}-(-1)^{a})\frac{e^{-\frac{r^{2b}x^{2}}{2}}}{\sqrt{2\pi} }(-1)^{a}a! \sum_{m=a}^{a+\ell}\sum_{p=0}^{ 3(a+\ell)-1-2m } \mathfrak{q}_{a+\ell,m,p}^{(4)}S(m,a)x^{p}. \label{Ba+ell}
\end{align}
Let us define 
\begin{equation}\label{def of H0 h0 h1 Ba}
\mathcal{H}_{0}(x):=\mathcal{B}_{a}(x;a), \qquad h_{0}(x):=\ln(\mathcal{H}_{0}(x)), \qquad h_{1}(x):=\frac{\mathcal{B}_{a+1}(x;a)}{\mathcal{B}_{a}(x;a)}. 
\end{equation}
By combining \eqref{Ba qaap (4,5)} and \eqref{Ba+1 qaap (4,5)} with Lemmas~\ref{lemma:sum of qaa(1)} and \ref{lemma:sum of qaa(4)}, we obtain after simplifications that the functions $\mathcal{H}_{0}(x)$ and $h_{1}(x)$ can be written as in \eqref{def of Hmathcal0} and \eqref{def of h1}. 

Since the case $a=0$ was already done in \cite{Charlier}, from here on we focus on the more complicated case $a\ge 1$. We note the following important facts:
\begin{enumerate}[label=(\roman*)]
	\item By Lemma~\ref{lemma:mathcalG0 positive}, $\mathcal{H}_{0}(x)>0$ for all $x \in \mathbb{R}$.
	\item By \eqref{def of Hmathcal0} and \eqref{def of h0}, $h_{0}(x)$ grows logarithmically at $\pm\infty$.
	\item By \eqref{def of Hmathcal0} and \eqref{def of h1}, $h_{1}(x)$ grows linearly at $\pm \infty$.
	\item By \eqref{Ba+ell} and \eqref{Ba qaap (4,5)}, $\frac{\mathcal{B}_{a+\ell}(x;a)}{\mathcal{B}_{a}(x;a)}=\bigO(x^{\ell})$ as $x \to \pm \infty$.
\end{enumerate}
(Another reason as to why the case $a=0$ is significantly simpler than the case $a \geq 1$ stems from the fact that for $a=0$, the function $h_{0}(x)$ remains bounded, and furthermore $h_{1}(x)$ becomes exponentially small as $x\to \pm\infty$.)

After substituting \eqref{expansion in terms of mathcalB} in \eqref{lol30}, we obtain
\begin{align}\label{lol31}
S_{2}^{(2)} = -\frac{a}{2}\ln n \; \sum_{j=g_{-}}^{g_{+}}1 + \Sigma_{0}^{(2)} + \frac{1}{\sqrt{n}}\Sigma_{1}^{(2)} + \bigO \bigg( \frac{1}{n}\sum_{j=g_{k,-}}^{g_{k,+}}M_{j}^{2} \bigg),
\end{align}
where 
\begin{align*}
& \Sigma_{0}^{(2)} := \sum_{j=g_{-}}^{g_{+}} h_{0}(M_{j}), \qquad \Sigma_{1}^{(2)} := \sum_{j=g_{-}}^{g_{+}} h_{1}(M_{j}).
\end{align*}
The $\bigO$-term in \eqref{lol31} is $\bigO(\frac{M^{3}}{\sqrt{n}})$. Also, by Lemma \ref{lemma:Riemann sum}, 
\begin{align*}
& -\frac{a}{2}\ln n \; \sum_{j=g_{-}}^{g_{+}}1 = - a b r^{2b}M\sqrt{n} \ln n + a\frac{\theta_{-}^{(n,M)}+\theta_{+}^{(n,M)}-1}{2} \ln n + \bigO\bigg( \frac{M^{3}}{\sqrt{n}}\ln n \bigg), \\
& \Sigma_{0}^{(2)} = br^{2b}\sqrt{n}\int_{-M}^{M}h_{0}(t) \, dt -2br^{2b}\int_{-M}^{M}t \, h_{0}(t) \, dt \\
& \hspace{1cm} + \bigg( \frac{1}{2}-\theta_{-}^{(n,M)} \bigg)h_{0}(M) + \bigg( \frac{1}{2}-\theta_{+}^{(n,M)} \bigg)h_{0}(-M) + \bigO\bigg( \frac{M^{3}\ln n}{\sqrt{n}} \bigg), \\
& \frac{1}{\sqrt{n}}\Sigma_{1}^{(2)} = br^{2b} \int_{-M}^{M} h_{1}(t)dt + \bigO \bigg( \frac{M^{3}}{\sqrt{n}} \bigg), \qquad  \frac{1}{n}\sum_{j=g_{k,-}}^{g_{k,+}}M_{j}^{2} = \bigO\bigg( \frac{M^{3}}{\sqrt{n}} \bigg),
\end{align*}
as $n \to + \infty$. We now obtain the claim after a computation.
\end{proof}

\subsection{Asymptotics of $S_{2}$}
We are now in a position to compute the large $n$ asymptotics of $S_{2}$.
\begin{lemma}\label{lemma: asymp of S2k final}
As $n \to + \infty$,
\begin{align*}
& S_{2} =  \widehat{C}_{1}^{(\epsilon)} n + \widehat{C}_{2} \sqrt{n} + \widehat{C}_{3}^{(n,\epsilon)} + \bigO \bigg( \frac{M^{3}}{\sqrt{n}}\ln n + \frac{\sqrt{n}}{M^{5}} \bigg),
\end{align*}
where
\begin{align}
& \widehat{C}_{1}^{(\epsilon)} =  \int_{\frac{br^{2b}}{1+\epsilon}}^{br^{2b}} \bigg( u + a \ln \bigg( r-\bigg( \frac{x}{b} \bigg)^{\frac{1}{2b}} \bigg) \bigg)dx + \int_{br^{2b}}^{\frac{br^{2b}}{1-\epsilon}} a \ln \bigg( \bigg( \frac{x}{b} \bigg)^{\frac{1}{2b}} - r \bigg)dx, \label{def of C1 hat}
\\
& \widehat{C}_{2} = b r^{2b} \int_{-\infty}^{+\infty} \hat{h}_{0}(t) \, dt, \label{def of C2 hat}
\\
& \widehat{C}_{3}^{(n,\epsilon)} = -\alpha u + \frac{1+2\alpha - 2\theta_{-}^{(n,\epsilon)}}{2}\bigg( u + a \ln \Big( r-r(1+\epsilon)^{-\frac{1}{2b}} \Big) \bigg) \nonumber 
\\
& + \frac{1-2\alpha-2\theta_{+}^{(n,\epsilon)}}{2}a \ln \Big( r(1-\epsilon)^{-\frac{1}{2b}}-r \Big)   \nonumber
\\
&+ \frac{a}{4}\bigg\{ \frac{1-a}{1-(1+\epsilon)^{-\frac{1}{2b}}} + \frac{1-a}{(1-\epsilon)^{-\frac{1}{2b}}-1}  + (a-2b+4\alpha) \ln \bigg( \frac{(1-\epsilon)^{-\frac{1}{2b}}-1}{1-(1+\epsilon)^{-\frac{1}{2b}}} \bigg) \bigg\}  \nonumber
\\
&+ br^{2b} \int_{-\infty}^{+\infty} \bigg\{ h_{1}(t)-2\,th_{0}(t) +  \frac{a}{4b} \bigg( 1+2b+8b\ln \bigg( \frac{r|t|}{2b} \bigg) \bigg) t + 2u t \chi_{(0,\infty)} (t) - \frac{(2ab-a^{2})t}{4br^{2b}(1+t^{2})}  \bigg\} \,dt .  \nonumber 
\end{align}
Here
\begin{align*}
\hat{h}_{0}(x) = h_{0}(x) - a \ln \bigg( \frac{r|x|}{2b} \bigg) -u \,\chi_{(0,\infty)}(x)
\end{align*}
and $h_{0}, h_{1}$ are given in the statement of Lemma \ref{lemma:S2kp2p}.
\end{lemma}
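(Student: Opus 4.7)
\bigskip

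\noindent\textbf{Proof proposal.} The plan is to simply add the three asymptotic expansions from Lemmas \ref{lemma:S2kp1p}, \ref{lemma:S2kp2p} and \ref{lemma:S2kp3p}, and then track a chain of cancellations together with a regularization of the $\int_{-M}^{M}$ integrals. The $\bigO$-error will be dominated by $\bigO(\frac{\sqrt{n}}{M^{5}}+\frac{M^{3}\ln n}{\sqrt{n}})$ coming from $S_{2}^{(1)}$, $S_{2}^{(3)}$ (the first) and $S_{2}^{(2)}$ (the second).

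First I would verify the cancellation of the leading divergent terms in $M$ and $\ln n$. The terms proportional to $n$ in Lemmas \ref{lemma:S2kp1p} and \ref{lemma:S2kp3p} directly give $\widehat{C}_{1}^{(\epsilon)}n$. For the $\sqrt n$ part, the key observation is that the explicit $\sqrt n$ contributions of $S_{2}^{(1)}+S_{2}^{(3)}$ read (after combining)
\begin{equation*}
abr^{2b}M\sqrt{n}\Big[-2\ln(rM/(2b))+2\Big]\sqrt{n}-ubr^{2b}M\sqrt{n}-abr^{2b}M\sqrt n\cdot (\text{pieces with }\ln(2b\sqrt{n}/(rM))),
\end{equation*}
while $S_{2}^{(2)}$ contributes $-abr^{2b}M\sqrt{n}\ln n$ together with the integral $br^{2b}\sqrt n \int_{-M}^M h_0$. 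The asymptotic behavior $h_{0}(t)\sim a\ln(r|t|/(2b))+u\chi_{(0,\infty)}(t)$ as $t\to\pm\infty$ (which follows from the expressions \eqref{def of Hmathcal0}, \eqref{function F} for $\mathcal{H}_0$ and the fact that $p_{0,a}$ is of degree $a$, together with the $y\to\pm\infty$ behavior of $\mathrm{erfc}$) lets me write
\begin{equation*}
\int_{-M}^{M} h_{0}(t)\,dt = \int_{-\infty}^{+\infty}\hat h_{0}(t)\,dt+2aM\ln\big(rM/(2b)\big)-2aM+uM+\bigO\big(e^{-cM^{2}}\big).
\end{equation*}
Multiplying by $br^{2b}\sqrt{n}$ and combining with the explicit $\sqrt n$ pieces above produces a cancellation of all $M$-dependent terms, leaving exactly $\widehat{C}_{2}\sqrt{n}=br^{2b}\sqrt n\int_{-\infty}^{+\infty}\hat h_{0}\,dt$. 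This matches the statement.

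Next I would handle the $\ln n$ and $\ln M$ contributions. The sum of the $\ln(M/\sqrt n)$ prefactors from $S_{2}^{(1)}$ and $S_{2}^{(3)}$ collapses (the $\pm abr^{2b}M^{2}$ terms and the $\pm a(a-2b+4\alpha)/4$ terms cancel in pairs) to $a(\theta_{+}^{(n,M)}+\theta_{-}^{(n,M)}-1)\ln(M/\sqrt n)$. Adding the $a(\theta_{-}^{(n,M)}+\theta_{+}^{(n,M)}-1)(\ln n)/2$ of $S_{2}^{(2)}$ leaves a residue $a(\theta_{+}^{(n,M)}+\theta_{-}^{(n,M)}-1)\ln M$, which gets killed by the expansions of the boundary contributions $(\tfrac12-\theta_{-}^{(n,M)})h_{0}(M)+(\tfrac12-\theta_{+}^{(n,M)})h_{0}(-M)$ from Lemma \ref{lemma:S2kp2p} (using once more $h_{0}(\pm M)\sim a\ln(rM/(2b))+u\chi_{(0,\infty)}(\pm M)$). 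Collecting the constant remainders of these boundary terms, together with the constant pieces of $S_{2}^{(1)}$, $S_{2}^{(3)}$ that are independent of $M$, gives the explicit $\theta_{\pm}^{(n,\epsilon)}$ and $a/4$-brackets appearing in $\widehat{C}_{3}^{(n,\epsilon)}$.

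The only remaining step is the regularization of the integral $br^{2b}\int_{-M}^M h_1(t)dt$ from Lemma \ref{lemma:S2kp2p} (together with the contribution $-2br^{2b}\int_{-M}^M t h_0(t)dt$). Using the explicit polynomial degree of $p_{1,a}$ in \eqref{def of p1a} I would show $h_{1}(t)\sim -\tfrac{a(1+2b)}{4b}t$ as $t\to\pm\infty$, and more precisely
\begin{equation*}
h_{1}(t)-2th_{0}(t)+\tfrac{a}{4b}\big(1+2b+8b\ln(r|t|/(2b))\big)t+2ut\chi_{(0,\infty)}(t)-\tfrac{(2ab-a^{2})t}{4br^{2b}(1+t^{2})}\in L^{1}(\mathbb{R}),
\end{equation*}
the three subtraction terms being designed precisely to neutralize the linear, $t\log|t|$, and odd $1/t$ tails of $h_{1}-2th_{0}$. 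The corresponding compensating integrals over $[-M,M]$ are then evaluated in closed form (the $\log$-moment giving $M^{2}\ln M$-pieces which combine into the $\ln M$-cancellation discussed above), and the extension to $\mathbb{R}$ costs only an exponentially small error. Putting these pieces together yields the integral expression for $\widehat{C}_{3}^{(n,\epsilon)}$. The hard part of the argument is precisely this bookkeeping at the constant order: one must simultaneously track four different sources ($S_{2}^{(1)}$, $S_{2}^{(3)}$, the boundary terms of $S_{2}^{(2)}$, and the regularization of $h_{1}-2th_{0}$) to see that all $M$-dependent constants cancel, leaving the clean expression in the statement.
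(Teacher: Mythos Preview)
Your overall strategy---add the three lemmas and track cancellations---is exactly the paper's, but there is a real gap at the $\sqrt{n}$ level. You claim
\[
\int_{-M}^{M} h_{0}(t)\,dt = \int_{-\infty}^{+\infty}\hat h_{0}(t)\,dt+2aM\ln\big(rM/(2b)\big)-2aM+uM+\bigO\big(e^{-cM^{2}}\big),
\]
but this is false for $a\ge 2$. The tail of $\hat h_{0}$ is \emph{not} exponentially small: from \eqref{def of Hmathcal0} one has (as the paper records in \eqref{asymp of h0})
\[
\hat h_{0}(t)=\frac{a(a-1)}{2r^{2b}t^{2}}-\frac{a(a-1)(2a-3)}{4r^{4b}t^{4}}+\bigO(t^{-6}),\qquad t\to\pm\infty,
\]
because $p_{0,a}$ has subleading monomials when $a\ge 2$. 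Hence $\int_{|t|>M}\hat h_{0}=\frac{a(a-1)}{r^{2b}M}-\frac{a(a-1)(2a-3)}{6r^{4b}M^{3}}+\bigO(M^{-5})$, and after multiplying by $br^{2b}\sqrt{n}$ this produces terms $-\frac{a(a-1)b}{M}\sqrt{n}+\frac{ab(a-1)(2a-3)}{6r^{2b}M^{3}}\sqrt{n}$.

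These are precisely what cancel the terms $\frac{a(a-1)b}{M}\sqrt{n}$ and $-\frac{ab(3-5a+2a^{2})}{6r^{2b}M^{3}}\sqrt{n}$ that appear \emph{explicitly} in Lemmas~\ref{lemma:S2kp1p} and~\ref{lemma:S2kp3p} (note $3-5a+2a^{2}=(a-1)(2a-3)$). Your displayed ``explicit $\sqrt{n}$ contributions'' drop these $1/M$ and $1/M^{3}$ pieces, and with an exponential error in the integral they would remain uncancelled, leaving a spurious term of order $\sqrt{n}/M\gg 1$. The fix is exactly to push the expansion of $h_{0}$ two more orders, after which the residual is $\bigO(\sqrt{n}/M^{5})$ as stated. (The same caveat applies, harmlessly, to your ``exponentially small'' extension of the $h_{1}-2th_{0}$ integral: the true tail is $\bigO(t^{-3})$, giving $\bigO(M^{-2})$, which is still negligible at the constant level.)
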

\begin{proof}
Combining Lemmas \ref{lemma:S2kp1p}, \ref{lemma:S2kp3p} and \ref{lemma:S2kp2p} yields
\begin{align}\label{lol32}
& S_{2} = \widehat{C}_{1}^{(\epsilon)} + \widehat{C}_{2}'^{(M)} \sqrt{n} \ln n + \widehat{C}_{2}^{(M)} \sqrt{n} + \widehat{C}_{3}'^{(n,M)} \ln n + \widehat{C}_{3}^{(n,\epsilon,M)} + \bigO \bigg( \frac{M^{3}}{\sqrt{n}}\ln n + \frac{\sqrt{n}}{M^{5}} \bigg),
\end{align}
as $n \to +\infty$, where $\widehat{C}_{1}^{(\epsilon)}$ is given by \eqref{def of C1 hat}. After short (but remarkable) simplifications, we obtain $\widehat{C}'^{(M)}_{2} = \widehat{C}_{3}'^{(\epsilon,M)} = 0$. The quantity $\widehat{C}_{2}^{(M)}$ is given by
\begin{align*}
\widehat{C}_{2}^{(M)} = br^{2b}M\bigg( 2a \ln \bigg( \frac{2b}{r M} \bigg) + 2a -u \bigg) + \frac{a(a-1)b}{M} - \frac{ab (a-1)(2a-3) }{6r^{2b}M^{3}} + br^{2b}\int_{-M}^{M}h_{0}(t)dt.
\end{align*}
Using the definition \eqref{def of h0} of $h_{0}$, we verify that
\begin{align}\label{asymp of h0}
h_{0}(t) =  a \ln \Big( \frac{r|t|}{2b} \Big)+ u\chi_{(0,\infty)}(t) + \frac{a(a-1)}{2r^{2b}t^{2}} - \frac{a(a-1)(2a-3)}{4r^{4b}t^{4}} + \bigO(t^{-6}), \quad \mbox{as } t \to \pm \infty,
\end{align}
from which we conclude
\begin{align*}
\widehat{C}_{2}^{(M)} \sqrt{n} = \widehat{C}_{2} \sqrt{n} + \bigO\bigg( \frac{\sqrt{n}}{M^{5}} \bigg), \qquad \mbox{as } n \to + \infty,
\end{align*}
where $\widehat{C}_{2}$ is given by \eqref{def of C2 hat}. The quantity $\widehat{C}_{3}^{(n,\epsilon,M)}$ in \eqref{lol32} is given by
\begin{align*}
\widehat{C}_{3}^{(n,\epsilon,M)} = \bigg( \frac{1}{2}-\theta_{-}^{(n,M)} \bigg) \widehat{C}_{3,1}^{(M)} + \bigg( \frac{1}{2}-\theta_{+}^{(n,M)} \bigg) \widehat{C}_{3,2}^{(M)} + \widehat{C}_{3,3}^{(n,\epsilon)} + \widehat{C}_{3,4}^{(M)},  
\end{align*}
where
\begin{align*}
& \widehat{C}_{3,1}^{(M)} := h_{0}(M) - u - a \ln \bigg( \frac{rM}{2b} \bigg), \qquad \widehat{C}_{3,2}^{(M)} := h_{0}(-M)  - a \ln \bigg( \frac{rM}{2b} \bigg), \\
& \widehat{C}_{3,3}^{(n,\epsilon)} := \frac{1+2\alpha - 2\theta_{-}^{(n,\epsilon)}}{2}\bigg( u + a \ln \Big( r-r(1+\epsilon)^{-\frac{1}{2b}} \Big) \bigg)  + \frac{1-2\alpha-2\theta_{+}^{(n,\epsilon)}}{2}a \ln \Big( r(1-\epsilon)^{-\frac{1}{2b}}-r \Big) \\
& -\alpha u + \frac{a}{4}\bigg\{ \frac{1-a}{1-(1+\epsilon)^{-\frac{1}{2b}}} + \frac{1-a}{(1-\epsilon)^{-\frac{1}{2b}}-1} + (a-2b+4\alpha) \ln \bigg( \frac{(1-\epsilon)^{-\frac{1}{2b}}-1}{1-(1+\epsilon)^{-\frac{1}{2b}}} \bigg) \bigg\}, \\
& \widehat{C}_{3,4}^{(M)} = br^{2b} \bigg\{ u M^{2} + \int_{-M}^{M} (h_{1}(t)-2th_{0}(t))dt \bigg\}.
\end{align*}
It readily follows from \eqref{asymp of h0} that
\begin{align*}
\widehat{C}_{3,1}^{(M)} = \bigO\bigg( \frac{1}{M^{2}} \bigg), \qquad \widehat{C}_{3,2}^{(M)} = \bigO\bigg( \frac{1}{M^{2}} \bigg), \qquad \mbox{as } n \to + \infty.
\end{align*}
We also verify from \eqref{def of h0}, \eqref{def of h1}, \eqref{asymp of h0}, and 
\begin{equation*}
\frac{p_{1,a}(x)}{p_{0,a}(x)}=-\frac{a}{2}(1+2b)x-\frac{a}{2}\frac{a+2b(1-2a)}{x}+O(x^{-3}), \qquad \mbox{as } x \to \pm\infty 
\end{equation*}
that
\begin{align*}
h_{1}(t)-2th_{0}(t) = - \frac{a}{4b} \bigg( 1+2b+8b\ln \bigg( \frac{r|t|}{2b} \bigg) \bigg) t   -2ut \chi_{(0,\infty)}(t) + \frac{2ab-a^{2}}{4br^{2b}t} + \bigO(t^{-3})
\end{align*}
as $t \to \pm \infty$. We conclude that
\begin{align*}
\widehat{C}_{3,4}^{(M)} & = br^{2b}  \int_{-\infty}^{+\infty} \bigg\{ h_{1}(t)-2\,th_{0}(t) +  \frac{a}{4b} \bigg( 1+2b+8b\ln \bigg( \frac{r|t|}{2b} \bigg) \bigg) t + 2u t \chi_{(0,\infty)} (t) - \frac{(2ab-a^{2})t}{4br^{2b}(1+t^{2})}  \bigg\} \,dt  \\
& + \bigO(M^{-2}), \qquad \mbox{as } n \to + \infty,
\end{align*}
and the claim follows.
\end{proof}
\section{Proof of Theorem \ref{thm:main thm}}\label{section:proof thm}

By combining \eqref{log Dn as a sum of sums} with Lemmas \ref{lemma: S0}, \ref{lemma: S3}, \ref{lemma: S1} and \ref{lemma: asymp of S2k final}, we obtain
\begin{equation}
\ln \mathcal{E}_{n} = S_{0}+S_{1}+S_{2}+S_{3} =	C_{1} n + C_{2} \sqrt{n} + C_{3} +  \bigO \bigg( \frac{M^{3}}{\sqrt{n}}\ln n + \frac{\sqrt{n}}{M^{5}} + n^{-\frac{1}{2b}} \bigg),
\end{equation}
as $n \to +\infty$. Here we obtain the constants $C_{1}$, $C_{2}$ and $C_{3}$ of \eqref{asymp in main thm} after a long computation using \eqref{function F}--\eqref{function H}, a change of variables, and simplifying. Since $M = n^{\frac{1}{8}}(\ln n)^{-\frac{1}{8}}$, the error term is $\bigO\Big( n^{-\frac{1}{2b}} + (\ln n)^{\frac{5}{8}}n^{-\frac{1}{8}} \Big)$, which finishes the proof of Theorem \ref{thm:main thm}.

\appendix
\section{Uniform asymptotics of the incomplete gamma function}\label{section:uniform asymp gamma}
In this section, we collect some known asymptotic formulas for $\gamma(\tilde{a},z)$ that are useful for us. 
\begin{lemma}\label{lemma:various regime of gamma}(Taken from \cite[formula 8.11.2]{NIST}).
Let $\tilde{a}>0$ be fixed. As $z \to +\infty$,
\begin{align*}
\gamma(\tilde{a},z) = \Gamma(\tilde{a}) + \bigO(e^{-\frac{z}{2}}).
\end{align*}
\end{lemma}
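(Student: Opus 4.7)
The statement is essentially a textbook estimate on the tail of the incomplete gamma function, so the plan is to give a short, self-contained derivation rather than simply invoke \cite[8.11.2]{NIST}.

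The starting point is the trivial identity
\begin{equation*}
\Gamma(\tilde{a}) - \gamma(\tilde{a},z) \;=\; \int_{z}^{+\infty} t^{\tilde{a}-1}e^{-t}\,dt,
\end{equation*}
which reduces the claim to showing that this tail integral is $\bigO(e^{-z/2})$ as $z\to+\infty$. The natural way to do this is to split the exponential factor as $e^{-t}=e^{-t/2}\cdot e^{-t/2}$ and absorb the polynomial $t^{\tilde{a}-1}$ into one of the halves.

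Concretely, since $\tilde{a}>0$ is fixed, the function $t\mapsto t^{\tilde{a}-1}e^{-t/2}$ tends to $0$ as $t\to+\infty$, so there exists a constant $C=C(\tilde{a})>0$ (independent of $z$) such that $t^{\tilde{a}-1}e^{-t/2}\leq C$ for all $t\geq 1$. Choosing $z$ large enough that $z\geq 1$, we get
\begin{equation*}
\int_{z}^{+\infty} t^{\tilde{a}-1}e^{-t}\,dt \;=\; \int_{z}^{+\infty} \bigl(t^{\tilde{a}-1}e^{-t/2}\bigr) e^{-t/2}\,dt \;\leq\; C\int_{z}^{+\infty} e^{-t/2}\,dt \;=\; 2C\,e^{-z/2},
\end{equation*}
which yields the claimed estimate.

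There is no real obstacle here; the only thing to be slightly careful about is making the bound on $t^{\tilde{a}-1}e^{-t/2}$ independent of $z$, which is immediate once $\tilde{a}$ is fixed. If a fully self-contained write-up is desired one can make the constant $C$ explicit (e.g.\ by maximizing $t^{\tilde{a}-1}e^{-t/2}$ at $t=2(\tilde{a}-1)$ when $\tilde{a}>1$, and using $t^{\tilde{a}-1}\leq 1$ for $t\geq 1$ when $\tilde{a}\in(0,1]$), but since the lemma only needs the $\bigO$-statement this level of detail is unnecessary.
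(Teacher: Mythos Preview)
Your proof is correct. The paper does not actually supply a proof of this lemma; it simply cites \cite[formula 8.11.2]{NIST} and states the result, so your self-contained argument via the tail integral and the splitting $e^{-t}=e^{-t/2}\cdot e^{-t/2}$ is a welcome addition rather than a departure from any approach in the paper.
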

\begin{lemma}\label{lemma: uniform}(Taken from \cite[Section 11.2.4]{Temme}).
For $\tilde{a}>0$ and $z>0$, we have
\begin{align*}
& \frac{\gamma(\tilde{a},z)}{\Gamma(\tilde{a})} = \frac{1}{2}\mathrm{erfc}(-\eta \sqrt{\tilde{a}/2}) - R_{\tilde{a}}(\eta), \qquad R_{\tilde{a}}(\eta) = \frac{e^{-\frac{1}{2}\tilde{a} \eta^{2}}}{2\pi i}\int_{-\infty}^{\infty}e^{-\frac{1}{2}\tilde{a} u^{2}}g(u)du,
\end{align*}
where $\mathrm{erfc}$ is defined in \eqref{def of erfc}, $g(u) = \frac{dt}{du}\frac{1}{\lambda-t}+\frac{1}{u+i \eta}$,
\begin{align}\label{lol8}
& \lambda = \frac{z}{\tilde{a}}, \quad \eta = (\lambda-1)\sqrt{\frac{2 (\lambda-1-\ln \lambda)}{(\lambda-1)^{2}}}, \quad u = -i(t-1)\sqrt{\frac{2(t-1-\ln t)}{(t-1)^{2}}},
\end{align}
where $\mathrm{sign} (\eta) = \mathrm{sign}(\lambda-1)$, and $\mathrm{sign}(u) = \mathrm{sign}(\im t)$ with $t \in \mathcal{L}:=\{\frac{\theta}{\sin \theta} e^{i\theta}: -\pi < \theta < \pi\}$ and $u \in \mathbb{R}$ (in particular $u = -i(t-1)+\bigO((t-1)^{2})$ as $t \to 1$). Furthermore, 
\begin{align}\label{asymp of Ra}
& R_{\tilde{a}}(\eta) \sim \frac{e^{-\frac{1}{2}\tilde{a} \eta^{2}}}{\sqrt{2\pi \tilde{a}}}\sum_{j=0}^{+\infty} \frac{c_{j}(\eta)}{\tilde{a}^{j}} \qquad \mbox{as } \tilde{a} \to + \infty
\end{align}
uniformly for $z \in [0,\infty)$, where all coefficients $c_{j}(\eta)$ are bounded functions of $\eta \in \mathbb{R}$ (i.e. bounded for $\lambda \in [0,\infty)$) and given by
\begin{align}\label{recursive def of the cj}
c_{0} = \frac{1}{\lambda-1} - \frac{1}{\eta}, \qquad c_{j} = \frac{1}{\eta} \frac{d}{d\eta}c_{j-1}(\eta) + \frac{\gamma_{j}}{\lambda-1}, \; j \geq 1,
\end{align}
where the $\gamma_{j}$ are the Stirling coefficients
\begin{align*}
\gamma_{j} = \frac{(-1)^{j}}{2^{j} \, j!} \bigg[ \frac{d^{2j}}{dx^{2j}} \bigg( \frac{1}{2}\frac{x^{2}}{x-\ln(1+x)} \bigg)^{j+\frac{1}{2}} \bigg]_{x=0}.
\end{align*}
The first few $c_{j}$ are $c_{0}(\eta) = \frac{1}{\lambda-1} - \frac{1}{\eta}$ and
\begin{align*}
& c_{1}(\eta) = \frac{1}{\eta^{3}}-\frac{1}{(\lambda-1)^{3}}-\frac{1}{(\lambda-1)^{2}}-\frac{1}{12(\lambda-1)}, \\
& c_{2}(\eta) = -\frac{3}{\eta^{5}} + \frac{3}{(\lambda-1)^{5}} + \frac{5}{(\lambda-1)^{4}} + \frac{25}{12(\lambda-1)^{3}} + \frac{1}{12(\lambda-1)^{2}} + \frac{1}{288(\lambda-1)}, \\
& c_{3}(\eta) \hspace{-0.035cm} = \hspace{-0.035cm} \frac{15}{\eta^{7}} \hspace{-0.04cm} - \hspace{-0.04cm} \frac{15}{(\lambda-1)^{7}} \hspace{-0.04cm} - \hspace{-0.04cm} \frac{35}{(\lambda-1)^{6}} \hspace{-0.04cm} - \hspace{-0.04cm} \frac{105}{4(\lambda-1)^{5}} \hspace{-0.04cm} - \hspace{-0.04cm} \frac{77}{12(\lambda-1)^{4}} \hspace{-0.04cm} - \hspace{-0.04cm} \frac{49}{288(\lambda-1)^{3}} \hspace{-0.04cm} - \hspace{-0.04cm} \frac{1}{288(\lambda-1)^{2}} \hspace{-0.04cm} + \hspace{-0.04cm} \frac{139}{51840(\lambda-1)}.
\end{align*}
In particular, the following hold:
\item[(i)] Let $\delta>0$ be fixed, and let $z=\lambda \tilde{a}$. As $\tilde{a} \to +\infty$, uniformly for $\lambda \geq 1+\delta$,
\begin{align*}
\gamma(\tilde{a},z) = \Gamma(\tilde{a})\big(1 + \bigO(e^{-\frac{\tilde{a} \eta^{2}}{2}})\big).
\end{align*}
\item[(ii)] Let $z=\lambda \tilde{a}$. As $\tilde{a} \to +\infty$, uniformly for $\lambda$ in compact subsets of $(0,1)$,
\begin{align*}
\gamma(\tilde{a},z) = \Gamma(\tilde{a})\bigO(e^{-\frac{\tilde{a} \eta^{2}}{2}}).
\end{align*}
\end{lemma}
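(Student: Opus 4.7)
The strategy is Temme's uniform saddle-point analysis of the incomplete gamma function. Starting from $\gamma(\tilde a,z)=\int_0^z t^{\tilde a-1}e^{-t}\,dt$ and rescaling $t=\tilde a\tau$, we rewrite
$$\frac{\gamma(\tilde a,z)}{\Gamma(\tilde a)}=\frac{\tilde a^{\tilde a}e^{-\tilde a}}{\Gamma(\tilde a)}\int_0^\lambda e^{-\tilde a\phi(\tau)}\frac{d\tau}{\tau},\qquad \phi(\tau):=\tau-1-\ln\tau,$$
where $\phi\ge 0$ attains its unique minimum at $\tau=1$. The key step is the nonlinear substitution $\phi(\tau)=\tfrac12 u^2$, with the sign convention fixed so that $u$ is real and analytic when $\tau$ ranges over the Hankel-type contour $\mathcal L$ of the statement. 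Under this map the saddle $\tau=1$ corresponds to $u=0$, the cut-off $\tau=\lambda$ to $u=\eta$, and the full contour $\mathcal L$ to $u\in\mathbb R$.

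Next I would deform the original path $[0,\lambda]$, via Cauchy's theorem and controlled estimates on connecting arcs, onto the portion of $\mathcal L$ parametrised by $u\in(-\infty,\eta]$, obtaining
$$\frac{\gamma(\tilde a,z)}{\Gamma(\tilde a)}=\frac{\tilde a^{\tilde a}e^{-\tilde a}}{\Gamma(\tilde a)}\int_{-\infty}^\eta e^{-\tilde a u^2/2}h(u)\,du,\qquad h(u):=\frac{1}{\tau(u)}\frac{d\tau}{du},\quad h(0)=1.$$
Splitting $h(u)=1+(h(u)-1)$, the constant piece integrates exactly to $\sqrt{2\pi/\tilde a}\cdot\tfrac12\mathrm{erfc}(-\eta\sqrt{\tilde a/2})$, which combines with the Stirling prefactor $\tilde a^{\tilde a}e^{-\tilde a}/\Gamma(\tilde a)\sim\sqrt{\tilde a/(2\pi)}$ to yield the explicit $\tfrac12\mathrm{erfc}(-\eta\sqrt{\tilde a/2})$ term. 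The remainder $R_{\tilde a}(\eta)$ is the contribution of $h(u)-1$; recasting it through Cauchy's formula on a strip containing $\mathbb R$ produces the meromorphic kernel $g(u)=\frac{dt}{du}\frac{1}{\lambda-t}+\frac{1}{u+i\eta}$, where the auxiliary pole $\frac{1}{u+i\eta}$ is introduced precisely to cancel the apparent singularity of $\frac{dt/du}{\lambda-t}$ at $u=-i\eta$, leaving $g$ analytic in a horizontal strip.

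The asymptotic expansion \eqref{asymp of Ra} then follows from Watson's lemma: expanding $g(u)$ in powers of $u$ and computing Gaussian moments produces a series in $1/\tilde a$, and after reorganising with the full Stirling expansion of $\Gamma(\tilde a)$ one obtains the coefficients $c_j(\eta)$. The recursion \eqref{recursive def of the cj} is derived by integration by parts using $(e^{-\tilde a u^2/2})'=-\tilde a u\,e^{-\tilde a u^2/2}$, which converts $c_{j-1}$ into $\eta^{-1}\partial_\eta c_{j-1}$, while the Stirling expansion of the prefactor generates the additional term $\gamma_j/(\lambda-1)$; the explicit formulas for $c_0,\ldots,c_3$ come from a direct Laurent expansion of $g$ near $u=0$. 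Parts (i) and (ii) finally follow by combining the standard estimates $\mathrm{erfc}(x)=O(e^{-x^2}/x)$ and $\mathrm{erfc}(-x)=2-O(e^{-x^2}/x)$ as $x\to+\infty$ with the uniform bound $R_{\tilde a}(\eta)=O(e^{-\tilde a\eta^2/2}/\sqrt{\tilde a})$: in (i) $\eta$ is bounded below by a positive constant, in (ii) $\eta$ is bounded above by a negative constant.

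The principal technical obstacle will be controlling every ingredient uniformly across the transition $\lambda\approx 1$ (equivalently $\eta\approx 0$). The substitution $\phi(\tau)=\tfrac12 u^2$ must be verified to define a biholomorphism in a full complex neighbourhood of $\tau=1$ so that $h$ and $g$ are analytic through the saddle; the contour deformation must avoid the singularity of $1/\tau$ at the origin uniformly in small $\lambda$; and the boundedness of each $c_j(\eta)$ on all of $\mathbb R$, which is critical for the uniformity claim, requires verifying explicitly that the $(\lambda-1)^{-k}$ and $\eta^{-k}$ poles appearing in the displayed formulas cancel order by order as $\eta\to 0$. Establishing this regularity at the saddle, together with tail estimates on the oscillating part of the contour integral of $g$, is the delicate heart of Temme's argument.
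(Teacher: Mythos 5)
The paper itself contains no proof of Lemma \ref{lemma: uniform}: it is imported verbatim from \cite[Section 11.2.4]{Temme}, so there is no internal argument to compare yours against, and what you have written is an attempted reconstruction of Temme's derivation. Your sketch does capture the right circle of ideas (the phase $\phi(\tau)=\tau-1-\ln\tau$, an error-function term governing the transition at $\lambda=1$, Watson's lemma, and the Stirling series behind the recursion \eqref{recursive def of the cj}), but two of its central steps are genuinely flawed.

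First, the opening display of the lemma is an \emph{exact} identity for all $\tilde{a},z>0$, with $R_{\tilde{a}}(\eta)$ a specific integral of $g(u)=\frac{dt}{du}\frac{1}{\lambda-t}+\frac{1}{u+i\eta}$ over all of $\mathbb{R}$. Your construction produces the $\tfrac12\mathrm{erfc}(-\eta\sqrt{\tilde{a}/2})$ term only after replacing $\tilde{a}^{\tilde{a}}e^{-\tilde{a}}/\Gamma(\tilde{a})$ by $\sqrt{\tilde{a}/(2\pi)}$, i.e.\ only up to a factor $1/\Gamma^{*}(\tilde{a})=1+\bigO(1/\tilde{a})$, and your remainder is $\frac{\tilde{a}^{\tilde{a}}e^{-\tilde{a}}}{\Gamma(\tilde{a})}\int_{-\infty}^{\eta}e^{-\tilde{a}u^{2}/2}\,(h(u)-1)\,du$, an integral of a different function over a half-line. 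The assertion that ``recasting it through Cauchy's formula on a strip'' converts this into the stated $R_{\tilde{a}}(\eta)$ is precisely the nontrivial content of Temme's argument and is not justified; Temme instead starts from a loop-integral representation of the (in)complete gamma ratio with Cauchy kernel $1/(\lambda-t)$ over $\mathcal{L}$ (obtained from the Hankel integral for $1/\Gamma(\tilde{a})$), and the erfc term then comes out \emph{exactly} by splitting off the simple pole $1/(u+i\eta)$ under the Gaussian, leaving $g$ analytic in a strip. Second, you conflate the two substitutions: under $\phi(\tau)=u^{2}/2$ with $u$ real, the preimage of $\mathbb{R}$ is the positive real $\tau$-axis (where $\phi\geq0$), not $\mathcal{L}$; on $\mathcal{L}$ one has $\phi(t)=-u^{2}/2\leq0$ (hence the factor $-i$ in \eqref{lol8}), so $e^{-\tilde{a}\phi(t)}$ grows along $\mathcal{L}$ and ``deforming $[0,\lambda]$ onto the portion of $\mathcal{L}$ with $u\leq\eta$'' is neither legitimate (the connecting arcs and the integral along $\mathcal{L}$ cannot be controlled) nor needed, since the real change of variables already yields your half-line formula. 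As a result, your $h(u)=\tau'(u)/\tau(u)$ and the lemma's kernel $\frac{dt}{du}\frac{1}{\lambda-t}$ live on different curves and are never actually related in your argument. The remaining points---obtaining \eqref{recursive def of the cj} from the exact $\eta$-derivative of the identity combined with the Stirling series, verifying that the $\eta^{-k}$ and $(\lambda-1)^{-k}$ singularities cancel so each $c_{j}$ is bounded near $\eta=0$, and deducing (i)--(ii) once $\eta$ is bounded away from $0$---are fine in outline.
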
 

\paragraph{Acknowledgements.} SB acknowledges support from Samsung Science and Technology Foundation, Grant SSTF-BA1401-51, and the National Research Foundation of Korea, Grant NRF-2019R1A5A1028324. CC acknowledges support from the Novo Nordisk Fonden Project, Grant 0064428, the Swedish Research Council, Grant No. 2021-04626, and the Ruth and Nils-Erik Stenb\"ack Foundation. 

\footnotesize


\begin{thebibliography}{99}
\bibitem{AkemannSungsoo} G. Akemann, S.-S. Byun and M. Ebke, Universality of the number variance: planar complex and symplectic ensembles with radially symmetric potentials, preprint.


\bibitem{AKS2018} Y. Ameur, N-G. Kang and S-M. Seo, The random normal matrix model: insertion of a point charge, \textit{Potential Analysis} (2021), \texttt{https://doi.org/10.1007/s11118-021-09942-z}.

\bibitem{AW1984} R. Askey and J. Wimp, Associated Laguerre and Hermite polynomials, \textit{Proc. Roy. Soc. Edinburgh Sect. A} \textbf{96} (1-2) (1984), 15--37.

\bibitem{BBLM2015} F. Balogh, M. Bertola, S.-Y. Lee and K.T.-R. McLaughlin, Strong asymptotics of the orthogonal polynomials with respect to a measure supported on the plane, \textit{Comm. Pure Appl. Math.} \textbf{68} (2015), no. 1, 112--172. 

\bibitem{BGM17} F. Balogh, T. Grava and D. Merzi, Orthogonal polynomials for a class of measures with discrete rotational symmetries in the complex plane, \textit{Constr. Approx.} \textbf{46} (2017), no. 1, 109--169.

\bibitem{Basor} E. Basor, Asymptotic formulas for Toeplitz determinants, {\em Trans.
Amer. Math. Soc.} {\bf 239}  (1978), 33--65.


\bibitem{BasorEhrhardt1} E. Basor and T. Ehrhardt, Asymptotic formulas for determinants of a special class of Toeplitz + Hankel matrices. In \textit{Large truncated Toeplitz matrices, Toeplitz operators, and related topics}, \textit{Oper. Theory Adv. Appl.}, \textbf{259}, Birkhäuser/Springer, Cham, 2017. 

\bibitem{BasMor} E. Basor and K.E. Morrison, The Fisher-Hartwig conjecture and Toeplitz eigenvalues, \textit{Linear Algebra Appl.} \textbf{202} (1994), 129--142.

\bibitem{BasTra} E. L. Basor and C. A. Tracy. The Fisher-Hartwig conjecture and generalizations. {\em Phys. A} {\bf 177} (1991), 167–173.

\bibitem{BerWebbWong} N. Berestycki, C. Webb and M.D. Wong, Random Hermitian Matrices and Gaussian Multiplicative Chaos, \textit{Probab. Theory Related Fields}, \textbf{172} (2018), 103--189.

\bibitem{BEG18} M. Bertola, J.G. Elias Rebelo and T. Grava, Painlev\'{e} IV critical asymptotics for orthogonal polynomials in the complex plane, \textit{SIGMA Symmetry Integrability Geom. Methods Appl.} \textbf{14} (2018), Paper No. 091, 34 pp.

\bibitem{BS1985} A. B\"ottcher and B. Silbermann, Toeplitz matrices and determinants with Fisher-Hartwig symbols, \textit{J. Funct. Anal.} \textbf{63} (1985), no. 2, 178--214. 


\bibitem{BreHik} E. Brézin and S. Hikami, Characteristic polynomials of random matrices, \textit{Comm. Math. Phys.} \textbf{214} (2000), no. 1, 111--135.

\bibitem{BGL2022} A.I. Bufetov, D. Garc\'{i}a-Zelada, and Z. Lin, Fluctuations of the process of moduli for the Ginibre and hyperbolic ensembles, arXiv:2202.11687.

\bibitem{CE2020} L. Charles and B. Estienne, Entanglement entropy and Berezin-Toeplitz operators, \textit{Comm. Math. Phys.} \textbf{376} (2020), no. 1, 521--554.

\bibitem{Charlier} C. Charlier, Asymptotics of Hankel determinants with a one-cut regular potential and Fisher-Hartwig singularities, \textit{Int. Math. Res. Not. IMRN} \textbf{2019} (2019), 7515--7576.

\bibitem{Charlier 2d jumps} C. Charlier, Asymptotics of determinants with a rotation-invariant weight and discontinuities along circles, arXiv:2109.03660.

\bibitem{Charlier 2d gap} C. Charlier, Large gap asymptotics on annuli in the random normal matrix model, arXiv:2110.06908.

\bibitem{CharlierMB} C. Charlier, Asymptotics of Muttalib-Borodin determinants with Fisher-Hartwig singularities, arXiv:2104.11494.

\bibitem{CharlierGharakhloo} C. Charlier and R. Gharakhloo, Asymptotics of Hankel determinants with a Laguerre-type or Jacobi-type potential and Fisher-Hartwig singularities, \textit{Adv. Math.} \textbf{383} (2021), 107672, 69 pp.

\bibitem{CFWW2021} C. Charlier, B. Fahs, C. Webb and M.D. Wong, Asymptotics of Hankel determinants with a multi-cut regular potential and Fisher-Hartwig singularities, arXiv:2111.08395.

\bibitem{CLmerging} C. Charlier and J. Lenells, Exponential moments for disk counting statistics of random normal matrices in the critical regime, arXiv:2205.00721

\bibitem{CGMY2020} T. Claeys, G. Glesner, A. Minakov and M. Yang, Asymptotics for averages over classical orthogonal ensembles, \textit{Int. Math. Res. Not. IMRN} \textbf{2022} (2022), 7922--7966.

\bibitem{CK2015} T. Claeys and I. Krasovsky, Toeplitz determinants with merging singularities, \textit{Duke Math. J.} \textbf{164} (2015), no. 15, 2897--2987.

\bibitem{DXZ2022} D. Dai, S.-X. Xu and L. Zhang, On the deformed Pearcey determinant, \textit{Adv. Math.} \textbf{400} (2022), No. 108291. 

\bibitem{DXZ2022 bis} D. Dai, S.-X. Xu and L. Zhang, Gap probability for the hard edge Pearcey process, arXiv:2204.04625.

\bibitem{DeanoSimm} A. Dea\~{n}o and N. Simm, Characteristic polynomials of complex random matrices and Painlev\'{e} transcendents, \textit{Int. Math. Res. Not. IMRN} \textbf{2022} (2022), 210--264.

\bibitem{DIK} P. Deift, A. Its and I. Krasovky,
Asymptotics of Toeplitz, Hankel, and Toeplitz+Hankel determinants with Fisher-Hartwig singularities, \textit{Ann. Math.} \textbf{174} (2011), 1243--1299.

\bibitem{DIKreview} P. Deift, A. Its, and I. Krasovsky, Toeplitz matrices and Toeplitz determinants under the impetus of the Ising model: some history and some recent results.
\textit{Comm. Pure Appl. Math.} \textbf{66} (2013), no. 9, 1360--1438. 


\bibitem{Ehr2001} T. Ehrhardt, A status report on the asymptotic behavior of Toeplitz determinants with Fisher-Hartwig singularities. 
\textit{Oper. Theory Adv. Appl.} \textbf{124} (2001), 217--241.


\bibitem{ES2020} B. Estienne and J.-M. St\'{e}phan, Entanglement spectroscopy of chiral edge modes in the quantum Hall effect, \textit{Physical Review B} \textbf{101} (2020), no. 11, 115136.

\bibitem{Fahs} B. Fahs, Uniform asymptotics of Toeplitz determinants with Fisher-Hartwig singularities, \textit{Comm. Math. Phys.} \textbf{383} (2021), no. 2, 685--730.

\bibitem{FenzlLambert} M. Fenzl and G. Lambert, Precise deviations for disk counting statistics of invariant determinantal processes, \textit{Int. Math. Res. Not. IMRN} \textbf{2022} (2022), 7420--7494.

\bibitem{FisherHartwig} M.E. Fisher and R.E. Hartwig, Toeplitz determinants: Some applications, theorems, and conjectures, \textit{Advan. Chem. Phys.} \textbf{15} (1968), 333--353.

\bibitem{ForKea} J. Forkel and J.P. Keating, The classical compact groups and Gaussian multiplicative chaos, \textit{Nonlinearity} \textbf{34} (2021), no. 9, 6050--6119.

\bibitem{ForresterHoleProba} P.J. Forrester, Some statistical properties of the eigenvalues of complex random matrices, \textit{Phys. Lett. A} \textbf{169} (1992), no. 1-2, 21--24. 

\bibitem{FyoKea} Y.V. Fyodorov and J.P. Keating, Freezing transitions and extreme values: random matrix theory, and disordered landscapes, \textit{Philos. Trans. R. Soc. Lond. Ser. A} \textbf{372} (2014), no. 2007, 20120503, 32 pp.

\bibitem{Garoni} T.M. Garoni, On the asymptotics of some large Hankel determinants generated by Fisher-Hartwig symbols defined on the real line, \textit{J. Math. Phys.} \textbf{46} (2005), 19 pp.

\bibitem{Ginibre} J. Ginibre, Statistical ensembles of complex, quaternion, and real matrices, \textit{J. Mathematical Phys.} \textbf{6} (1965), 440--449. 


\bibitem{ItsKrasovsky} A. Its and I. Krasovsky, Hankel determinant and orthogonal polynomials for the Gaussian weight with a jump, \textit{Contemporary Mathematics} \textbf{458} (2008), 215--248.


\bibitem{KeaSna} J.P. Keating and N.C. Snaith, Random matrix theory and $\zeta(1/2+it)$, {\em Comm. Math. Phys.} {\bf 214} (2001), 57--89.

\bibitem{Krasovsky} I. Krasovsky, Correlations of the characteristic polynomials in the Gaussian unitary ensemble or a singular Hankel determinant, \textit{Duke Math J.} \textbf{139} (2007), 581--619. 

\bibitem{LMS2018} B. Lacroix-A-Chez-Toine, S.N. Majumdar and Gr\'{e}gory Schehr, Rotating trapped fermions in two dimensions and the complex Ginibre ensemble: Exact results for the entanglement entropy and number variance, \textit{Phys. Rev. A} \textbf{99} (2019), 021602.

\bibitem{L et al 2019} B. Lacroix-A-Chez-Toine, J.A.M. Garz\'{o}n, C.S.H. Calva, I.P. Castillo, A. Kundu, S.N. Majumdar, and G. Schehr, Intermediate deviation regime for the full eigenvalue statistics in the complex Ginibre ensemble, \textit{Phys. Rev. E} \textbf{100} (2019), 012137.

\bibitem{LeeRiser2016} S.-Y. Lee and R. Riser, Fine asymptotic behavior for eigenvalues of random normal matrices: ellipse case, \textit{J. Math. Phys.} \textbf{57} (2016), no. 2, 023302, 29 pp. 

\bibitem{LeeYang} S.-Y. Lee and M. Yang, Discontinuity in the asymptotic behavior of planar orthogonal polynomials under a perturbation of the Gaussian weight, \textit{Comm. Math. Phys.} \textbf{355} (2017), no. 1, 303--338. 

\bibitem{LeeYang2} S.-Y. Lee and M. Yang, Planar orthogonal polynomials as Type II multiple orthogonal polynomials, \textit{J. Phys. A} \textbf{52} (2019), no. 27, 275202, 14 pp. 

\bibitem{LeeYang3} S.-Y. Lee and M. Yang, Strong Asymptotics of Planar Orthogonal Polynomials: Gaussian Weight Perturbed by Finite Number of Point Charges, to appear in \textit{Comm. Pure Appl. Math.}, arXiv:2003.04401.

\bibitem{Lenard1} A. Lenard, Momentum Distribution in the Ground State of the One-Dimensional System of Impenetrable Bosons, \textit{J. Math. Phys.} \textbf{5} (1964), 930--943. 

\bibitem{Lenard} A. Lenard, Some remarks on large Toeplitz determinants, \textit{Pacific J. Math.} \textbf{42} (1972), 137--145. 


\bibitem{Mehta} M.L. Mehta, Random matrices. \textit{Pure and Applied Mathematics} (Amsterdam), Vol. 142, 3rd ed., Elsevier/Academic Press, Amsterdam, 2004.

\bibitem{NAKP2020} T. Nagao, G. Akemann, M. Kieburg and I. Parra, Families of two-dimensional Coulomb gases on an ellipse: correlation functions and universality, \textit{J. Phys. A} \textbf{53} (2020), no. 7, 075201, 36 pp.

\bibitem{NIST} F.W.J. Olver, A.B. Olde Daalhuis, D.W. Lozier, B.I. Schneider, R.F. Boisvert, C.W. Clark, B.R. Miller and B.V. Saunders, NIST Digital Library of Mathematical Functions. http://dlmf.nist.gov/, Release 1.0.13 of 2016-09-16.

\bibitem{SaTo} E. B. Saff and V. Totik, {\em Logarithmic Potentials with External Fields},  Grundlehren der Mathematischen Wissenschaften, Springer-Verlag, Berlin, 1997.

\bibitem{Seo} S.-M. Seo, Edge behavior of two-dimensional Coulomb gases near a hard wall, \textit{Ann. Henri Poincaré} \textbf{23} (2021), 2247--2275

\bibitem{SDMS2020} N.R. Smith, P. Le Doussal, S.N. Majumdar and G. Schehr, Counting statistics for non-interacting fermions in a $d$-dimensional potential, \textit{Phys. Rev. E} \textbf{103} (2021), L030105.

\bibitem{SDMS2021} N.R. Smith, P. Le Doussal, S.N. Majumdar and G. Schehr, Counting statistics for non-interacting fermions in a rotating trap, \textit{Phys. Rev. A} \textbf{105} (2022), 043315

\bibitem{Temme} N.M. Temme, \textit{Special functions: An introduction to the classical functions of mathematical physics}, John Wiley \& Sons (1996).

\bibitem{Van Assche} W. Van Assche, Orthogonal polynomials, associated
polynomials and functions of the second kind, \textit{J. Comput. Appl. Math.} \textbf{37} (1991), 237--249.

\bibitem{Webb} C. Webb, The characteristic polynomial of a random unitary matrix and Gaussian multiplicative chaos--the $L^{2}$-phase, \textit{Electron. J. Probab.} \textbf{20} (2015), no. 104, 21 pp. 

\bibitem{WebbWong} C. Webb and M.D. Wong, On the moments of the characteristic polynomial of a Ginibre random matrix, \textit{Proc. Lond. Math. Soc.} (3) \textbf{118} (2019), no. 5, 1017--1056. 

\bibitem{Widom2} H. Widom. Toeplitz determinants with singular generating functions. {\em Amer. J. Math.} {\bf 95} (1973), 333--383.

\bibitem{Wunsche} A. W\"{u}nsche, Generalized Hermite polynomials associated with functions of parabolic cylinder, \textit{Appl. Math. Comput.} \textbf{141} (2003) 197–213.

\bibitem{ZS2000} K. \.{Z}yczkowski and H.-J. Sommers, Truncations of random unitary matrices, \textit{J. Phys. A} \textbf{33} (2000), no. 10, 2045--2057.

\end{thebibliography}
\end{document}